\documentclass[
  reprint,            
  amsmath,amssymb,
  aps,
  superscriptaddress, 
  longbibliography
]{revtex4-2}

\usepackage[english]{babel}
\usepackage{textcomp}
\usepackage{url}
\usepackage{tcolorbox}

\usepackage[T1]{fontenc}
\usepackage{mathpazo}

\usepackage{amsfonts}
\usepackage{amsmath}
\usepackage{amssymb}
\usepackage{latexsym}
\usepackage{mathtools}

\usepackage{algorithm}
\usepackage{algpseudocode}

\usepackage{qcircuit}

\usepackage{graphicx}

\usepackage[caption=false]{subfig}

\usepackage[colorlinks=true,linkcolor=blue,citecolor=blue,urlcolor=blue]{hyperref}
\usepackage[nameinlink]{cleveref}

\hypersetup{pdfpagemode=UseNone}
\usepackage{amsthm}
\newtheorem{theorem}{Theorem}
\newtheorem{lemma}[theorem]{Lemma}
\newtheorem{prop}[theorem]{Proposition}

\theoremstyle{definition}

\newtheorem{problem}{Problem}

\DeclarePairedDelimiter\abs{\lvert}{\rvert}

\newcommand{\ket}[1]{| #1 \rangle}
\newcommand{\bra}[1]{\langle #1 |}

\def\A{\mathcal{A}}

\def\C{\mathcal{C}}
\def\D{\mathcal{D}}
\def\E{\mathcal{E}}

\def\P{\mathcal{P}}

\newcommand{\CC}{\mathbb{C}}
\newcommand{\la}{\langle}
\newcommand{\ra}{\rangle}
\newcommand{\be}{\begin{equation}}
\newcommand{\ee}{\end{equation}}

\usepackage{xcolor}

\begin{document}


\title{Observation of Improved Accuracy over Classical Sparse Ground State Solvers \\ using a Quantum Computer}


\author{William Kirby}
\email{william.kirby@ibm.com}
\affiliation{IBM Quantum, T. J. Watson Research Center, Yorktown Heights, NY 10598, USA}

\author{Bibek Pokharel}
\affiliation{IBM Quantum, T. J. Watson Research Center, Yorktown Heights, NY 10598, USA}

\author{Javier Robledo Moreno}
\affiliation{IBM Quantum, T. J. Watson Research Center, Yorktown Heights, NY 10598, USA}

\author{Kevin C. Smith}
\affiliation{IBM Quantum, IBM Research Cambridge, Cambridge, MA 02142, USA}

\author{Sergey Bravyi}
\affiliation{IBM Quantum, T. J. Watson Research Center, Yorktown Heights, NY 10598, USA}

\author{Abhinav Deshpande}
\affiliation{IBM Quantum, IBM Research Silicon Valley, San Jose, CA 95141, USA}

\author{Constantinos Evangelinos}
\affiliation{IBM Quantum, IBM Research Cambridge, Cambridge, MA 02142, USA}

\author{Bryce Fuller}
\affiliation{IBM Quantum, T. J. Watson Research Center, Yorktown Heights, NY 10598, USA}

\author{James R. Garrison}
\affiliation{IBM Quantum, T. J. Watson Research Center, Yorktown Heights, NY 10598, USA}

\author{Ben Jaderberg}
\affiliation{IBM Quantum, IBM Research Europe, Hursley, Winchester, SO21 2JN, United Kingdom}

\author{Caleb Johnson}
\affiliation{IBM Quantum, T. J. Watson Research Center, Yorktown Heights, NY 10598, USA}

\author{Petar Jurcevic}
\affiliation{IBM Quantum, T. J. Watson Research Center, Yorktown Heights, NY 10598, USA}

\author{Su-un Lee}
\affiliation{Pritzker School of Molecular Engineering, The University of Chicago, Chicago, IL 60637, USA}
\affiliation{IBM Quantum, T. J. Watson Research Center, Yorktown Heights, NY 10598, USA}

\author{Simon Martiel}
\affiliation{IBM Quantum, IBM France Lab, Saclay, France}

\author{Mario Motta}
\affiliation{IBM Quantum, T. J. Watson Research Center, Yorktown Heights, NY 10598, USA}

\author{Seetharami Seelam}
\affiliation{IBM Quantum, T. J. Watson Research Center, Yorktown Heights, NY 10598, USA}

\author{Oles Shtanko}
\affiliation{IBM Quantum, T. J. Watson Research Center, Yorktown Heights, NY 10598, USA}

\author{Kevin J. Sung}
\affiliation{IBM Quantum, T. J. Watson Research Center, Yorktown Heights, NY 10598, USA}

\author{Minh Tran}
\affiliation{IBM Quantum, T. J. Watson Research Center, Yorktown Heights, NY 10598, USA}

\author{Vinay Tripathi}
\affiliation{IBM Quantum, T. J. Watson Research Center, Yorktown Heights, NY 10598, USA}

\author{Kazuhiro Seki}
\affiliation{RIKEN Center for Quantum Computing, Wako, Saitama 351-0198, Japan}

\author{Kazuya Shinjo}
\affiliation{RIKEN Center for Emergent Matter Science, Wako, Saitama 351-0198, Japan}

\author{Han Xu}
\affiliation{RIKEN Center for Computational Science, Kobe, Hyogo 650-0047, Japan}

\author{Lukas Broers}
\affiliation{RIKEN Center for Computational Science, Kobe, Hyogo 650-0047, Japan}

\author{Tomonori Shirakawa}
\affiliation{RIKEN Center for Computational Science, Kobe, Hyogo 650-0047, Japan}
\affiliation{RIKEN Center for Quantum Computing, Wako, Saitama 351-0198, Japan}
\affiliation{RIKEN Pioneering Research Institute, Wako, Saitama 351-0198, Japan}
\affiliation{RIKEN Center for Interdisciplinary Theoretical and Mathematical Sciences, RIKEN, Wako 351-0198, Japan}

\author{Seiji Yunoki}
\affiliation{RIKEN Center for Quantum Computing, Wako, Saitama 351-0198, Japan}
\affiliation{RIKEN Center for Computational Science, Kobe, Hyogo 650-0047, Japan}
\affiliation{RIKEN Center for Emergent Matter Science, Wako, Saitama 351-0198, Japan}
\affiliation{RIKEN Pioneering Research Institute, Wako, Saitama 351-0198, Japan}

\author{Kunal Sharma}
\affiliation{IBM Quantum, T. J. Watson Research Center, Yorktown Heights, NY 10598, USA}

\author{Antonio Mezzacapo}
\affiliation{IBM Quantum, T. J. Watson Research Center, Yorktown Heights, NY 10598, USA}

\begin{abstract}

Demonstrating quantum advantage over classical algorithms for ground state energy problems is an outstanding open problem in quantum computation. We experimentally demonstrate that a quantum algorithm can outperform classical selected configuration interaction (SCI) methods, a key family of techniques used in computational chemistry and condensed matter physics. We construct a class of local Hamiltonian problems with sparse ground states, and show that SCI fails to find the ground state of a 49-qubit instance. We then show that sample-based Krylov quantum diagonalization, run on an IBM Heron R3 processor, succeeds at the same task. While the problem is solvable classically using iterative solvers designed to target our Hamiltonian construction, this work resolves the question of whether a sample-based quantum diagonalization algorithm can outperform standard SCI heuristics.
    
    
\end{abstract}

\maketitle


Quantum computing holds great promise for expanding the frontier of feasible computation across a range of challenging problems.
One such longstanding and promising problem is the diagonalization of high-dimensional Hamiltonian matrices representing energies of many-body quantum systems.
In particular, estimating low energies and low-energy states is a key subroutine in many calculations across physics and chemistry.
The problem can be formulated as the approximation of extremal eigenvalues and eigenvectors of Hamiltonians, whose Hermitian matrix representations have exponential dimensions in the size of the quantum system.

Sample-based algorithms have recently emerged as leading candidates to offer the first quantum advantages over classical diagonalization methods~\cite{kanno2023quantum,robledomoreno2024chemistry,sugisaki2025hamiltonian,mikkelsen2025quantum,yu2025quantum}.
These methods are based on using quantum circuits to sample configurations of the simulated system, then using a classical processor to project and diagonalize the Hamiltonian in the subspace spanned by the sampled configurations.

This approach offers two key advantages over other quantum diagonalization algorithms.
First, since the projection and diagonalization are implemented classically, the resulting approximate energies are \emph{variational} up to the precision of the classical diagonalization, meaning that they upper bound the true ground state energy of the full Hamiltonian effectively exactly.
This means that the energies can be compared to results of classical approximate diagonalization methods, and if the energies from the quantum algorithm are lower then they must also have lower error, even if the true ground state energy is unknown.
Second, the basis of sampled configurations forms a classical witness for the approximate ground state energy.
This means that results can be efficiently validated by a third party without a quantum computer.
These two features together mean that if we can achieve a quantum advantage using a sample-based diagonalization method, it will be both \emph{unconditional} and \emph{classically verifiable}.

\begin{figure*}[t]
    \centering
    \includegraphics[width=0.6\linewidth]{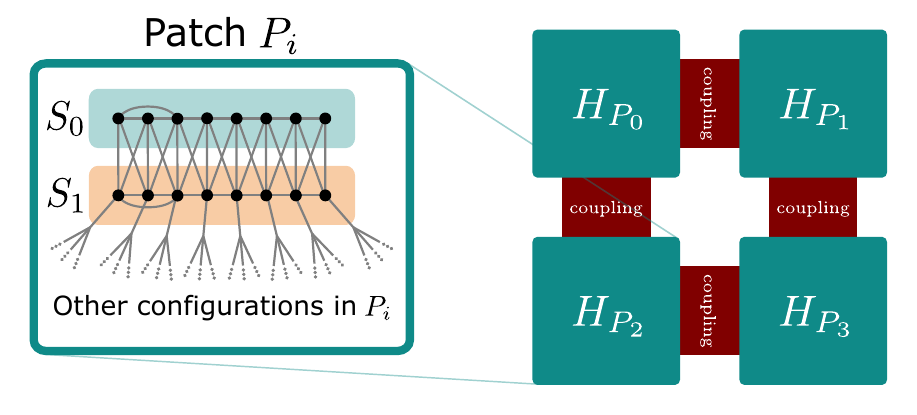}
    \caption{ {\bf Sketch of the Hamiltonian construction}. The qubit layout is partitioned into disjoint patches. Within each patch $P_i$, the set $S_0$ of support configurations is coupled to a set $S_1$ of first-order configurations, which are then coupled to the remaining configurations of the qubits in the patch. This defines the patch Hamiltonian $H_{P_i}$. The patches are themselves coupled by $H_\mathrm{coupling}$.}
    \label{fig:hamiltonian_construction_schematic}
\end{figure*}

In order for the quantum algorithm to achieve advantage, it needs to produce configurations that are of higher quality with respect to its classical counterparts. Well-known classical comparators are selected configuration interaction (SCI)~\cite{Huron_1973_CIPSI,tubman2016deterministic_SCI,holmes2016heat_SCI,holmes2016efficient_SCI,sharma2017semistochastic_SCI,zhang2025TrimCI_SCI} methods, which refer to a family of heuristics based on choosing a basis of configurations using a classical iteration.
The heuristic component of SCI is the selection rule by which configurations are kept or discarded during the iteration; these selection rules are typically based on perturbation theory~\cite{Huron_1973_CIPSI,tubman2016deterministic_SCI,holmes2016heat_SCI,holmes2016efficient_SCI,sharma2017semistochastic_SCI}.
For example, ``Configuration Interaction by Perturbatively Selecting Iteratively'' (CIPSI)~\cite{Huron_1973_CIPSI} employs the cost function
\begin{equation}
\label{eq:perturbative_cf}
    f(x) = \frac{|\langle x|H|\psi_\mathrm{prev}\rangle|}{\langle x|H|x\rangle-E_\mathrm{prev}},
\end{equation}
where $x$ is the new proposed configuration, and $|\psi_\mathrm{prev}\rangle$ and $E_\mathrm{prev}$ are the approximate ground state from the previous iteration and its energy.
This $f(x)$ is simply the first-order perturbative estimate of the coefficient of $x$ in the ground state.
Prior experimental demonstrations of sample-based quantum diagonalization (SQD), although reaching unprecedented problem sizes with quantum computers, have not been able to surpass SCI~\cite{robledomoreno2024chemistry, yu2025quantum}.
We will show that this is no longer the case for a class of quantum Hamiltonians.

Here, we design a family of synthetic Hamiltonians, then experimentally execute the sample-based Krylov quantum diagonalization (SKQD) algorithm~\cite{sugisaki2025hamiltonian,mikkelsen2025quantum,yu2025quantum} alongside classical SCI methods as part of a suite of classical heuristics.
In SKQD, the quantum circuits used for sampling generate a quantum Krylov space, i.e., they are an equally-spaced grid of time-evolved states starting from a single reference state.
This variant of SQD offers provable convergence under certain criteria similar to those required by quantum phase estimation~\cite{kitaev1995quantum}, with the additional requirement that the ground state be approximately sparse.
We engineer our family of Hamiltonians to satisfy these criteria, so that they form a testbed for benchmarking SKQD and classical sparse ground state solvers like SCI.
We then select a 49-qubit Hamiltonian belonging to the family as a test case for our experiments.

We execute the quantum experiments on an IBM Heron r3 quantum processor, which consists of 156 fixed-frequency transmon qubits with tunable couplers in a heavy-hexagon layout.
We compare the resulting energy accuracy to classical results on the same problem, taking advantage of the variationality property discussed above.

The output of SQD is a sparse vector approximation of the ground state, and the output energy is the energy of this sparse vector.
In the physically motivated problems studied so far with SQD~\cite{robledomoreno2024chemistry,yu2025quantum,Liepuoniute2025_SQD,shajan2025towards_SQD,shajan2025proteins_SQD,barison2025quantum_SQD,kaliakin2024accurate_SQD,piccinelli2025SQDRIFT_SQD,shirakawa2025closedloop_SQD,sriluckshmy2025ghostgutzwiller_SQD,smith2025quantumcentricsimulationhydrogenabstraction_SQD,Danilov2025_sqd,D5CP02202A_SQD}, the fact that SCI has also been able to achieve high accuracy indicates to a good approximation the ground state is obtained by expanding perturbatively around some core set of relatively few key configurations, such as the Hartree-Fock configuration in the electronic structure application.
However, all that can be said with certainty from a classical methods perspective is that, for a local Hamiltonian with an exactly sparse ground state, a spectral gap scaling as $1/\text{poly}(n)$ for system size $n$, and a sparse guiding state whose overlap with the ground state also scales as $1/\text{poly}(n)$, a truncated variant of the power method can approximate the ground state formally efficiently~\cite{yuan2013truncated}.
This is not a perturbative method, however, and can perform poorly relative to heuristics like SCI in practice.
There is no fundamental reason that every approximately sparse ground state must possess the perturbative character exhibited by the examples studied to date.

We set out to explore the possibilities of sparse ground states by explicitly designing a family of local Hamiltonians whose ground state sparsity is not due to a perturbative expansion.
In addition to providing test cases for comparing quantum and classical methods, these Hamiltonians are a starting point in the search for models in physics and chemistry that share similar properties and classical hardness.
The construction begins by selecting a qubit lattice to establish a notion of geometric locality.
We then partition this lattice into a small number $n_\text{patch}$ of patches, and build local Hamiltonians for each patch.

To construct these local ``patch Hamiltonians,'' we select a subset $S_0$ of configurations of the patch to form the support of the local ground state.
These \emph{support configurations} are selected to be connected by a chain of terms defined locally on the lattice.
We also select a second set $S_1$ of equally-many configurations that can be similarly ordered and connected to their counterparts in the first set.
Let the \emph{Hamiltonian graph} refer the graph in which nodes correspond to configurations, and edges to nonzero Hamiltonian matrix elements connecting those configurations.
We include in the Hamiltonian local terms such that the subgraph of the Hamiltonian graph on $S_0\cup S_1$ forms two adjacent chains with nearest-neighbor edges as well as next-nearest-neighbor edges that cross between the chains.
The $0$th and $2$nd configurations in each chain are also connected by a next-nearest-neighbor edge.
This graph is shown in \Cref{fig:hamiltonian_construction_schematic}.

This Hamiltonian graph allows us to choose interactions (i.e., edge weights, or equivalently Hamiltonian matrix elements) such that the lowest energy state supported on $S_0$ is the lowest energy state of the entire patch.
Ensuring that the lowest energy state of $S_0$ is an eigenstate of the patch Hamiltonian is accomplished by choosing the interactions connecting $S_0$ and $S_1$ such that they annihilate the lowest energy state of $S_0$.
Forcing it to be the lowest eigenstate of the patch is then a matter of adjusting the local diagonal (potential) terms.

The second key component of the construction is the choice interactions connecting the support configurations themselves.
These are selected such that the lowest energy state obtained by projecting and diagonalizing on any proper subset of $S_0$ is supported only on the first two configurations.
Only when all eight elements of $S_0$ are included does a level-crossing occur such that the full lowest energy state is supported on all eight.
This property breaks the iterative, perturbative approach to finding the ground state provided the zeroth element of $S_0$ is provided as the initial reference state.
For example, in CIPSI each intermediate ground state approximation during the iteration will be supported only on the first two support configurations, so after the third is found the numerator of the cost function \eqref{eq:perturbative_cf} will be zero for all further support configurations.

The final step in the construction is to couple the patches with additional terms in the Hamiltonian, so that the system is interacting.
These additional terms are chosen to be positive semidefinite, and to annihilate the support configurations $S_0$, which together imply that they do not change the ground state.
So in the end, we have an interacting Hamiltonian whose ground state is the tensor product of the ground states of the patches: hence its sparsity is $|S_0|^{n_\text{patch}}$.

In iterative classical solvers in which the iteration is first-order expansion of a configuration pool by the action of the Hamiltonian, like SCI, each iteration can only generate a step along the support configuration path of one patch at a time.
Since each path requires 6 steps to traverse (see \Cref{fig:hamiltonian_construction_schematic}), this sets the minimum number of steps required for any such iterative solver at $6n_\text{patch}$.

For our experimental demonstrations, classical and quantum, we provide three inputs to the solvers:
\begin{enumerate}
    \item A representation of the Hamiltonian instance $H$ in the Pauli basis.
    \item A single initial configuration $|x_0\rangle$.
    \item A circuit defining a single Trotter step under $H$, which defines a possible mapping of $H$ to a two-dimensional qubit lattice.
\end{enumerate}
The specific Hamiltonian instance we employ in our demonstrations is defined on a 49-qubit heavy-hexagon lattice, the native qubit connectivity lattice of IBM Heron quantum processors.
The layout is shown in \Cref{fig:v57_layout}.
The lattice is partitioned into $n_\text{patch}=3$ patches.
The support configurations $S_0$ and the first-order configurations $S_1$ correspond via an encoding to alternating qubits along a 16-qubit path defining each patch, so $|S_0|=8$ and hence the ground state is supported on $8^3=512$ configurations.
The one additional padding qubit is coupled to the remainder and fixed to either $|0\rangle$ or $|1\rangle$ in the ground state, depending on the (randomized) encoding.
Complete details are given in the Supplemental Material.

\begin{figure}[ht]
    \centering
    \includegraphics[width=\linewidth]{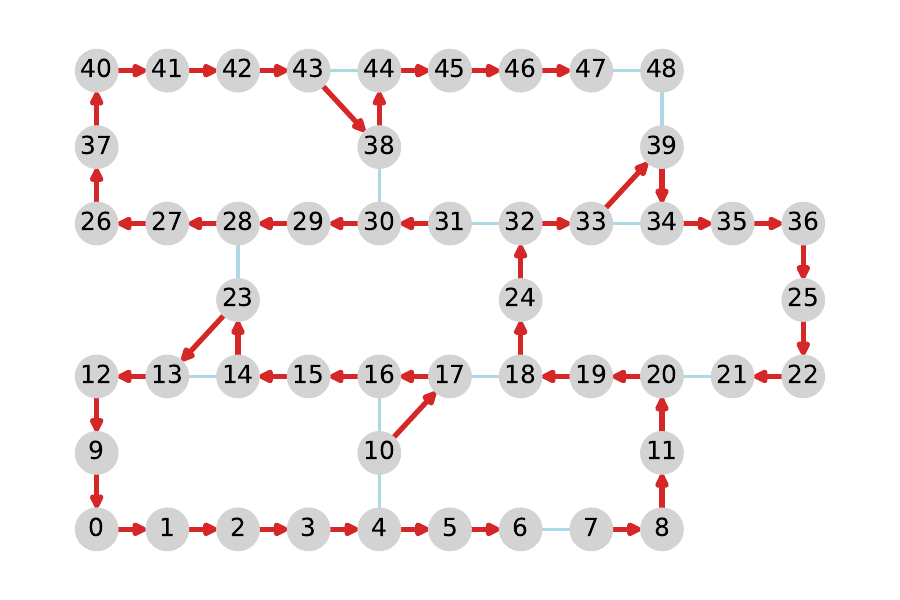}
    \caption{{\bf Patch layout for the problem Hamiltonian on a heavy-hex graph}. The paths defining each patch are shown by the red arrows. Patches are defined by paths with even-indexed qubits corresponding to the support configurations: in this figure, the directed edges point in the direction of increasing index. The edges in the physical qubit layout are shown in light blue wherever they do not overlap with the path edges. Qubit $48$ in the upper right corner is a padding qubit, to complete the sixth hex.}
    \label{fig:v57_layout}
\end{figure}

For our quantum solution, we use the SKQD algorithm as discussed above.
This requires sampling from time evolution circuits corresponding to the Hamiltonian of interest.
We use two main techniques to transpile these.

First, we employ Rustiq~\cite{martiel2025rustiq,debrugiere2024faster}, a software package to compile arbitrary linear combinations of Pauli operators into corresponding Trotter steps constructed from local Pauli rotation gates on a specified layout.
The resulting number of CZ gates (the native entangling gates on an IBM Heron processor) required for a single Trotter step is 324.
This limited us to only a few Trotter steps in practice without significant loss of coherence, so we augmented these circuits with approximate time evolutions using approximate quantum compilation (AQC).
AQC is a strategy for compressing quantum circuits into lower depth and gate count ans\"atze, when tensor network simulation of the circuits is feasible as in our case.
Our full hybrid strategy is to implement each AQC circuit, then append 0, 1, or 2 Trotter steps.
The motivation for this is that it essentially combines two different decays of signal: from the AQC we have a decay due to coherent error (we are preparing the ``wrong'' state), while from the Trotter approximations we have a decay that is well-modeled by Pauli noise channels, i.e., dominated by incoherent error.

We sample from AQC circuits corresponding to Krylov timesteps 1 through 20, each with up to 2 Trotter steps appended, although ultimately only up to 16 timesteps of AQC were required to reach the exact ground state.
In total, around 133 million shots were taken in QPU runtime of about 19 hours (with 97 million shots across AQC timesteps $1,...,16$).
Details of the circuits and shot allocation are given in the Supplemental Material.

Before diagonalizing the Hamiltonian within the corresponding basis, we employ a preliminary filter to reduce the diagonalization dimension.
This filter is motivated by the fact that any configurations in the pool that are not connected by the Hamiltonian to any others in the pool would be eigenstates of the projected Hamiltonian, so it is unnecessary to include them in the diagonalization.
We therefore remove all such configurations, which account for the vast majority of the configurations in the pool, leaving behind only $\sim10.3$ million.
After filtering, we project the Hamiltonian onto the remaining pool of $\sim10.3$ million configurations, then diagonalize the resulting matrix.

On the classical side, we apply a representative collection of off-the-shelf SCI methods, specifically CIPSI~\cite{Huron_1973_CIPSI}, ASCI~\cite{tubman2016deterministic_SCI}, HCI~\cite{holmes2016heat_SCI, sharma2017semistochastic_SCI, holmes2016efficient_SCI}, and TrimCI~\cite{zhang2025TrimCI_SCI}. We test these classical heuristics by sweeping their hyperparameters and selecting the most performant settings in accuracy, but none are able to find the exact ground state energy.

Additionally, we test three non-SCI iterative solvers: truncated power method~\cite{yuan2013truncated}, truncated Arnoldi's method, and diagonal ranking.
The latter two methods are novel (to our knowledge), and were inspired by vulnerabilities that we encountered in the Hamiltonian construction technique.
Finally, we applied the density matrix renormalization group (DMRG).
We similarly performed hyperparameter sweeps for all of these techniques; details of all algorithms and implementations are given in the Supplemental Material.

\begin{figure*}[ht]
    \centering
    \includegraphics[width=1\linewidth]{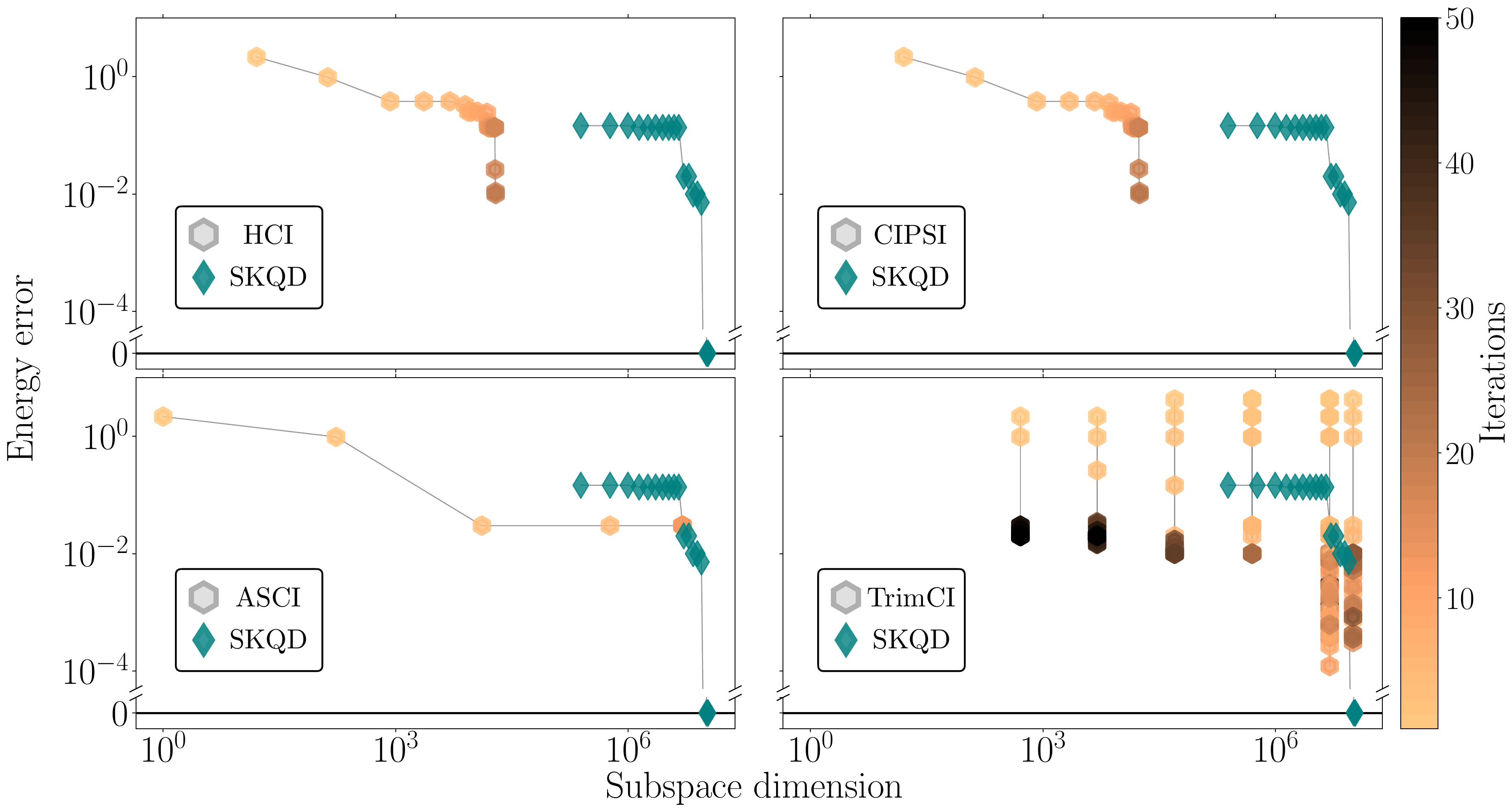}
    \caption{\textbf{Improved accuracy of quantum (SKQD) against off-the-shelf classical SCI methods.} Comparison of the ground state energy reached for a given subspace dimension for different SCI methods, and the SKQD experiment executed on quantum hardware. Different SCI methods are shown in different panels by the hexagonal markers. The different colors show the number of iterations. The results obtained with SKQD are shown by blue diamond markers in each panel. The sequence of SQKD points is with respect to Krylov dimension.}
    \label{fig:results_good}
\end{figure*}

\begin{figure*}[ht]
    \centering
    \includegraphics[width=1\linewidth]{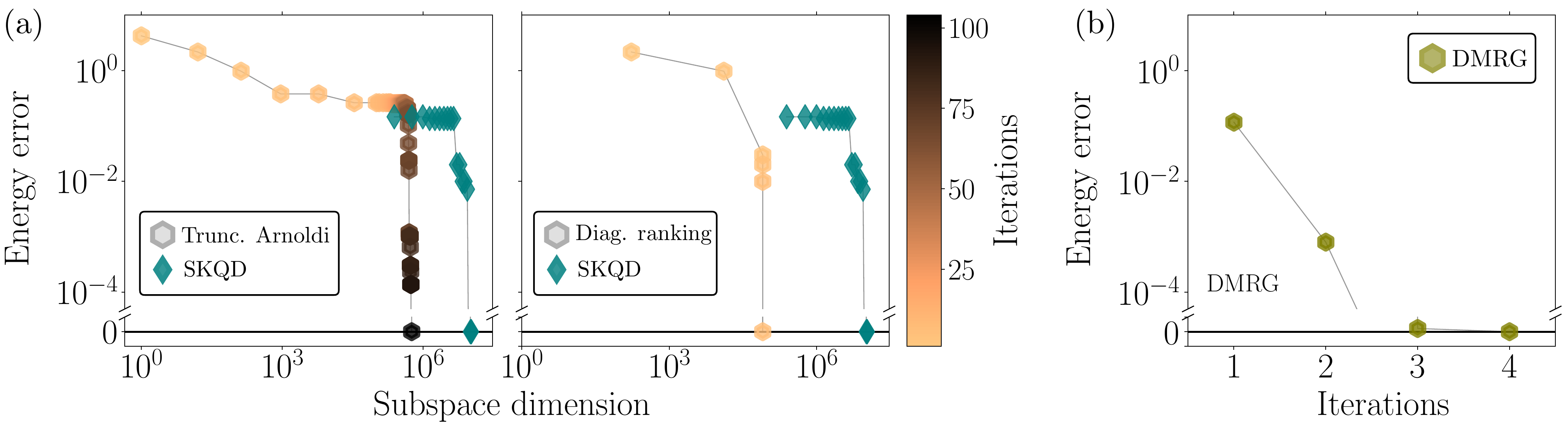}
    \caption{\textbf{Performance of specialized sparse iterative solvers and DMRG.} \textbf{(a)} compares the ground state energy reached for a given subspace dimension for the sparse iterative solvers introduced in this manuscript, as shown by the hexagonal markers. The results obtained with SKQD are shown by blue diamond markers in each panel. \textbf{(b)} Convergence of DMRG as a function of the number of sweeps. }
    \label{fig:results_bad}
\end{figure*}

The results of these experiments, quantum and classical, are shown in \Cref{fig:results_good} and \Cref{fig:results_bad}.
Specifically, the panels in \Cref{fig:results_good} show the off-the-shelf SCI methods that we applied, compared to the results of SKQD.
We only show the best performing hyperparameter settings for the classical methods.
As the plots show, none of the SCI methods are able to find the exact ground state energy.
SKQD successfully finds it with a Krylov dimension of 17, i.e., the initial configuration plus samples from 1 through 16 timesteps (the quantum data is the same in all panels).

CIPSI and HCI were not able to reach diagonalization dimensions comparable to those required in SKQD, since by design they cannot reach arbitrary dimensions, instead terminating when no new configurations are added to the pool from one iteration to the next.
In the cases of ASCI and TrimCI, we have direct control over the diagonalization dimension.
Therefore, we simply swept the dimension until it reached the dimension required by SKQD.
The result is that SKQD obtains an advantage over off-the-shelf SCI strictly in accuracy. 

Results of truncated Arnoldi's method and diagonal ranking, our custom methods, and DMRG are shown in \Cref{fig:results_bad}.
All three are able to find the exact ground state energy.
Truncated power method can as well, but it is a similar type of algorithm to truncated Arnoldi's method, and the latter is more performant in this instance; details can be found in the Supplemental Material.
It is also worth mentioning that diagonalizing in the basis sampled from the AQC circuits alone is able to find the exact ground state energy with a Krylov dimension of 20; however, this is still higher than the dimension required using our hybrid AQC-Trotter evolution circuits.

Separately we leveraged tensor network simulations of the quantum circuits, primarily as a guiding and diagnostic tool for the quantum experiments.
However, such methods also serve as a classical adversary for general quantum advantage, i.e., where the hybrid quantum-classical workflow of SKQD is replaced with a fully classical one via direct simulation and sampling of the quantum circuits.
For the Hamiltonian in our experiments, we found that all support configurations can be found by sampling from AQC target states using MPS methods, albeit requiring higher Krylov dimension than using our hybrid time evolution circuits in the quantum experiment.
We therefore did not run end-to-end simulations of SKQD using 2D tensor networks due to the comparatively higher sampling cost.
Nonetheless, we observe that SKQD converges at modest Krylov dimension such that the circuits remain classically simulable and, using prior knowledge of the support configurations and probability amplitude estimation, estimate that $\sim10^5$ samples are needed from the first 20 Krylov basis states for convergence.

Our results show that sample-based quantum diagonalization algorithms can yield distributions that enable efficient sample coverage of the ground state support, even on a noisy quantum computer.
Even further, they can do so in a regime where the ground state support is not closely connected perturbatively, since SCI fails to find the ground state for our instance.
The only reason the explicitly perturbative SCI heuristics, CIPSI and ASCI, are able to progress at all beyond the first two support configurations in each patch is due to higher-order paths introduced by coupling the patches.
This is the regime we must target in order to achieve quantum advantage using sample-based quantum diagonalization: our experiment provides the first example of successful quantum execution in this regime.

Achieving hardness for DMRG is naturally the next challenge.
One observation is that for any $k$-sparse quantum state, there exists an MPS with bond dimension at most $k$ that exactly represents the state.
Therefore, if we restrict ourselves to polynomial sparsity, as required for any sample-based diagonalization algorithm, we guarantee that the state is efficiently representable by an MPS.
This means that there is no representability barrier for DMRG in principle for a sparse ground state problem.
In practice, the cost of DMRG scales with the cube of the bond dimension, so for example a $10^4$-sparse quantum state with an MPS of bond dimension $10^4$ would be pushing the practical limits, at least based on results in the literature.

A more theoretical approach to achieving hardness for DMRG is via trainability, since DMRG relies on a convergence of an optimization consisting of sweeps over local effective diagonalizations.
This optimization could be broken either by causing it to get stuck in a local minimum, i.e., an excited state, or by causing it to fail to converge entirely.

A natural, forward-looking question is whether these results can be generalized to physically-motivated Hamiltonians such as condensed matter and chemistry use cases.
Our synthetic Hamiltonians are defined locally on 2D lattices, but their interactions are nonuniform yet highly structured, which is the primary way they differ from many physical lattice models.
That said, they suggest general principles we should be looking for to identify advantage candidates among physical Hamiltonians.
In particular, the property we described above of possessing a level crossing that separates partial ground states in configuration subspaces from full ground states, is key in achieving hardness of sparse ground states for perturbation-theory methods.
For the moment, showing that SKQD can outperform the off-the-shelf SCI methods, widely and successfully used in computational chemistry, is a significant milestone in the quest for quantum advantage in ground state problems.

\section*{Software}

The Hamiltonian instance used in the quantum and classical experiments in this work is available at~\footnote{\url{https://github.com/quantum-advantage-tracker/quantum-advantage-tracker.github.io/tree/main/data/variational-problems/hamiltonians/guided_sparse_ground_state_problem/49Q_v1}}.
Tensor network simulations using belief propagation were carried out using \textsc{TensorNetworkQuantumSimulator.jl}~\cite{TindallRudolph_TNQS}.
Time evolution circuits were synthesized using \textsc{Rustiq}~\cite{martiel2025rustiq} and approximately compiled using the \textsc{qiskit-addon-aqc-tensor} package~\cite{qiskit-addon-aqc-tensor}.
DMRG calculations were carried out using the \textsc{block2} package~\cite{block2}.
The CIPSI, HCI, ASCI, TrimCI and diagonal ranking methods were implemented using functionalities included in the \textsc{qiskit-addon-sqd} package~\cite{qiskit-addon-sqd}. 

\section*{Acknowledgments}

The authors thank Zachary Chin, Bill Fefferman, Soumik Ghosh, Abdullah Ash Saki, and Joseph Tindall for valuable conversations.
Research at RIKEN is supported by the New Energy and Industrial Technology Development Organization (NEDO), Japan (Project No. JPNP20017), the Japan Science and Technology Agency (JST) through COI-NEXT (Grant No. JPMJPF2221), and the Ministry of Education, Culture, Sports, Science and Technology (MEXT) through the Program for Promoting Research of the Supercomputer Fugaku (Grant No. MXP1020230411). Additional support is provided by the UTokyo Quantum Initiative, the RIKEN TRIP Initiative (RIKEN Quantum), and the Center of Excellence (COE) Research Grant in Computational Science from Hyogo Prefecture and Kobe City through the Foundation for Computational Science. K. Seki is also supported by Japan Society for the Promotion of Science (JSPS) KAKENHI (Grant No. JP26K06954). K. Shinjo is also supported by the JSPS KAKENHI (Grant No. JP23K13066). T. Shirakawa is also supported by JSPS KAKENHI (Grant No. JP26K06972). S. Yunoki is also supported by JSPS KAKENHI (Grant No. JP21H04446).

\bibliographystyle{ieeetr}
\bibliography{Ref}

\appendix

\section{Sample-based Krylov quantum diagonalization}
\label{sec:skqd_intro}

Many quantum algorithms have been developed for ground state approximation.
Some of these are only suitable for future fault-tolerant quantum computers (e.g.,~\cite{kitaev1995quantum,lin2022heisenberglimited,dong2022groundstate,ding2023even,wang2023quantumalgorithm}), but others are specifically designed with feasibility on noisy quantum hardware in mind.
A well-known example is the variational quantum eigensolver (VQE)~\cite{peruzzo2014variational}, but scaling VQE to large systems has proven challenging due to the high numbers of shots and difficult optimizations required.
More recently, \emph{quantum subspace diagonalization} methods have emerged as promising approaches to estimating spectral properties on pre-fault-tolerant quantum processors.
Among these, the Krylov quantum diagonalization (KQD) method~\cite{parrish2019quantum} has been experimentally demonstrated for quantum many-body systems of up to 56 spins~\cite{yoshioka2024diagonalization}.
Another subspace approach, quantum-selected configuration interaction (QSCI)~\cite{kanno2023quantum}, is based on the idea of adapting SCI to a quantum algorithm.
This technique was later developed into an error-mitigated, scalable workflow called sample-based quantum diagonalization (SQD) for a quantum-centric supercomputer, and experimentally demonstrated for chemistry Hamiltonians~\cite{robledomoreno2024chemistry, Liepuoniute2025_SQD, shajan2025towards_SQD, shajan2025proteins_SQD, barison2025quantum_SQD, kaliakin2024accurate_SQD, piccinelli2025SQDRIFT_SQD, shirakawa2025closedloop_SQD, sriluckshmy2025ghostgutzwiller_SQD, smith2025quantumcentricsimulationhydrogenabstraction_SQD, Danilov2025_sqd, D5CP02202A_SQD}.
Recently, a hybrid approach combining SQD and KQD, termed sample-based Krylov quantum diagonalization (SKQD), was introduced and experimentally validated in~\cite{yu2025quantum}. 

These last two algorithms have the distinctive property that, like SCI, they work by building up a subspace spanned by computational basis states or configurations (we will use the latter term, for brevity).
This means that, unlike other quantum diagonalization algorithms, the projection of the Hamiltonian onto those configurations as well as the subsequent diagonalization can both be carried out classically~--- the quantum computer is only used to identify the configurations in the first place.
This gives SQD and SKQD the pair of properties mentioned in the main text that, together, make them unique among quantum diagonalization algorithms:
\begin{enumerate}
    \item Output energies are classically verifiable, by projecting and diagonalizing on the basis of configurations, which has polynomial size.
    \item Output energies are variational up to the precision of the \emph{classical} diagonalization.
\end{enumerate}
These two properties imply that, if one of these quantum algorithms yields a lower energy for some problem than any classical algorithm is able to achieve with equivalent cost, this will constitute an unconditional and classically verifiable quantum advantage~\cite{lanes2025framework}.

The price that is paid for these advantages is that SQD and SKQD are limited to sparse approximations of ground states, since by definition a state that can be represented in a polynomial-sized subset of the computational basis within exponential total dimension is sparse in that basis.
In spite of this limitation, the fact that they can in principle yield unconditional and verifiable quantum advantage makes them very appealing.

SKQD has an additional property beyond SQD that increases its appeal still further: under a handful of conditions that we discuss in detail in \Cref{sec:skqd_intro}, it can be proven to converge systematically to a sparse ground state~\cite{yu2025quantum}.
SKQD builds upon the rapidly maturing family of quantum Krylov algorithms for approximating ground state energies~\cite{parrish2019quantum,motta2020qite_qlanczos,takeshita2020subspace,stair2020krylov,urbanek2020chemistry,kyriienko2020quantum,YeterAydeniz2020practical,cohn2021filterdiagonalization,yoshioka2021virtualsubspace,epperly2021subspacediagonalization,seki2021powermethod,bespalova2021hamiltonian,baker2021lanczos,baker2021block,cortes2022krylov,klymko2022realtime,tkachenko2024davidson,lee2023sampling,zhang2023measurementefficient,kirby2023exactefficient,shen2023realtimekrylov,yang2023dualgse,ohkura2023leveraging,motta2023subspace,getelina2024quantum,Filip2024variationalphase,anderson2024solving,kirby2024analysis,yoshioka2024diagonalization,byrne2024super,sugisaki2025hamiltonian,yu2025quantum}.
What these algorithms all have in common is that they are based on using a \emph{Krylov space} to approximate extremal eigenvectors.
A Krylov space is a subspace of a larger vector space, constructed by selecting some initial vector, repeatedly multiplying by an operator for some number of iterations, then treating the resulting vectors as a nonorthogonal basis for a subspace.
Lanczos' method and all of its descendants~\cite{lanczos1950iteration,saad2011numerical} derive their power from the remarkable fact that for a Hermitian matrix $M$, the Krylov space generated by $M$ and starting from any vector, contains approximations to the eigenvectors of $M$ with highest and lowest eigenvalues that also overlap with the initial vector (provided those eigenvalues are gapped).
These approximations converge exponentially quickly with respect to the dimension of the Krylov space~\cite{saad1980lanczos}.

Parrish and McMahon first suggested in 2019 that a quantum computer could be used to construct a Krylov space and calculate the projection of a Hamiltonian into it, for the purpose of exploiting the same fast convergence of the ground state energy~\cite{parrish2019quantum}.
In their version, the key difference compared to classical Krylov methods was the proposal to use repetitions of a time-evolution operator rather than the Hamiltonian itself to generate the Krylov space.
Subsequent works have proposed a variety of Krylov spaces that can be generated using a quantum computer, but the Krylov space generated by time-evolutions remains a natural approach since time-evolutions are a relatively cheap and well-studied type of operator to implement.
Furthermore, in 2021 Epperly \emph{et al.} proved that this Krylov space yields convergence at a comparable rate to that generated by powers of the Hamiltonian, even in the presence of noise~\cite{epperly2021subspacediagonalization}.
This algorithm has been used experimentally to diagonalize systems of up to 56 qubits~\cite{yoshioka2024diagonalization}.

The remaining development to reach the SKQD algorithm was to introduce a sampling component~\cite{yu2025quantum}.
While non-sample-based quantum Krylov algorithms require the matrices representing projection of the Hamiltonian into the Krylov space to be estimated statistically using the quantum computer, if one instead samples from the Krylov basis vectors, one can calculate the projection onto the resulting configuration basis classically, as in SQD.
This has the advantage that the resulting matrix is exact to machine precision, yielding the guarantee of variationality that we mentioned in the main text.
It also eliminates the need for a Hadamard test or equivalent, which is required to calculate the matrix elements explicitly on a quantum computer (see e.g.~\cite{cortes2022krylov,yoshioka2024diagonalization}), simplifying the circuits to uncontrolled time evolutions without auxiliary qubits.
The justification for this modification is that one can prove that under the same conditions in which quantum Krylov diagonalization converges efficiently, SKQD does as well, provided the ground state is additionally guaranteed to be approximately sparse.
A vector $|\Psi_0\rangle$ is \emph{sparse} with \emph{sparsity} $s$ if it contains at most $s=\text{poly(n)}$ nonzero entries for $n$ qubits, so that it can be represented efficiently in a classical sparse vector format that stores only the indices and values of the nonzero entries.

To make this precise, the SKQD algorithm is as follows:
\begin{algorithm}[H]
\caption{Sample-based Krylov quantum diagonalization}\label{alg:SKQD}
\begin{algorithmic}[1]
    \Require $n$-qubit Hamiltonian $H$ \Comment{Hamiltonian}
    \Require $U\approx e^{-iH\,\Delta t}$ \Comment{Circuit approximating timestep}
    \Require $|x_0\rangle$ \Comment{Initial guiding state}
    \Require $d$ \Comment{Krylov dimension}
    \Require $M$ \Comment{Samples per basis state}
    \State $B \gets \emptyset$ \Comment{configuration basis (classical)}
    \For {$k = 0,1,2,...,d-1$}
        \For {$l = 0,1,2,...,M-1$}
            \State $b\gets$ sample from $|\phi_k\rangle \coloneqq U^k|x_0\rangle$ \Comment{Quantum}
            \State Add $b$ to $B$ \Comment{Classical}
        \EndFor
    \EndFor
    \State $H_B \gets$ projection of $H$ onto subspace $B$ \Comment{Classical}
    \State $\tilde{E}_0 \gets$ lowest eigenvalue of $H_B$ \Comment{Classical}
    \State\textbf{Return:} $\tilde{E}_0$
\end{algorithmic}
\end{algorithm}
 
SKQD can efficiently approximate the ground state of $H$ provided three conditions are satisfied: we codify these as a problem class that we called the \emph{guided sparse ground state problem}.
\begin{problem}
\label{prob:guided_sparse_ground_state_problem}
    A \emph{guided sparse ground state problem} is defined by a Hamiltonian $H$ and an initial guiding state $|x_0\rangle$, such that...
    \begin{enumerate}
        \item $H$ is gapped;
        \item the ground state $|\Psi_0\rangle$ of $H$ is sparse;
        \item $|x_0\rangle$ has $1/\text{poly}(n)$ overlap with $|\Psi_0\rangle$.
    \end{enumerate}
\end{problem}
In fact, as noted above, SKQD still converges even if the ground state is only approximately sparse in a sense made precise in~\cite{yu2025quantum}.
However, in this paper we will be concerned only with strictly sparse ground states.
The approximation error depends on the initial state overlap ${\gamma_0\coloneqq\langle \Psi_0|x_0\rangle}$, the sparsity of the ground state $|\Psi_0\rangle$, the spectral gap and range, and the number of samples~\cite{yu2025quantum}.
We will refer to the configurations with nonzero amplitudes in our ground states as the \emph{support configurations}.

\section{Classical approaches}
\label{sec:classical_approaches}

\subsection{Sparse Iterative Solvers}
\label{sec:sis_intro}

Sparse iterative solvers aim to discover a subset of configurations $B$ that yields an accurate approximation of the ground state in the form
\begin{equation}
    |\tilde{\psi}_0\rangle = \sum_{x \in B} c_x | x\rangle,
\end{equation}
where $x\in\{0, 1\}^n$. Given $B$, the coefficients $c_x$ that minimize the energy of $|\psi_0\rangle$ can be obtained in closed form as the coefficients of the lowest energy eigenvector of the projection of $H$ into the subspace defined by $B$:
\begin{equation}
\begin{split}
    H_B  |\tilde{\psi}_0\rangle & = \tilde{E}_0  |\tilde{\psi}_0\rangle,\\
    H_B & = \Pi_B H \Pi_B, \\
    \Pi_B & = \sum_{x\in B} |x\rangle \langle x|.
\end{split}
\end{equation}

Starting from an initial set of configurations $B^0$, at each iteration new configurations connected by non-zero Hamiltonian matrix elements are added to the pool of considered configurations. The size of the configuration pool is kept under control by selecting only those that are expected to have the most relevant contributions to the ground state according to some heuristic \textit{selection criterion}.

We break down selection criteria into two broad classes.
The first class relies on performing a subspace diagonalization at each iteration, and is commonly referred to as Selected Configuration Interaction (SCI) algorithms in the electronic structure literature~\cite{holmes2016heat_SCI, zhang2025TrimCI_SCI, holmes2016efficient_SCI, tubman2016deterministic_SCI, sharma2017semistochastic_SCI, Huron_1973_CIPSI}.
A typical selection criterion in this class involves ranking or thresholding the new proposed configurations according to a cost function of the form given in \eqref{eq:perturbative_cf}, which we reproduce here for convenience:
\begin{equation}
\label{eq:perturbative_cf_app}
    f(x) = \frac{\langle x|H|\psi_\mathrm{prev}\rangle}{\langle x|H|x\rangle-E_\mathrm{prev}},
\end{equation}
where $x$ is the new proposed configuration, and $|\psi_\mathrm{prev}\rangle,E_\mathrm{prev}$ are the approximate ground state and energy from the previous iteration.
Eq.~\eqref{eq:perturbative_cf_app} is the amplitude of $|x\rangle$ predicted by first-order perturbation theory.
The SCI heuristics that we consider in this category are CIPSI~\cite{Huron_1973_CIPSI}, HCI~\cite{holmes2016heat_SCI, sharma2017semistochastic_SCI, holmes2016efficient_SCI}, ASCI~\cite{tubman2016deterministic_SCI}, and TrimCI~\cite{zhang2025TrimCI_SCI}.
Detailed descriptions and pseudocode are given in \Cref{ssapp:diagonalization_based}. Within the SCI methods, there are two broad categories: those that select subspaces based on a selection function threshold (like CIPSI and HCI), and those which put a cap on the number of configurations accepted at each iteration (like ASCI and TrimCI). The latter require an additional truncation of configurations from the diagonalization pool before the Hamiltonian application for the production of a new set of configurations. This distinction is important because both CIPSI and HCI grow the diagonalization subspaces as the iterations progress, while for ASCI and TrimCI, the diagonalization subspace size is fixed.

The second class of sparse iterative solvers does not rely on performing subspace diagonalizations at each iteration to evaluate the selection criterion.
We consider three solvers in this class, two of which were specifically designed to target the Hamiltonian construction we present in \Cref{sec:construction}.
The first, which was presented in~\cite{yuan2013truncated}, is a \emph{truncated power method} (TPM).
The approach is to iterate from an initial reference vector by alternating multiplication by $H$ and truncation to the $k$ most significant entries, for some fixed cutoff $k$; see \Cref{app:tpower} for details.
The energy estimate is given as the expectation value of the final vector of the iteration.
We can however extend this to our sparse iterative solver framework by instead taking the support of the final vector (which has size at most $k$) as our basis $B$ for projection and diagonalization: since the final vector is included in the span of this $B$, the error of this modified method will be no larger than the error of the original expectation value method.

The truncated power method is important because it can be proven to converge for a polynomially-gapped, sparse ground state, given a sparse initial reference state with polynomial overlap, with runtime polynomial in all relevant parameters~\cite{yuan2013truncated}.
Under these assumptions, the corresponding sparse eigenvalue problem lies in the complexity class P, which implies that one cannot expect more than a polynomial quantum speedup for this class of problems.
For completeness, we discuss details of TPM including its full convergence proof and runtime analysis in its own \Cref{app:tpower}.
In practice for the finite-sized instances we consider, we find that the TPM does not perform as well as the other classical heuristics we study, including the two that we describe next; see \Cref{app:tpower} for the numerical comparison.

Of the two solvers we designed in this work, the first, which we call \emph{diagonal ranking}, follows the same iteration as SCI, but ranks configurations simply according to their expected energy.
While one would not expect this heuristic to be generally applicable, it can be effective for our construction.
The second, which we call \emph{truncated Arnoldi's method}, is a sparse, matrix-free implementation of Arnoldi's method with truncation of the basis vectors to their heaviest configurations in order to preserve sparsity.
This method may be more broadly applicable.
We give detailed descriptions and pseudocode for both in \Cref{ssapp:non_diagonalization_based}.

\subsubsection{Diagonalization-based sparse iterative solvers}
\label{ssapp:diagonalization_based}

The algorithms for  traditional diagonalization-based sparse iterative solvers can be abstracted as shown in~\Cref{alg:SCI}. The set of inputs to the algorithm are:
\begin{enumerate}
    \item A configuration selection function $f$, whose inputs include the configuration $x$, a set of configurations $\mathcal{S}$, a wave function $|\psi\rangle$, the Hamiltonian $H$, and a selection threshold $\varepsilon$ and/or the maximum number of configurations to be selected $M$. The output is boolean: $1$ if $x$ is selected and $0$ if it is not selected.
    \item A selection threshold $\varepsilon$ and/or the maximum number $M$ of configurations to be selected at each iteration.
    \item The number $C$ of configurations to be kept after the diagonalization at each iteration. $C$ is commonly referred to as the \textit{core} size in the SCI literature. It is common to allow $C$ to be unbounded.
    \item The maximum number of iterations $T$.
    \item An initial set $B^0$ of configurations believed to be contained in the ground state.
\end{enumerate}
Each iteration $\mu$ of the algorithm is decomposed into five steps:
\begin{enumerate}
    \item The approximate ground state in the subspace spanned by $B^\mu$ is obtained by solving the extremal eigenvalue problem in the subspace.
    \item Sort $B^\mu$ such that the corresponding wave function amplitudes are in decreasing order.
    \item Trim $B^\mu$ by keeping the $C$ configurations in $B^\mu$ with largest magnitude in $|\psi^\mu\rangle$, yielding the set of configurations labeled by $\C^\mu$.
    \item Set the coefficients corresponding to configurations outside of $\C^\mu$ to zero in $|\psi^\mu\rangle$. 
    \item Find the set of configurations $\A^{\mu}$ connected by a non-zero Hamiltonian matrix element to any configuration in $\C^\mu$.
    \item Select the set of configurations in $\mathcal{A}^\mu \cup \mathcal{C}^\mu$ that will be used for diagonalization in the next iteration. The configuration selection function $f$ is used.
\end{enumerate}

\begin{algorithm}[H]
\caption{Diagonalization-based iterative sparse solver}\label{alg:SCI}
\begin{algorithmic}[1]
    \Require $f: x, \mathcal{S}, |\psi\rangle, H, \varepsilon, D \mapsto \{0, 1\}$ \Comment{configuration selection function}
    \Require $\varepsilon$, or $D$, or ($\varepsilon$, $D$) \Comment{Selection threshold}
    \Require $1\leq C \leq \infty$ \Comment{Size of set of core configurations}
    \Require $T$ \Comment{Number of iterations}
    \Require $B^0 = \{ x_1, ..., x_{N_0}\}$ \Comment{Initial set of configurations}
    \State $\mu \gets 0$ \Comment{Iteration index}
    \While {$\mu < T$}
    \State $E_0^{\mu}, |\psi^\mu\rangle = \sum_{x\in B^\mu} c^{\mu}_x | x\rangle \gets$ lowest eigenpair of $H_{B^{\mu}}$ \Comment{(1)}
    \State $\sigma \gets$ permutation s.t. $|c^\mu_{x_{\sigma(1)}}|\ge\hdots \ge |c^\mu_{x_{\sigma(|B^\mu|)}}|$ \Comment{(2)}
    \State $\mathcal{C}^\mu = \{x_{\sigma(1)}, \hdots, x_{\sigma(C)}\; | \; x_i \in B^\mu\}$ \Comment{(3)}
    \State $0 \gets c^\mu_x\;\;\forall \; x \notin \mathcal{C}^\mu$ \Comment{(4)}
    \State $\mathcal{A}^\mu = \{x_a \notin \mathcal{C}^\mu \; | \; \langle x_a | H | x_i \rangle \neq 0 \textrm{ for some } x_i \in \mathcal{C}^\mu\}$ \Comment{(5)}
    \State $B^{\mu + 1} = \{x_b \in (\mathcal{A}^\mu \cup \mathcal{C}^\mu) \; | \; f(x_b, |\psi^\mu\rangle, H, \varepsilon, D) = 1 \}$ \Comment{(6)}
    \State $\mu \gets \mu + 1$
\EndWhile
\end{algorithmic}
\end{algorithm}

Various configuration selection functions $f$ are commonly used in the electronic-structure literature, giving rise to different SCI heuristics. In this study we consider the following:
\begin{itemize}
    \item ``Configuration interaction by Perturbatively Selecting Iteratively'' (CIPSI)~\cite{Huron_1973_CIPSI}. This heuristic retains all configurations in $B^\mu$ and ranks configurations in $\A^\mu$ according to the perturbative correction to the $|\psi^\mu\rangle$ estimation of the ground state:
    \begin{equation}
    \label{cipsi_cf}
        f(x_b, |\psi^\mu\rangle, H, \varepsilon) = 
        \begin{cases}
            1 & \textrm{ if } x_b \in B^\mu \\
            1 & \textrm{ if } \left|\frac{\langle x_b | H | \psi^\mu\rangle}{E_b - E_0^\mu} \right|> \varepsilon \\
            0 & \textrm{ otherwise},
        \end{cases}
    \end{equation}
    with $E_b = \langle x_b | H | x_b\rangle$. It is also worth noting that the the core set $\C^{\mu}$ of configurations is typically allowed to grow without restriction, i.e., $\C^\mu=B^\mu$. While the algorithm is strictly memory-bound due to the limit on maximum number of iterations, the memory may grow at each iteration. The only hyperparameter controlling the performance of the method is $\varepsilon$. Not bounding $\C^\mu$ yields monotonically-decreasing energies iteration-by-iteration. 
    
    \item ``Heat Bath SCI'' (HCI)~\cite{holmes2016heat_SCI, sharma2017semistochastic_SCI, holmes2016efficient_SCI}. Similar to CIPSI, this heuristic retains all configurations in $B^\mu$ and ranks configurations in $\A^\mu$ according to a proxy for the perturbative correction to the $|\psi^\mu\rangle$ estimation of the ground state that is less computationally intensive to estimate:
    \begin{equation}
        f(x_b, |\psi^\mu\rangle, H, \varepsilon) = 
        \begin{cases}
            1 & \textrm{ if } x_b \in B^\mu \\
            1 & \textrm{ if } \max_{x_i \in B^\mu}\left|\langle x_b| H |x_i \rangle c_{x_i}^\mu \right|> \varepsilon \\
            0 & \textrm{ otherwise}.
        \end{cases}
    \end{equation}
    Like in CIPSI, the size of the core set $\C^\mu$ is typically allowed to grow without restriction, and the same comments on memory, hyperparameter, and monotonic decrease of energy apply.
    
    \item ``Adaptive Sampling Configuration Interaction'' (ASCI)~\cite{tubman2016deterministic_SCI}. ASCI may be regarded as a iteration-by-iteration memory-bound modification of CIPSI. The ASCI selection function, however, treats on equal footing the approximate wave function amplitudes $c^\mu_x$ and the perturbative estimate of the amplitudes for the configurations $x_a$ in $\A^\mu$:
    \begin{equation}
    \label{asci_cf}
        \tilde{c}^\mu_{x_a} = \frac{\langle x_a | H | \psi^\mu\rangle}{\langle x_a | H | x_a\rangle-E^\mu_0}.
    \end{equation}
    A combined set of amplitudes is formed:
    \begin{equation}
    \begin{split}
        \P^\mu &= \{d^\mu_{x_b} \: | \: x_b \in (\C^\mu\cup \A^\mu) \} \\
        &= \{|c^\mu_x| \: | \: x \in \C^\mu \} \cup \{ |\tilde{c}^\mu_{x_a}| \: | \: x_a \in \A^\mu \},
    \end{split}
    \end{equation}
    and the permutation $\sigma$ that sorts the elements in $\P^\mu$ is obtained:
    \begin{equation}
        d^\mu_{x_\sigma(1)} > \hdots > d^\mu_{x_{\sigma(|\P^\mu|))}}
    \end{equation}
    The selection function is then defined as
    \begin{equation}
        f(x_b, |\psi^\mu\rangle, H, D) = 
        \begin{cases}
            1 & \textrm{ if } b = \sigma(i) \; \forall \; i< D \\
            0 & \textrm{ otherwise}.
        \end{cases}
    \end{equation}
    Since this generates a set $B^\mu$ with at most $D$ elements in it, ASCI is an iteration-by-iteration memory-bound algorithm, unlike CIPSI and HCI. In typical ASCI implementations, the size $C$ of the core set is fixed to a fraction of $D$. Thus ASCI requires the tuning of two hyperparamters, $C$ and $D$. Since $C$ is chosen to be smaller than $D$, and since configurations in $\C^\mu$ and $\A^\mu$ are ranked on equal footing, this algorithm is not guaranteed to produce a monotonically-decreasing estimate of the ground-state energy iteration-by-iteration. 

    \item ``TrimCI''~\cite{zhang2025TrimCI_SCI}. This is a newly introduced heuristic whose selection function involves performing diagonalizations in randomized subsets of $(\C^\mu\cup \A^\mu)$. While the cost of evaluation of the selection function increases as compared to its predecessors (CISI, HCI and ASCI), Ref.~\cite{zhang2025TrimCI_SCI} reports a more efficient exploration of the space of electronic configurations of chemical systems as compared to HCI.

    The selection criterion comprises two phases. In the first phase, configurations $x_a \in \A^\mu$ are filtered according to a selection rule similar to that of CIPSI or HCI, yielding a set of configurations $\widetilde{\A}^\mu$:
    \begin{equation}
        \widetilde{\A}^\mu = \left\{x_a \in \A^\mu \; | \; \left|\frac{\langle x_b | H | \psi^\mu\rangle}{E_b - E_0^\mu} \right|> \varepsilon \right\},
    \end{equation}
    or, alternatively:
    \begin{equation}
        \widetilde{\A}^\mu = \{x_a \in \A^\mu \; | \;  \max_{x_i \in C^\mu}\left|\langle x_b| H |x_i \rangle c_{x_i}^\mu \right|> \varepsilon \}.
    \end{equation}
    While Ref.~\cite{zhang2025TrimCI_SCI} employs the latter, we obtained marginally better results for the Hamiltonians considered in this work with the former, which was therefore used for the numerical results shown later on. Ref.~\cite{zhang2025TrimCI_SCI} also suggests a dynamic scheduling of $\varepsilon$ such that $|\C^\mu| + |\widetilde{\A}^\mu| = F|\C^\mu| $, where $F>1$ is a user-defined hyperparameter.

    The configurations in $\C^\mu$ are combined with those in $\widetilde{\A}^\mu$, forming the set $\P^\mu = \C^\mu \cup \widetilde{\A}^\mu$. $\P^\mu$ is then randomly partitioned into $N_S$ disjoint, uniformly-sized subsets that we label $\P^\mu_j$ for $j = 1, \hdots, N_S$. $N_S$ is a user-defined hyperparameter. Then, the lowest energy states within the subspaces spanned by the $\P^\mu_j$ are computed, along with permutations $\sigma^j$ that sort the amplitudes of these states in decreasing order of magnitude:
    \begin{equation}
    \begin{split}
        |\psi(\P^\mu_j)\rangle & = \sum_{x \in \P^\mu_j} c^j_x|x\rangle. \\
        |c^j_{x_{\sigma^j(1)}}| & \ge \hdots \ge |c^j_{x_{\sigma^j(|\P^\mu_j|)}}|
        \end{split}
    \end{equation}
    From each subset $\P^\mu_j$ we keep the $N_K$ configurations with largest amplitude, and pool all of those to form the set $B^{\mu+1}$. $N_K$ is another user-defined hyperparameter.
    
    The size of the diagonalization set is $|B^{\mu + 1}| = N_S \cdot N_K$, making this an iteration-by-iteration memory-bound method. Similarly to ASCI, it is typical to set the size $C$ of the core set to be a fraction of $N_S \cdot N_K$. Accordingly, TrimCI requires the tuning of four hyperparameters, $F$ or $\varepsilon$, $C$, $N_S$, and $N_K$. When $C$ and $F$ are explicitly set, the size of the subsets for selection diagonalization can be obtained as $|\P^\mu_j| = F \cdot C/N_S$. This algorithm is not guaranteed to produce a monotonically-decreasing estimate of the ground-state energy iteration-by-iteration. 
\end{itemize}

\subsubsection{Non-diagonalization-based sparse iterative solvers}
\label{ssapp:non_diagonalization_based}

\begin{itemize}
    \item Diagonal ranking (configuration energy-based selection): we introduce a heuristic iterative method to construct a pool of configurations $B$ for subspace diagonalization. No diagonalization is required in the iteration, as shown in~\Cref{alg:diag ranking}. The algorithm ranks configurations according to their expected energy. At each iteration $\mu$, two sets of configurations are kept. The first is the so-called working set of configurations $B^\mu$, which is the set used for generation of new configurations by Hamiltonian application, and that will eventually be used for subspace diagonalization. The second set is the so-called reservoir ser $\mathcal{R}^\mu$, which is used to keep track of a larger sit of configurations and their corresponding energies, without the need to explicitly perform operations on the larger set. The algorithm requires the specification of the maximum size of the working and reservoir sets,  $D$ and $R$ respectively. We choose $R>>D$. Additionally, an inital working set of configurations is needed to start the iteration. Each iteration is decomposed into six steps (see~\Cref{alg:diag ranking}):
    \begin{enumerate}
        \item Find the set configurations $\A^{\mu}$ not in $\C^\mu$ connected by a non-zero Hamiltonian matrix element to any configuration in $\C^\mu$.
        \item The reservoir set is updated by adding the configurations in $\A^{\mu}$.
        \item Computation of the energies of the configurations in the updated reservoir.
        \item Construction of the permutation that sorts the configuration energies in ascending order.
        \item Update the working set $B^{\mu+1}$ by keeping the $D$ configurations with lowest energy in the updated reservoir set $\mathcal{R}^{\mu + 1}$. 
        \item The reservoir is trimmed by keeping only the $R$ configurations with lowest energy in $\mathcal{R}^{\mu + 1}$. This last step is needed to avoid the growth of the memory requirement of the algorithm.
    \end{enumerate}
    Once the iterations are completed, the Hamiltonian is projected and diagonalized in the subspace spanned by the latest working set $B^T$.
    
    Fixing $D$ guarantees that the number of operations has a fixed ceiling. Likewise, fixing $R$ guarantees that the memory required has a fixed ceiling. The diagonal ranking algorithm requires the tuning of two hyperparameters. This algorithm is not guaranteed to produce a monotonically-decreasing estimation of the ground-state energy.

    \begin{algorithm}[H]
    \caption{Diagonal ranking}\label{alg:diag ranking}
    \begin{algorithmic}[1]
        \Require $D$ \Comment{Cutoff on number of working configurations}
        \Require $R>D$ \Comment{Cutoff on number of reservoir configurations}
        \Require $T$ \Comment{Number of iterations}
        \Require $B^0 = \{ x_1, ..., x_{N_0}\}$ \Comment{Initial set of working configurations}
        \State $\mathcal{R}^0 = B^0$ \Comment{Initialize reservoir}
        \State $\mu \gets 0$ \Comment{Iteration index}
        \While {$\mu < T$}
            \State  $\mathcal{A}^\mu = \{x_a \notin \mathcal{C}^\mu \; | \; \langle x_a | H | x_i \rangle \neq 0 \textrm{ for some } x_i \in B^\mu\}$ \Comment{(1)}
            \State $\mathcal{R}^{\mu+1} = \mathcal{R}^\mu \cup \A^\mu$ \Comment{(2)}
            \State $\E_{\mathcal{R}^{\mu+1}} = \{E_{x_b} = \langle x_b|H|x_b\rangle  \: | \: x_b \in \mathcal{R}^{\mu+1} \}$  \Comment{(3)}
            \State $\sigma \gets$ permutation s.t. $E_{x_{\sigma(1)}}\le\hdots \le E_{x_{\sigma(|\mathcal{R}^{\mu+1}|)}}$ \Comment{(4)}
            \State $B^{\mu + 1} = \{x_{\sigma(1)}, ...,x_{\sigma(D))} \; | \; x_b \in \mathcal{R}^{\mu+1}  \}$ \Comment{(5)}
            \State $\mathcal{R}^{\mu+1} \gets  \{x_{\sigma(1)}, ...,x_{\sigma(R))} \; | \; x_b \in \mathcal{R}^{\mu+1}  \}$ \Comment{(6)}
            \State $\mu \gets \mu+1$ 
        \EndWhile
    \State \textbf{Return} $B^{\mu + 1}$
    \end{algorithmic}
    \end{algorithm}
    
    \item Truncated Arnoldi Method: Lanczos method~\cite{lanczos1950iteration} and its generalization to non-Hermitian matrices, Arnoldi's method, have been standard tools in classical exact diagonalization for decades~\cite{saad2011numerical}.
For example, the sparse matrix eigensolvers in \textsc{scipy}~\cite{virtanen2020scipy} run ARPACK solvers under the hood, which in turn employ variants of Lanczos and Arnoldi's methods~\footnote{https://docs.scipy.org/doc/scipy/reference/generated /scipy.sparse.linalg.eigs.html}.
The problem with applying these algorithms to large quantum systems is that, even in matrix-free sparse vector implementations, the sparsity grows exponentially with the number of iterations, limiting the convergence that can be reached before running out of memory.

In order to construct a classical adversary within this family of methods that is capable of handling large quantum systems with exponential Hilbert space dimension, we propose a new variant of Arnoldi's method that we call \emph{truncated Arnoldi's method}.
The basic idea is to use a matrix-free, sparse vector implementation, but additionally truncate the vector to its most significant entries at each iteration, thus curtailing the exponential blow-up of memory.
Pseudocode for the iteration is given in \Cref{alg:tr_arn_iter}.
To our knowledge, this exact algorithm has not been previously proposed, although very similar ideas have been~\cite{moro1981calculation,kuprov2008polynomially,ma2013sparse}.

\begin{algorithm}[H]
\caption{Truncated Arnoldi iteration}\label{alg:tr_arn_iter}
\begin{algorithmic}[1]
    \Require $M$ \Comment{Cutoff on number of new configurations}
    \Require $T$ \Comment{Number of iterations}
    \Require $v_0$ \Comment{Initial sparse vector}
    \For {$i=0,1,...,T-1$}
        \State $u_{i+1}\gets Hv_i$ \Comment{Expand}
        \For {$j=0,1,...,i$}
            \State $u_{i+1}\gets u_{i+1} - (v_j, u_{i+1})v_j$ \Comment{Orthogonalize}
        \EndFor
        \State $u_{i+1}\gets u_{i+1}|_M$ \Comment{Truncate}
        \State $v_{i+1}\gets u_{i+1}/\|u_{i+1}\|$ \Comment{Normalize}
    \EndFor
\end{algorithmic}
\end{algorithm}

The notation in the truncation step, $u_{i+1}|_M$, means setting all but the largest $M$ amplitudes in $u_{i+1}$ to zero.
The crucial point in the construction is that orthogonalization is carried out before truncation: this ensures that the truncation preserves as much of the new direction that $u_{i+1}$ contributes to the Krylov space as possible.
Since Krylov bases are known to be exponentially ill-conditioned~\cite{beckerman2017singularvalues}, if truncation is carried out before orthogonalization, then after a small number of iterations the new truncated vectors will be contained within the existing Krylov space.

Truncating after orthogonalization has two downsides.
First, while we would normally use Lanczos method for Hermitian $H$, the truncation means that the standard Lanczos trick of only orthogonalizing with respect to the previous two basis vectors is no longer sufficient, and instead we have to orthogonalize with respect to all previous basis vectors as we would in Arnoldi's method (hence ``truncated Arnoldi'' rather than ``truncated Lanczos'').
Second, even so, the resulting basis vectors are no longer exactly orthogonal (although the condition number of the basis no longer grows rapidly), since the truncation happens after orthogonalization.
Therefore, projecting and diagonalizing within this Krylov space means forming and then solving a generalized eigenvalue problem that includes a Gram matrix for the basis; this subroutine is given in \Cref{alg:tr_arn_pd}.

\begin{algorithm}[H]
\caption{Truncated Arnoldi projection and diagonalization (not configuration basis)}\label{alg:tr_arn_pd}
\begin{algorithmic}[1]
    \Require $V\coloneqq[v_0,v_1,...,v_T]$ \Comment{Matrix of basis vectors from \Cref{alg:tr_arn_iter}}
    \State $\tilde{H}\gets V^\dagger H V$ \Comment{Projection matrix}
    \State $\tilde{S}\gets V^\dagger V$ \Comment{Gram matrix}
    \State Solve $\tilde{H}w=\lambda\tilde{S}w$ \Comment{Diagonalize}
\end{algorithmic}
\end{algorithm}

However, we can do much better than this.
Since we are explicitly maintaining the Krylov basis vectors as sparse vectors, projecting onto them is no more costly than projecting onto the union of their supports.
This union of supports forms a configuration basis for a subspace of the Hilbert space, just like in SCI, but no diagonalization is required during the iteration, only at the end.
This algorithm is what we use as a classical adversary, and we give it explicitly in \Cref{alg:tr_arn_pd_bts}.
When we refer to truncated Arnoldi's method throughout the remainder of the paper, we will be refering to this variant.

\begin{algorithm}[H]
\caption{Truncated Arnoldi projection and diagonalization (configuration basis)}\label{alg:tr_arn_pd_bts}
\begin{algorithmic}[1]
    \Require $\{v_0,v_1,...,v_T\}$ \Comment{Basis vectors from \Cref{alg:tr_arn_iter}}
    \State $B\gets\bigcup_{i=0}^T\mathrm{supp}(v_i)$ \Comment{Union of supports}
    \State $\tilde{H}\gets\Pi_BH\Pi_B$ \Comment{Project ($\Pi_B=$ projector onto $B$}
    \State Diagonalize $\tilde{H}$.
\end{algorithmic}
\end{algorithm}

Truncated Arnoldi's method is a heuristic, since lowest eigenvalue of the resulting projected matrix no longer has the guaranteed convergence of Arnoldi's method, but in practice it converges efficiently in many cases.
In particular if, as in the construction in this manuscript, the ground state is sparse and a good initial vector is given, truncated Arnoldi is well-adapted for success.
The only way it can fail is if the truncation at intermediate steps cuts off paths in configuration space that would combine at a later iteration to provide amplitude to one of the configurations in the support of the ground state.
Therefore, in order to cause it to fail, we would need to engineer this situation.

\end{itemize}

\subsection{DMRG}
\label{sapp:dmrg}

DMRG~\cite{white1992dmrg} is a classical tensor-network method that, rather than simulating quantum circuits, targets ground states directly.
It was originally developed for condensed matter physics~\cite{white1992dmrg} but has more recently found great success in strongly-correlated chemistry~\cite{chan2002dmrg}.

More precisely, DMRG is a low-entanglement wave-function approximation~\cite{white1992dmrg}. It targets wavefunctions of matrix product state (MPS) type, which are defined as
\begin{equation}
| \Psi_{\mathrm{MPS}} \rangle = \sum_{x_1 \dots x_n}  \left[ A[1]^{x_1} A[2]^{x_2} \dots A[n]^{x_n} \right] | x_1 \dots x_n \rangle
\;,
\end{equation} 
where $M$ denotes the number of qubits in the system and each $A[i]$ is a $\chi\times\chi$ matrix of real numbers, except for the boundary terms $A[1]$ and $A[i]$, which are length-$\chi$ vectors.
$\chi$ is called the bond dimension of $\Psi_{\mathrm{MPS}}$ and controls the accuracy and memory cost of the simulation: as $\chi$ increases the wave function may converge to the exact correlated ground state.
Given an MPS state $\Psi_{\mathrm{MPS}}$, the energy $E=\langle \Psi_{\mathrm{MPS}} | H | \Psi_{\mathrm{MPS}} \rangle$ can be stably computed by representing the Hamiltonian $H$ as a matrix-product operator and performing a contraction operation~\cite{schollwock2011density}.

The energy can be variationally optimized through the DMRG sweep algorithm~\cite{white1993density}, where in the individual tensors $A[i]$ are update one at a time while sweeping back and forth along a one-dimensional ordering of sites.
The update requires solving an effective eigenvalue problem for $A[i]$, then truncating the solution to keep only the most important states in a singular-value decomposition of such tensor.
Sweeping repeatedly lowers the energy and converges $\Psi_{\mathrm{MPS}}$ toward a stationary point of the energy within the chosen bond dimension.

In practice, the sweep algorithm employs a user-defined schedule in which the bond dimension is gradually increased toward a maximum $\chi$ over successive sweeps, and decreasing amounts of noise are added to the tensors $A[i]$ to discourage convergence to metastable states.
As the calculation progresses, the noise is reduced and the bond dimension increased, allowing the algorithm to refine $\Psi_{\mathrm{MPS}}$ and converge to the variational optimum more reliably. Nevertheless, convergence to a global minimum of the energy is not guaranteed \emph{a priori} and may be challenging in practice.

In principle, MPS and other tensor-network states need not be sparse wavefunctions. However, a linear combination of $K$ basis states $x_1 \dots x_n$ can be represented by an MPS of bond dimension $K$.
Conversely, given an MPS state, one can identify the basis states over which $\Psi_{\mathrm{MPS}}$ is approximately supported by repeatedly sampling the distribution $p(x_1 \dots x_n) = | A[1]^{x_1} A[2]^{x_2} \dots A[n]^{x_n} |^2$, for example with the ``direct sampling'' method~\cite{ferris2012perfect}.
While direct sampling is formally efficient -- and preferable to Markov-chain Monte Carlo approaches as it does not require the choice/design of transition moves and is free from the overhead of equilibration and decorrelation -- its practical usefulness depends on the amount of time required to draw a sample and the number of required samples. For a uniform distribution over $K$ basis states, the average number of required samples is $M = O(K \log K)$ by the coupon collector theorem, whereas for a non-uniform distribution the average number of samples is determined by the tails of the probability distribution.

Unlike sparse iterative solvers, DMRG does not rely on sparsity of the target state.
However, a $K$-sparse state in the computational basis can be represented exactly by an MPS of bond dimension $K$ regardless of the geometry, meaning that DMRG does always have the expressivity to solve a sparse ground state problem.
Its limitations are determined by the entanglement of the target states as well as the spectral structure of the Hamiltonian as a whole, which together determine the rate of convergence.
Therefore, achieving hardness for DMRG for a sparse ground state problem will likely require making the optimization challenging, setting the sparsity large enough that the algorithm is impractical even if the bond dimension is formally polynomial, or some combination of the two.

\subsection{Direct tensor network simulation with belief propagation} \label{sapp:tensornetwork}

If the assumptions that guarantee SKQD convergence are satisfied, one classical strategy is to replace the quantum sampling subroutine in \Cref{alg:SKQD} with a classical simulation of it, thereby replacing the hybrid quantum-classical workflow of SKQD with an entirely classical one. Here, we are focused on problem sizes out-of-reach for exact statevector simulation, and thus turn our attention to approximate methods. One of the most successful approximate methods for simulating quantum circuits relies on tensor networks, which provide an efficient representation for states with low entanglement. 

Due to their conceptual simplicity and the fact that they can be efficiently contracted, matrix product states (MPS) have historically served as the primary method for tensor network-based classical simulations of quantum circuits~\cite{Banuls_SimulationManyqubit_2006, Cirac_MatrixProduct_2021, Zhou_WhatLimits_2020, Napp_EfficientClassical_2022}. However, they are most naturally suited to one-dimensional geometries, and are inherently limited in their ability to efficiently represent even area-law entangled states on higher-dimensional graphs~\cite{Stoudenmire_StudyingTwoDimensional_2012, kim2023evidence, King_BeyondclassicalComputation_2025}, such as the two-dimensional heavy-hex geometries considered here.

In recent years, this limitation has motivated a shift toward more general tensor network representations that faithfully mirror the geometry of the underlying qubit connectivity graph. A key challenge for such techniques is the computational cost of exact contraction, which is in general \#P-hard ~\cite{Schuch_ComputationalComplexity_2007}. Recent progress has addressed this challenge through the use of approximate contraction algorithms based on message-passing and belief propagation (BP) techniques ~\cite{Alkabetz_TensorNetworks_2021, Tindall_GaugingTensor_2023, Wang_TensorNetwork_2024}. In this approach, the tensor network is interpreted as a graphical model, and local ``messages'' are iteratively exchanged along the network edges to approximate marginal distributions or effective environments. While this comes with no guarantee of convergence on loopy graphs, empirical results have demonstrated that it can provide accurate and scalable approximations for a wide class of quantum circuits of practical interest. Recent successes include the simulation of kicked-Ising circuits on heavy-hex geometries~\cite{Tindall_EfficientTensor_2024}, dynamics of spin glasses in two- and three-dimensional lattices~\cite{Tindall_DynamicsDisordered_2025}, and sample-based diagonalization routines similar to the method studied in this work~\cite{rudolphSimulatingSamplingQuantum2025}.

In the present context, we are interested in leveraging tensor networks to classically simulate and sample from the quantum circuits underlying SKQD. For a qubit connectivity graph with worst-case coordination number $z$ (i.e., for $z=3$ for heavy-hex and $z=4$ for square lattices), the dominant computational costs are as follows~\cite{rudolphSimulatingSamplingQuantum2025}:
\begin{itemize}
    \item Following the application of a two-qubit gate, the corresponding bond is truncated to a user-specified maximum bond dimension $\chi$. This truncation is performed via singular value decomposition (SVD), incurring a cost of $O(\chi^{z+1})$ per two-qubit gate.
    \item Message tensors are updated using BP with a time complexity of $O(n \chi^{z+1})$ per update, where $n$ is the number of qubits. As discussed in Ref.~\cite{Tindall_EfficientTensor_2024}, more frequent updates improves the optimality of SVD truncation at the expense of increased computational cost. Updating messages after each layer of non-overlapping gates provides a practical compromise.
    \item Sampling: using the boundary MPS method of Ref.~\cite{rudolphSimulatingSamplingQuantum2025}, $M$ samples can be taken with time complexity $O(n \chi^{z+1} R^3) + O(M n \chi^{z+1}R^3)$ for heavy-hex ($z=3$) and square ($z=4$) lattices, where $R$ is the rank of the boundary MPS.
\end{itemize}

In addition to the above temporal costs, the memory requirements for a tensor network scale as $O(n \chi^z)$.

\section{Hamiltonian construction}
\label{sec:construction}

\begin{figure*}[t]
    \centering
    \includegraphics[width=0.6\linewidth]{figures/hamiltonian_construction_schematic.pdf}
    \caption{(duplicate of \Cref{fig:hamiltonian_construction_schematic} in the main text) Sketch of the general Hamiltonian construction. The qubit layout is partitioned into disjoint patches. Within each patch $P_i$, the set $S_0$ of support configurations is coupled to a set $S_1$ of first-order configurations, which are then coupled to the remaining configurations of the qubits in the patch. This defines the patch Hamiltonian $H_{P_i}$. The patches are themselves coupled by $H_\mathrm{coupling}$.}
    \label{fig:hamiltonian_construction_schematic_app}
\end{figure*}

The problem we would like to tackle using SKQD and the classical solvers discussed in \Cref{sec:classical_approaches} is the guided sparse ground state problem defined in Problem~\ref{prob:guided_sparse_ground_state_problem}.
In this section, we describe a general set of steps for constructing guided sparse ground state problems, while also making them challenging for classical selected configuration interaction.
We then give two specific examples of their realization.

\subsection{General construction}
\label{sec:general_construction}

A sketch of the steps detailed in this section is given in \Cref{fig:hamiltonian_construction_schematic_app}, which is repeated from the main text for ease of reference.
Our general framework for constructing Hamiltonians is as follows.
\begin{algorithm}[H]
\caption{Hamiltonian construction schematic}\label{alg:gen_const}
\begin{algorithmic}[1]
    \Require $n, G$ \Comment{number of qubits, qubit layout}
    \Require $n_\mathrm{patch}$ \Comment{number of patches}
    \State $\{P_i\}_{i=0,1,...,n_\mathrm{patch}-1} \gets$ partition $G$ into equal size patches (leftover ``padding'' qubits are permitted if $n/n_\mathrm{patch}\notin\mathbb{Z}$)
    \State $H_P \gets$ local Hamiltonian defined on each $P_i$
    \State $|\psi_0\rangle \gets$ nondegenerate lowest eigenvector of $H_P$
    \State $H \gets \sum_{i=0}^{n_\mathrm{patch}-1}H_{P_i}$ \Comment{combine patches}
    \State $|\psi_0\rangle^{\otimes n_\mathrm{patch}}$ is nondegenerate lowest eigenvector of $H$
    \State Define positive semidefinite $H_\mathrm{coupling}$ on $G$, such that $H_\mathrm{coupling}|\psi_0\rangle^{\otimes n_\mathrm{patch}}=0$
    \State $H \gets H + H_\mathrm{coupling}$ \Comment{couple patches}
\end{algorithmic}
\end{algorithm}

In Step 4, the notation $H_{P_i}$ denotes the operator $H_P$ acting on the specific patch $P_i$.
The first component of our Hamiltonian is thus a sum of several local ``patch'' Hamiltonians defined on a partition of the qubits.
Its ground state is therefore the tensor product of the ground states of the local patches.
We then add coupling between the patches that preserves the ground state, so that excited states are interacting between the patches, but the ground state remains a product state.
Thus, if the patch ground state $|\psi_0\rangle$ has sparsity $s_\mathrm{patch}$, the sparsity of the global ground state is $(s_\mathrm{patch})^{n_\mathrm{patch}}$.
Furthermore, for the initial reference state, we will choose an element of the support of the patch ground state, and the global initial reference state will be the tensor product of those local elements.
Thus, if the local initial state overlap is $\gamma_\mathrm{patch}$, then the global initial state overlap is $(\gamma_\mathrm{patch})^{n_\mathrm{patch}}$.
The sparsity and initial state overlap are two of the three conditions required by the guided sparse ground state problem (Problem~\ref{prob:guided_sparse_ground_state_problem}).
The third, the spectral gap, will be controlled by both the patch Hamiltonians and the coupling.

Both the numbers and sizes of the patches are parameters that we will discuss further in the examples in \Cref{v36} and \Cref{v57}, but for example in the actual instance we use for our experimental demonstrations, $n_\mathrm{patch}=3$ and each patch contains $\lfloor n/3\rfloor$ of the $n$ qubits.
We next describe how to construct the patch Hamiltonians, then finally describe how to couple them.

\begin{algorithm}[H]
\caption{Patch Hamiltonian construction}\label{alg:patch_const}
\begin{algorithmic}[1]
    \Require $P_i$ \Comment{patch layout}
    \State $S_0 \gets$ set of configurations \Comment{local support of ground state}
    \State $x_0 \gets [S_0]_0$ \Comment{initial configuration (guiding state)}
    \State $\Pi_{S} \gets$ projector onto arbitrary set $S$ of configurations
    \State Choose $H_{S_0}$ \Comment{Hamiltonian restricted to $S_0$}
    \State Fix local terms in $H_P$ such that $\Pi_{S_0}H_P\Pi_{S_0}=H_{S_0}$
    \State $|\psi_0\rangle \gets$ lowest eigenvector of $H_{S_0}$
    \State $S_1 \gets$ set of configurations, $S_1\cap S_0=\emptyset$ \Comment{configurations coupled to $S_0$ at first-order}
    \State Choose $H_{S_0\rightarrow S_1}$ such that $H_{S_0\rightarrow S_1}|\psi_0\rangle=0$ \Comment{transition matrix}
    \State Add local terms to $H_P$ such that $\Pi_{S_1}H_P\Pi_{S_0}=H_{S_0\rightarrow S_1}$
    \State Add local terms acting on $S_1$ such that $|\psi_0\rangle$ is ground state of $H_P$
\end{algorithmic}
\end{algorithm}

A number of these steps require more detailed comment.
First, one of the configurations in $S_0$ (the set of support configurations) is designated as $x_0$, the initial reference state.
Without loss of generality, we fix it to be the zeroth element of $S_0$.
Next, we want to implement Steps 1-5 such that the local terms in $H_P$ up to this point do not couple $S_0$ to any additional configurations, so that $H_{S_0}$ is a block in $H_P$.

\begin{figure*}[t]
    \centering
    \includegraphics[width=\linewidth]{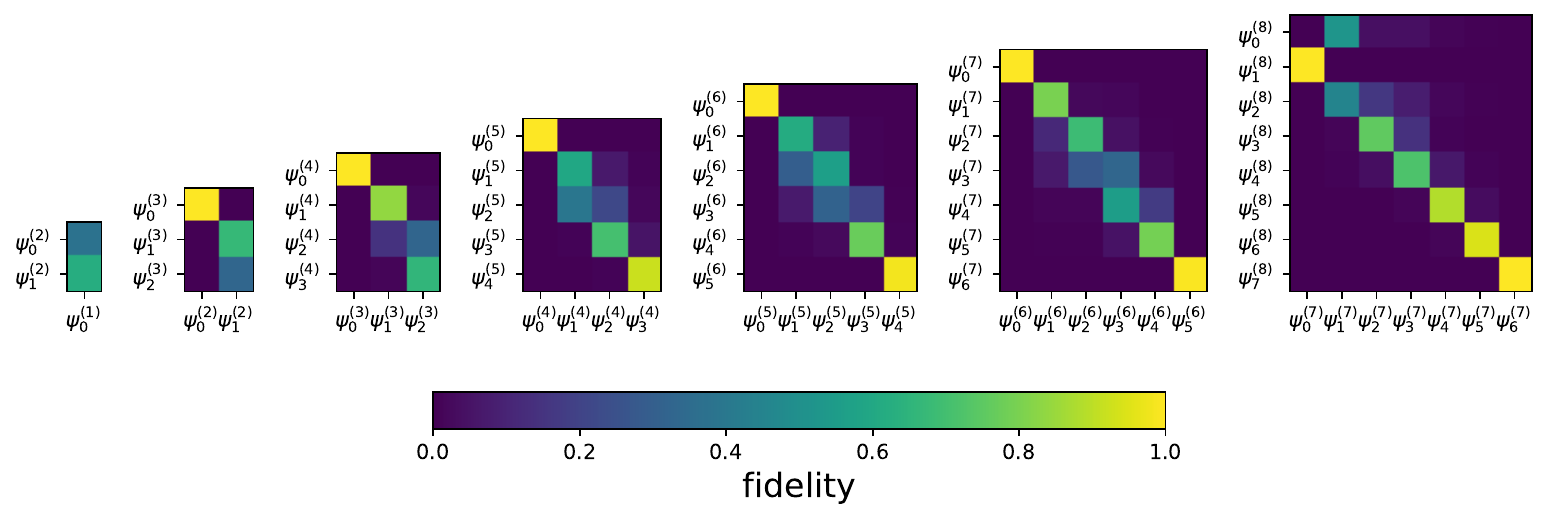}\vspace{-0.1in}\\
    ~\vspace{0.1in}(a)\\
    \includegraphics[width=0.4\linewidth]{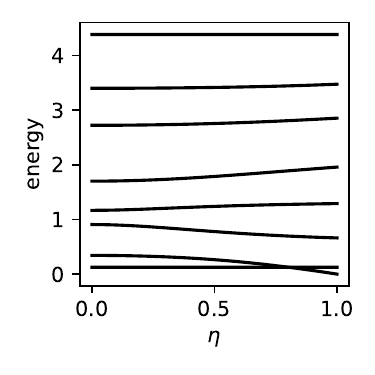}\\
    (b)
    \caption{(a) The sequence of fidelities between all eigenvectors of the $i$th and $(i+1)$th principal submatrices of $H_{S_0}$ \eqref{v36_base}, for each $i=1,2,...,7$. The $j$th eigenvector of the $i$th principal submatrix is labeled $\psi_j^{(i)}$, with $j$ in order of increasing energy. This illustrates that from dimensions 2 to 7, the ground eigenvector remains the same, while from dimension $7$ to $8$ it becomes the first excited state due to the level crossing, which appears visually as displacement of the main nonzero fidelities off of the diagonal in the upper-left of the largest heatmap. The final ground state (whose fidelities are given by the first row in the rightmost heatmap) therefore has no overlap with the previous ground state; it has only $\sim62\%$ fidelity with the entire previous basis (of which $\sim52\%$ is with previous first excited state) so it is significantly different in character from any state that precedes it in iteration. b) The level crossing that is responsible for this new character, made explicit by continuously turning on the final off-diagonal row and column in $H_{S_0}$ via a parameter $\eta$ (i.e., replacing the $1$s in entries $(6,7)$ and $(7,6)$ in \eqref{v36_base} with $\eta$). Each curve is the evolution of one energy eigenvalue with respect to $\eta$. The level crossing between the lowest two energies is apparent around $\eta=0.8$.}
    \label{fig:phase_transition}
\end{figure*}

In Step 5, we choose $H_{S_0}$, which plays the key role of determining how challenging it is for iterative solvers to traverse the space of support configurations.
In order to cause methods based on first-order perturbation theory around the initial configuration (such as CIPSI) to struggle, we want to construct $H_{S_0}$ such that its lowest energy state has a qualitatively different character than that of any of its principal submatrices.

For example, if $|S_0|=8$, the matrix 
\begin{equation}
\label{v36_base}
    H_{S_0}=
    \begin{pmatrix}
        a+\frac{1}{2}&1&b&0&0&0&0&0 \\
        1&a&c&0&0&0&0&0 \\
        b&c&a+2&1&0&0&0&0 \\
        0&0&1&a+2&1&0&0&0 \\
        0&0&0&1&a+1&1&0&0 \\
        0&0&0&0&1&a+1&1&0 \\
        0&0&0&0&0&1&a+1&1 \\
        0&0&0&0&0&0&1&a
    \end{pmatrix}
\end{equation}
with
\begin{equation}
\begin{split}
    &a = 0.90694271, \\
    &b = -0.78820544, \\
    &c = -0.61541221,
\end{split}
\end{equation}
is a good choice.
These parameters were selected in order to yield a particular set of properties that we now discuss.
The ground state energy of $H_{S_0}$ is nondegenerate, and its ground state is supported on all $8$ configurations; $a$ is a diagonal shift that is selected to set the ground state energy to zero for convenience.
The squared overlap of the ground state with the $|0\rangle$ configuration is $3.24\times10^{-4}$.
Recall that $|0\rangle$ will be the initial configuration that forms the guiding state provided in the problem statement.
The size of the spectral range is $4.38$, and the spectral gap is $0.126$ (which will lead to a gap for the global ground state as well once we glue the patches together, as required by Problem~\ref{prob:guided_sparse_ground_state_problem}, Property 1).

Beyond these general properties, $H_{S_0}$ is banded, which means that any iterative solver has to progress through the configurations in order, with the exception of the allowed transition from $|0\rangle$ to $|2\rangle$.
Any iterative solver therefore requires at least $6$ steps to find all $8$ configurations.
The energy penalty for only finding the first $7$ out of $8$ configurations is $0.126$.

More importantly, let $|\psi^{(i)}_0\rangle$ denote the lowest energy state of the $i\times i$ leading principal submatrix of $H_{S_0}$.
We refer to $|\psi^{(i)}_0\rangle$ as a \emph{partial ground state}.
For ${i=2,3,...,7}$, $|\psi^{(i)}_0\rangle$ is supported \emph{only} on the first two configurations: for example
\begin{equation}
    |\psi^{(7)}_0\rangle=\begin{pmatrix}-c\\b\\0\\0\\0\\0\\0\end{pmatrix}.
\end{equation}
The full ground state of $H_{S_0}$, on the other hand, is supported on all $8$ support configurations, as noted above:
\begin{equation}
    |\psi_0\rangle \coloneqq |\psi^{(8)}_0\rangle
    \approx \begin{pmatrix}
        -0.018 \\
        -0.014 \\
        -0.049 \\
        0.119 \\
        -0.298 \\
        0.449 \\
        -0.559 \\
        0.616
    \end{pmatrix}.
\end{equation}
This presents a challenge for SCI heuristics based on first-order perturbation theory: consider for example the numerator of the cost function \eqref{eq:perturbative_cf_app}, which is the magnitude of the transition matrix element between the previous partial ground state and the new proposed configuration.
Since configurations $|3\rangle$ through $|7\rangle$ (indexing from $|0\rangle$) only have nonzero matrix elements with their neighbors, and all partial ground states from $3$ to $7$ end in amplitude $0$, the numerator of \eqref{eq:perturbative_cf_app} is
\begin{equation}
    \langle i| H_P |\psi^{(i)}_0\rangle = 0
\end{equation}
for any $i=3,4,...,7$.
This means that, even though that configuration will be proposed when the pool is expanded at the beginning of the iteration, it will then be discarded in the truncation step of the iteration, for any positive value of the threshold.
Once all $8$ configurations have been found, a level crossing occurs, permitting the full ground state to be supported on all $8$ configurations; this is illustrated in \Cref{fig:phase_transition}.

Having established our choice of $H_{S_0}$, the remaining step that requires comment is 9, in which the only requirement on the local terms we add to enforce $\Pi_{S_1}H_P\Pi_{S_0}=H_{S_0\rightarrow S_1}$ is that they should not couple $S_0$ to any other configurations beyond those in $S_1$.
Let $\lambda_0$ be the lowest eigenvalue of $H_{S_0}$.
After Step 9, we have
\begin{equation}
\begin{split}
    H_{S_0\cup S_1}
    \coloneqq\,&(\Pi_{S_0}+\Pi_{S_1})H_P(\Pi_{S_0}+\Pi_{S_1}) \\
    =\,&
    \begin{bmatrix}
        H_{S_0} & H_{S_0\rightarrow S_1}^\dagger \\
        H_{S_0\rightarrow S_1} & H_{S_1}
    \end{bmatrix},
\end{split}
\end{equation}
where $H_{S_1}$ is some Hermitian matrix determined by the action of the terms we have already chosen within $S_1$.
By construction,
\begin{equation}
    H_{S_0\cup S_1}\begin{bmatrix}
        |\psi_0\rangle \\
        \bf{0}
    \end{bmatrix}
    =
    \begin{bmatrix}
        H_{S_0}|\psi_0\rangle+ 0 \\
        H_{S_0\rightarrow S_1}|\psi_0\rangle + 0
    \end{bmatrix}
    =
    \lambda_0\begin{bmatrix}
        |\psi_0\rangle \\
        \bf{0}
    \end{bmatrix},
\end{equation}
i.e. $\begin{bmatrix}|\psi_0\rangle\\\bf{0}\end{bmatrix}$ is an eigenvector of $H_{S_0\cup S_1}$ with eigenvalue $\lambda_0$.

Furthermore, because we required that the local terms we added up to this point only couple $S_0$ to $S_0\cup S_1$, the above means that $\begin{bmatrix}|\psi_0\rangle\\\bf{0}\end{bmatrix}$ is in fact an eigenvector of $H_P$ with eigenvalue $\lambda_0$.
$S_1$ may indeed be coupled to additional configurations outside of $S_0\cup S_1$ by the existing local terms, but $\begin{bmatrix}|\psi_0\rangle\\\bf{0}\end{bmatrix}$ is only supported on $S_0$, so the submatrix $H_{S_0\cup S_1}$ entirely captures the action of $H_P$ on it.
We then adjust the coupling strengths defined by $H_{S_0\rightarrow S_1}$, and add terms that act only on $S_1$ if necessary, to ensure that $\begin{bmatrix}|\psi_0\rangle\\\bf{0}\end{bmatrix}$ is not only an eigenstate, but the ground state of $H_P$.
Examples of how to accomplish this are given in the following subsections.
This completes the general construction of the patch Hamiltonians.

In Step 4 of \Cref{alg:gen_const}, we build up a larger Hamiltonian as the sum of the local patch Hamiltonians.
Provided there are no additional qubits not included in the patches, this guarantees that
\begin{equation}
\label{eq:global_gs}
    |\Psi_0\rangle \coloneqq |\psi_0\rangle^{\otimes n_\mathrm{patch}}
\end{equation}
is the nondegenerate global ground state.
However, we do not want to stop here, since this would be a noninteracting Hamiltonian that can easily be solved by separately solving uncoupled local patches.
To remedy this, we define a positive semidefinite (p.s.d.) local coupling Hamiltonian $H_\mathrm{coupling}$ such that $H_\mathrm{coupling}|\Psi_0\rangle=0$.
This guarantees that $|\Psi_0\rangle$ remains the global ground state, since
\begin{equation}
    (H+H_\mathrm{coupling})|\Psi_0\rangle = H|\Psi_0\rangle,
\end{equation}
and the requirement that $H_\mathrm{coupling}$ be p.s.d. guarantees that it cannot shift any other eigenenergies below that of $|\Psi_0\rangle$.  

Our other goal in the construction of $H_\mathrm{coupling}$ is to cause the number of nonzero entries in a general sparse vector to grow exponentially with respect to the number of multiplications by $H$.
We will see examples of how to accomplish this below.

One may ask why, since by definition $H_\mathrm{coupling}$ has $|\Psi_0\rangle$ as a ground state, we do not simply use $H_\mathrm{coupling}$ as our full Hamiltonian.
The reason is that in practice, $H_\mathrm{coupling}$ may be highly degenerate, and as we will see in the specific instantiations below, it is easier to satisfy our other desired properties given this flexibility.
We now turn to two instantiations of this construction.

\subsection{Warmup example}
\label{v36}

We begin by giving the construction for $H_P$ as in \Cref{alg:patch_const}.
We let $S_0$ be all eight configurations on three qubits.
This enables us to choose $H_{S_0}$ to be the matrix defined in \eqref{v36_base}, which we can realize by decomposing it directly into (up to) three-qubit Pauli operators.

We then let $S_1$ be the same eight configurations on three qubits, now tensored into $|1\rangle$ on one additional qubit, which we call the \emph{coupling qubit} and label $c_i$ for the $i$th patch.
We choose $H_{S_0\rightarrow S_1}=m_1H_{S_0}$ for $|m_1|\le1$.
Since $|\psi_0\rangle$ is the lowest eigenvector of $H_{S_0}$, and the lowest eigenvalue of $H_{S_0}$ is $0$, $H_{S_0\rightarrow S_1}|\psi_0\rangle=0$ as desired.
We realize $H_{S_0\rightarrow S_1}$ as
\begin{equation}
    \begin{bmatrix}
        0 & H_{S_0\rightarrow S_1}^\dagger \\
        H_{S_0\rightarrow S_1} & 0
    \end{bmatrix}
    =
    \begin{bmatrix}
        0 & m_1H_{S_0} \\
        m_1H_{S_0} & 0
    \end{bmatrix}
    =
    m_1X\otimes H_{S_0},
\end{equation}
where the blocks in the block matrices correspond to the states of the coupling qubit.
Up to this point, $S_0\cup S_1$ is not yet coupled to any additional configurations.

So far, we have
\begin{equation}
    H_P
    =
    \begin{bmatrix}
        H_{S_0} & m_1H_{S_0} \\
        m_1H_{S_0} & H_{S_0}
    \end{bmatrix}
    =(I+m_1X)\otimes H_{S_0}.
\end{equation}
Since $H_{S_0}$ is p.s.d. by construction, and $I+m_1X$ is also p.s.d. as long as $|m_1|\le1$, their tensor product is p.s.d.
Therefore, $\begin{bmatrix}|\psi_0\rangle\\\bf{0}\end{bmatrix}$ is already a ground state since its eigenvalue is $0$.
However, it is degenerate with $\begin{bmatrix}\bf{0}\\|\psi_0\rangle\end{bmatrix}$.
We can fix this by adding $m_2(I-Z)$ (for some $m_2>0$) acting on the coupling qubit to break the degeneracy, i.e., choosing
\begin{equation}
    H_P
    =(I+X)\otimes H_{S_0} + m_2(I-Z)\otimes I^{\otimes 3}.
\end{equation}
This completes the patch Hamiltonian.
The patch size is four qubits, and each contributes a factor of $|S_0|=8$ to the sparsity.

To build up a large Hamiltonian as in \Cref{alg:gen_const}, we fix a lattice of qubits, then choose some local subsets of size $4$ to serve as our patches.
The ground state energy will still be zero, and the ground state will be the tensor product of the local ground states of the patches.
Unfortunately, there is now a degeneracy, since there are no terms acting on all of the remaining qubits so any of their states form a degenerate ground state when tensored into the ground states of the patches.

In order to break this degeneracy, we introduce interactions on all edges in the lattice except except those touching the patches.
We additionally introduce the same interactions on the edges from the remainder of the lattice to the coupling qubits, so that the patches are coupled to the full Hilbert space.
In other words, if we let $E$ denote the edges in the qubit layout $G$,
\begin{equation}
    H_\mathrm{coupling} = \sum_{\substack{e\in E\\\forall i,~e\cap P_i=\emptyset}}[H_\mathrm{int}]_e + \sum_{\substack{e\in E\\\exists i,~e\cap P_i=\{c_i\}}}[H_\mathrm{int}]_e,
\end{equation}
where the notation $[H_\mathrm{int}]_e$ indicates applying $H_\mathrm{int}$ on the pair of qubits $e$.
The two-qubit interaction terms we add are
\begin{equation}
    H_\mathrm{int}
    \coloneqq
    \begin{pmatrix}
        0 & 0 & 0 & 0 \\
        0 & 2.02 & 1+i & 1+i \\
        0 & 1-i & 2.02 & 1+i \\
        0 & 1-i & 1-i & 2.02
    \end{pmatrix},
\end{equation}
which is p.s.d. with eigenvalues $0, 0.02, 1.29, 4.75$.
$|00\rangle$ is the unique lowest eigenvector, so by applying $H_\text{int}$ to every edge not in the patch as well as the edges to the coupling qubit, we ensure that the ground state everywhere other than the patches is $|00...0\rangle$.
Thus, the full, nondegenerate ground state is the tensor product of ground states of the patches, and $|0\rangle$ on all other qubits.
Note that this actually makes the degeneracy-breaking term above unnecessary, so we can set $m_2=0$.

A variety of classical techniques can solve for the ground state of this Hamiltonian.
Since the ground state is a product of the local ground states of the patches and the patches each comprise four qubits, it is frustration-free with respect to a grouping of the Pauli terms into subsets of size at most four.
Therefore, one can identify the patches by inspection, diagonalize their local Hamiltonians, and note that on all the remaining edges the ground state is $|00\rangle$, thus solving the Hamiltonian essentially by hand if the general construction above is known.
DMRG also solves for the ground state, typically with bond dimension 8 since that is the sparsity of the local patch ground states and these are unentangled from each other.

On the other hand, SCI methods can be induced to fail with appropriate choices of parameters in the construction above.
As discussed above, using the matrix \eqref{v36_base} as the Hamiltonian of the patches causes the CIPSI heuristic to fail since the lowest-energy states obtained from projecting onto subsets of the support configurations are qualitatively different from the ground state once all support configurations are found.
Other SCI heuristics are found to perform still worse in practice.
We do not go into further detail since our main focus will be on the construction we describe next in \Cref{v57}.
The motivation for this second construction is to induce longer-range correlations in the sparse ground state and avoid the local frustration-free nature of the ground state.
This will alleviate the concern that SKQD might succeed for the present example simply due to the frustration-free ground state and very short-range correlations in the entire algorithm, which is not a convincing way to achieve an advantage over iterative solvers.

\subsection{Main example}
\label{v57}

We again begin by giving the construction for $H_P$ as in \Cref{alg:patch_const}.
Let $S_0$ be all eight Hamming-weight-$1$ configurations on $8$ qubits.
Assume that they have linear connectivity, but are separated by $2$ along the line, i.e. we leave every other qubit along the line unused, so that the whole line has length $16$.

Once again, choose $H_{S_0}$ to be the matrix defined in \eqref{v36_base}.
Each offdiagonal matrix element corresponds to a partial swap between next-nearest-neighbors along the line of qubits.
The single transition matrix element outside the tridiagonal (the $(0,2)/(2,0)$ matrix element) corresponds to a partial swap between qubits separated by distance $4$ on the line.
All of these partial swaps can be decomposed into Pauli operators.

Let $S_1$ be the eight Hamming-weight-$1$ configurations on the remaining $8$ qubits in the line, which interleave the qubits defining $S_0$.
Like in \Cref{v36}, choose
\begin{equation}
\label{m1_def}
    H_{S_0\rightarrow S_1}=m_1H_{S_0}
\end{equation}
The same comments on the lowest eigenstate apply.
The transition matrix elements between $S_0$ and $S_1$ can be realized by yet more partial swaps.
The most distant partial swap is between qubit $0$ in $S_0$ (which is also qubit $0$ in the line) and qubit $2$ in $S_1$ (which is qubit $5$ in the line).
Therefore, we have this single length $5$ interaction, a few length $4$ interactions, and many of length $3$ and shorter.

 We can break the degeneracy of $H_{S_0\cup S_1}$ in formally the same way as in \Cref{v36}.
In the present construction, this requires adding
\begin{equation}
\label{m2_def}
    \Pi\coloneqq m_2(I-Z)
\end{equation}
on all of the qubits defining $S_1$.

Our patch size is $16$ qubits.
So to build up a large Hamiltonian, we do not need to pad with many additional qubits, if any.
For example, if we want a Hamiltonian with $48$ qubits, we just need to find an embedding of $3$ length-$16$ lines into our desired physical layout, and this will yield a Hamiltonian whose ground state has sparsity $8^3=512$.
The corresponding ground state will be a superposition of all Hamming-weight-$1$ configurations on the $S_0$ qubits, tensored over all patches, and tensored with $|0\rangle$ on all other qubits.

We want to break degeneracies due to any additional padding qubits, and to couple the patches.
To do this, we add the following interaction on all edges in our physical lattice that start with a qubit defined by one of the copies of $S_1$:
\begin{equation}
\label{j1_def}
    H_\mathrm{int}
    \coloneqq
    j_1\times\begin{pmatrix}
        0 & 0 & 0 & 0 \\
        0 & 1.01 & 0 & 1 \\
        0 & 0 & 0 & 0 \\
        0 & 1 & 0 & 1.01
    \end{pmatrix}.
\end{equation}
In other words, we choose
\begin{equation}
    H_\mathrm{coupling} = \sum_{\substack{e\in E\\\exists i,~|e\cap [S_1]_i|>0}}[H_\mathrm{int}]_e,
\end{equation}
where $[S_1]_i$ denotes the set of qubits defined by $S_1$ in the $i$th patch.
We can think of $H_\mathrm{int}$ as a p.s.d. version of a CNOT: controlled on the state of the qubit in $[S_1]_i$, it flips the other qubit.
Since the qubits in $[S_1]_i$ are all $|0\rangle$ in the support configurations, this interaction does not act on the support configurations, but it couples the configurations in $S_1$ to configurations with higher Hamming weight, as well as coupling those configurations to each other.
This yields the desired exponential growth with respect to number of multiplications by the Hamiltonian.

Finally, to obfuscate the structure of the support configurations, we conjugate the Hamiltonian by a layer of single-qubit $X$ gates, applied with independent $50\%$ probability on each qubit.
This maps the support configurations, which all have Hamming weight $n_\mathrm{patch}$ in the unconjugated picture, to configurations with random Hamming weight in the conjugated picture.

Some comments are in order.
First, the Hamiltonian is constructed entirely of one- and two-qubit terms, acting on qubits separated by constant distance within a physical layout.
Second, while the global ground state is still a product state, it is now a product of $8$-sparse states of $8$ qubits each.
So for example, if three patches are used, in principle if one could identify the 24 qubits corresponding to the support configurations in the patches, then one could diagonalize the induced 24-qubit Hamiltonian by brute force and thus solve the problem.
We do not rule out `smart' approaches of this kind, since as we will see in the specific instantiation given below in \Cref{sec:setup}, they are not the only way to solve this Hamiltonian.
However, the concern that the ground state is transparently frustration-free on small local patches is alleviated.

\section{Overview of experiments and results}
\label{sec:results}

\subsection{Instantiating the guided sparse ground state problem}
\label{sec:setup}

As discussed in the main text, we provide three inputs to our classical and quantum solvers:
\begin{enumerate}
    \item A representation of the Hamiltonian $H$ in the Pauli basis.
    \item A single initial configuration $|x_0\rangle$.
    \item A circuit defining a single Trotter step under $H$.
\end{enumerate}
The first two inputs are the only strictly required components, as defined in Problem~\ref{prob:guided_sparse_ground_state_problem}.
However, using advanced quantum circuit transpilation techniques to obtain low-depth Trotter step circuits does reveal some further structure of $H$ beyond $|x_0\rangle$, e.g., a good mapping to a two-dimensional qubit lattice.
We consider it best to share this component of the experimental optimization as well, since it can potentially be used by classical adversaries.

The Hamiltonian we use is the one described in \Cref{v57}.
The specific parameters are as follows:
We build the Hamiltonian on top of a heavy-hex lattice containing 6 hexes in the configuration shown in \Cref{fig:v57_layout}, which is repeated from the main text for ease of reference.
This contains a total of 49 qubits.
Each patch requires 16 qubits and contributes a factor of 8 to the sparsity of the ground state, so we used 3 patches for a sparsity of 512.
This leaves a single padding qubit.
We choose the parameters
\begin{equation}
    m_1=0.1,\quad m_2=0.01,\quad j_1=1,
\end{equation}
defined in \eqref{m1_def}, \eqref{m2_def}, and \eqref{j1_def}, based on a sweep to identify regions of classical hardness.
The Hamiltonian and the corresponding initial configuration are available at~\footnote{\url{https://github.com/quantum-advantage-tracker/quantum-advantage-tracker.github.io/tree/main/data/variational-problems/hamiltonians/guided_sparse_ground_state_problem/49Q_v1}}.
To give some further sense of it here, in \Cref{fig:pauli_details} we give histograms of the weights (numbers of qubits acted upon) and maximum distances between qubits acted upon (in the layout in \Cref{fig:v57_layout}) over all Pauli terms in the Hamiltonian; it contains 419 Pauli terms in total.

\begin{figure}[ht]
    \centering
    \includegraphics[width=\linewidth]{figures/v57_layout.pdf}
    \caption{(duplicate of \Cref{fig:v57_layout} in the main text) Layout of the patches defining our Hamiltonian within a heavy-hex graph comprising six hexes. The paths defining each patch are shown by the red arrows. As discussed in \Cref{v57}, patches are defined by paths with even-indexed qubits corresponding to the support configurations: in this figure, the directed edges point in the direction of increasing index. The edges in the physical qubit layout are shown in light blue wherever they do not overlap with the path edges. Qubit $48$ in the upper right corner is a padding qubit, to complete the sixth hex.}
    \label{fig:v57_layout_app}
\end{figure}

\begin{figure}[ht]
    \centering
    \includegraphics[width=\linewidth]{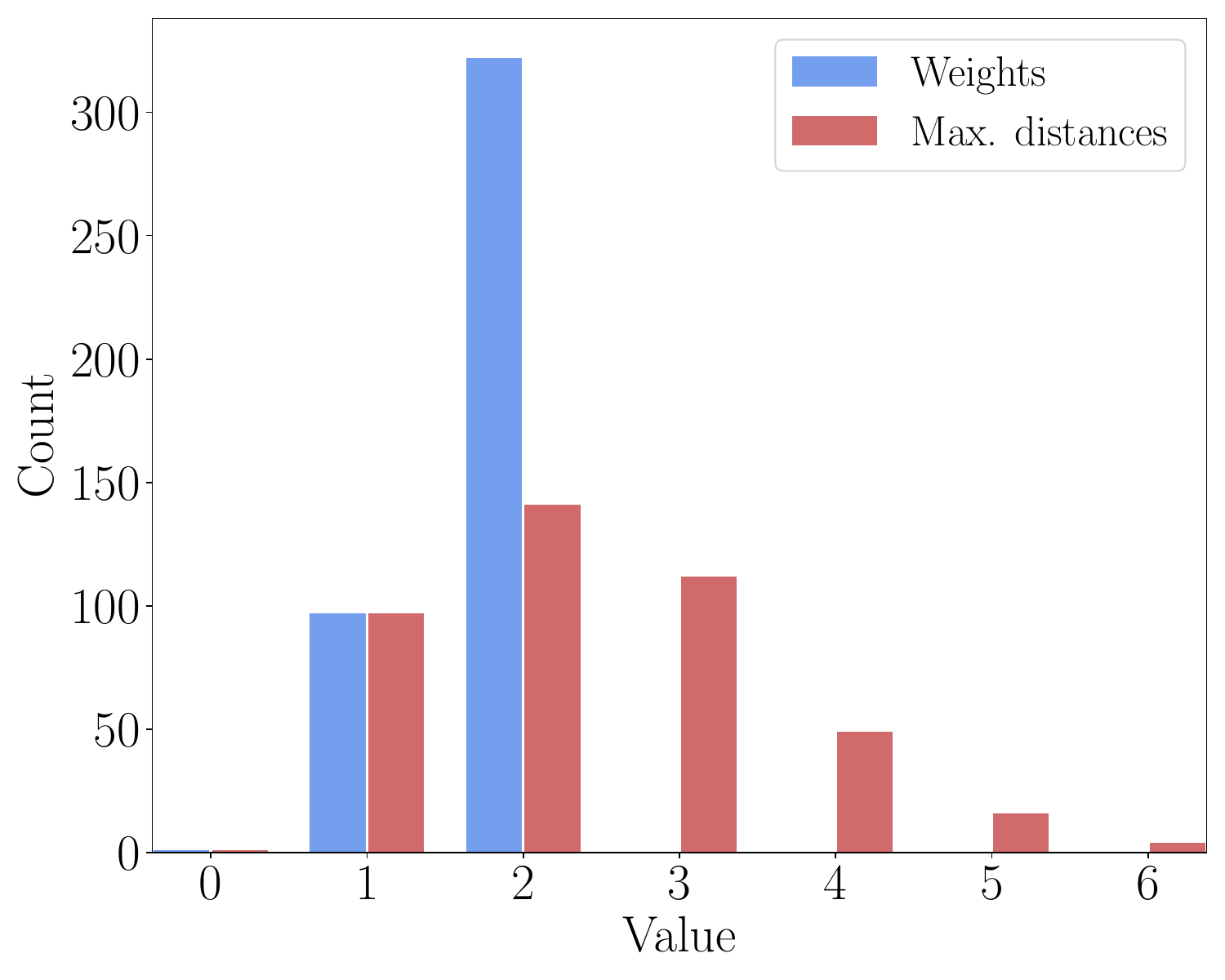}
    \caption{Weights and maximum distances within the qubit layout of Pauli terms in the Hamiltonian. Weight refers to the number of qubits the Pauli term acts upon. Maximum distance refers to the longest length of the minimum path between any pair of qubits the Pauli term acts upon, with respect to the qubit layout shown in \Cref{fig:v57_layout_app}. These distances are given as the number of qubits traversed, including the first and last qubits, so for example a Pauli term acting only on a pair of neighboring qubits would have maximum distance two, while a Pauli acting on three qubits connected along a line would have maximum distance three.}
    \label{fig:pauli_details}
\end{figure}

\subsection{Quantum experiment}
\label{sec:quantum_expt}

We apply the SKQD algorithm as discussed in \Cref{sec:skqd_intro}.
This requires sampling from time-evolution circuits generated the Hamiltonian.
We use two main techniques to approximate and transpile these.

First, we employ Rustiq~\cite{martiel2025rustiq,debrugiere2024faster}, a software package that compiles arbitrary linear combinations of Pauli operators into Trotter steps constructed from local Pauli rotation gates on a specified layout.
See \Cref{app:rustiq_transpilation} for details.
The resulting circuit is a parameterized first-order Trotter step $T_1(\Delta t)$ with a time-independent Clifford correction $C_1$.

To construct a full first-order Trotter step for time $\Delta t$, we implement $T_1(\Delta t)C_1$ for a total CZ count 1443 and depth 111.
However, we can also construct a second-order Trotter step for time $2\Delta t$ as
\begin{equation}
\label{second_order_step}
    T_1(\Delta t)C_1[T_1(-\Delta t)C_1]^\dagger = T_1(\Delta t)[T_1(-\Delta t)]^\dagger.
\end{equation}
The gate costs in terms of CZ count 1568 and depth 113 are closer to a single first-order step than to two of them, since the intermediate Clifford corrections cancel as shown in \eqref{second_order_step}.

Motivated by this observation, we employ the following hybrid Trotterization strategy.
For even numbers of time steps, we use second-order Trotterization with each second-order step being used to implement two timesteps, i.e. we implement
\begin{equation}
    U(2k\Delta t)=\left(T_1(\Delta t)[T_1(-\Delta t)]^\dagger\right)^{k}.
\end{equation}
For odd numbers of timesteps, we use second-order Trotterization for all but the last timestep, as above, followed by a single first-order timestep:
\begin{equation}
    U\big((2k+1)\Delta t\big)=\left(T_1(\Delta t)[T_1(-\Delta t)]^\dagger\right)^{k}T_1(\Delta t)C_1.
\end{equation}

On existing quantum hardware one can execute in practice a few Trotter steps.
From our tensor network simulations of the Trotterized circuits (see ~\Cref{app:tn_sims}), we learned that we would likely require Krylov dimensions (i.e. numbers of timesteps) somewhere in the range of 10-20 in order to sample all of the support configurations within a reasonable total number of shots, and the corresponding circuit depths and gate counts are not currently feasible.
As a way around this, we hybridize our time-evolution constructions a second time by introducing partial approximate quantum compilation.

Approximate quantum compilation (AQC) is a strategy for compressing quantum circuits into lower depth and gate count ans\"atze.
Details are given in \Cref{app:aqc_tensor}, but the results in our case are circuits of fixed CZ depth 24 that approximate timesteps 1 through 20.
These AQC circuits decay exponentially (with respect to number of timesteps) in their fidelity to the ideal states from the Trotter circuits, but they provide a means for reaching higher numbers of timesteps with at least some approximation.

Our full hybrid strategy is to implement each AQC circuit, then append 0, 1, or 2 Trotter steps using the hybrid Trotterization discussed above.
As discussed above, the motivation for this is that it combines two different types of signal decay: from the AQC we have a decay due to coherent error (we are preparing the ``wrong'' state), while from the Trotter approximations we have a decay that is well-modeled by Pauli noise channels, i.e. dominated by incoherent error.
So with this approach we end up with three formally redundant implementations of each number of timesteps: for example, for 16 timesteps we have 16-step AQC, 15-step AQC plus 1 Trotter step, and 14-step AQC plus 2 Trotter steps, which represent different combinations of coherent and incoherent errors.
The total classical computational resource usage for transpilation of the circuits was about 377 core-hours (see \Cref{app:transpilation} for details).

The only hyperparameter for SKQD other than the Krylov dimension is the timestep $\Delta t$.
In the original SKQD paper~\cite{yu2025quantum}, the theoretical analysis suggests that $\Delta t=\pi/\|H\|$ is the correct choice.
However, numerical and experimental experience from both the prior study, and the earlier stages of this project suggest that a significantly larger $\Delta t$ leads to faster convergence in practice.
For our experiment, we chose $\Delta t=25\pi/\|H\|$.
See \Cref{app:tn_sims} for details.

The experiments were run on \texttt{ibm\_boston}, a Heron r3 superconducting quantum processor with 156 qubits laid out in a heavy-hex lattice structure. A $3 \times 2$ heavy-hex layout comprising 49 qubits was selected for our sampling experiment. These 49 qubits were chosen by considering the gate errors as well as the qubit quality. In particular, we compute the estimated fidelity
\begin{equation}
    F = \prod_{g_1 \in \text{1Q gates}} (1-e_{g_2}) \prod_{g \in \text{2Q gates}} (1-p_g) \prod_{q \in \text{qubits}} (1-e_q),
\end{equation}
where $e_{g_1}$,  $e_{g_2}$, and $e_q$ correspond to single-qubit gate errors, two-qubit gate errors, and measurement errors respectively. $e_{g_1}$ and $e_{g_2}$ are estimated using randomized benchmarking (RB), with $e_{g_2}$ computed using infidelities from layer fidelity RB. The estimated fidelities are used to rank and select the best layout.

Due to the unstructured nature of the idle gaps in the circuit, we apply a structure agnostic dynamical decoupling sequence in which the $X_p X_m$ pulse is applied to all the idle spaces in the circuit.
See \Cref{app:experimental_details} for more details about the device.
The shots per Krylov dimension are determined adaptively and iteratively.
In particular, the total evolution for Krylov dimension $k$ corresponds to $k = k_{\textrm{AQC}} + k_{Q}$ where $k_{\textrm{AQC}}$ denotes the number of timesteps approximated by AQC as discussed above, and $k_Q$ denotes the number of Trotter steps that follow.
The first portion of circuit corresponding to $k_{\textrm{AQC}}$ is structured in a heavy-hex native layer comprising 324 two-qubit operations.
Besides its gate angles, this sub-circuit is independent of the value of $k_{\textrm{AQC}}$.
The post-AQC Trotter steps correspond to highly unstructured quantum circuits with odd (even) Trotter steps requiring 1119 and 1244 two-qubit gates respectively.

We began by sampling from the circuits with ${k_{\textrm{AQC}} = 1,...,15}$ and $k_Q = 0,1,2$, as an initial guess; each circuit was run for 1 million shots. Observing that we had not yet reached the exact ground state energy, we then added $k_{\textrm{AQC}} = 16,...,20$ and $k_Q = 0,1,2$, as well as increasing the number of shots (focusing on higher values of $k_{\textrm{AQC}}$ and $k_Q$), in an attempt to improve the energy further.
Ultimately, the circuits for $k_{\textrm{AQC}}=1,...,16$ with $k_Q = 0,1,2$ were adequate to reach the exact energy.
Overall, these circuits concluded with 133 million shots taken over five days (with just 97 shots across the $k_{\textrm{AQC}}=1,...,16$ subset).
The total QPU runtime was around 19 hours.
A summary of allocated shots broken down by circuit is given in \Cref{app:experimental_details}.

We employed a preliminary filter on the configurations, in which we remove any configurations in the pool that are not connected by the Hamiltonian to any other elements of the pool.
Such disconnected configurations would be eigenstates of the projected Hamiltonian, so it is unnecessary to include them in the diagonalization unless they individually have lower energy than the rest of the subspace: we assume that this is not the case.
The majority of configurations fall into this category as it turns out, so this filtering reduces the diagonalization dimension by more than an order of magnitude, leaving a dimension $\sim10.3$ million.
We may interpret this as due to the presence of noise in our measurement outcomes: many of the configurations we sampled were likely affected by at least some bitflip errors.
The Hilbert space dimension of $2^{49}\approx5.6\times10^{14}$ is so much larger than the number of distinct configurations in our original pool that even with just a few random bitflips, configurations become unlikely to be connected by the Hamiltonian to others in the pool.
This filtering therefore acts as a form of sample error-mitigation in addition to reducing the dimension of the diagonalization at no cost in accuracy.

After filtering, we projected the Hamiltonian onto the remaining pool of $\sim10.3$ million configurations, then diagonalized the resulting matrix.
This was performed on a Mac Studio with Apple M3 Ultra chip (32 cores), and required approximately 593 wall-time seconds for projection and diagonalization.
See \Cref{app:diagonalization} for details.

\subsection{Classical approaches}
\label{sec:classical_attacks}

All of the classical methods we tested on the Hamiltonian described in \Cref{sec:setup} are summarized at a high-level in \Cref{sec:classical_approaches}.
The specific strategies and parameters we employed were as follows.

For all of the iterative solvers, we performed fine-grained selection hyperparameter sweeps, with details given in~\Cref{app:classical_simulation_details}. The results reported below in \Cref{ssec:results} correspond to the lowest variational energies achieved by each method with the lowest-possible computational overhead, as measured by the largest diagonalization required (as a proxy). CIPSI, HCI, diagonal ranking, and truncated Arnoldi's method all require tuning one hyperparameter, ASCI requires tuning two hyperparameters, and TrimCi requires tuning four hyperparameters.

As for tensor network methods, four DMRG sweeps with a maximum bond dimension of 200 were enough to reach an error in the ground state energy indistinguishable from machine precision.
We additionally ran tensor network simulations of the quantum circuits, primarily as a guiding and diagnostic tool for the quantum experiments.
See Sec.~\ref{app:tn_sims} for more info.

\subsection{Results}
\label{ssec:results}

\begin{figure*}[ht]
    \centering
    \includegraphics[width=1\linewidth]{figures/energy_vs_D_traditional_legend.pdf}
    \caption{(duplicate of \Cref{fig:results_good} in the main text) \textbf{Improved accuracy of quantum (SKQD) against off-the-shelf classical SCI methods.} Comparison of the ground state energy reached for a given subspace dimension for different SCI methods, and the SKQD experiment executed on quantum hardware. Different SCI methods are shown in different panels by the hexagonal markers. The different colors show the number of iterations. The results obtained with SKQD are shown by blue diamond markers in each panel. The sequence of SQKD points is with respect to Krylov dimension.}
    \label{fig:results_good_app}
\end{figure*}

\begin{figure*}[ht]
    \centering
    \includegraphics[width=1\linewidth]{figures/energy_vs_D_succeed_legend.pdf}
    \caption{(duplicate of \Cref{fig:results_bad} in the main text) \textbf{Performance of specialized sparse iterative solvers and DMRG.} \textbf{(a)} compares the ground state energy reached for a given subspace dimension for the sparse iterative solvers introduced in this manuscript, as shown by the hexagonal markers. The results obtained with SKQD are shown by blue diamond markers in each panel. \textbf{(b)} Convergence of DMRG as a function of the number of sweeps. }
    \label{fig:results_bad_app}
\end{figure*}

\begin{figure}[ht]
    \centering
    \includegraphics[width=1\linewidth]{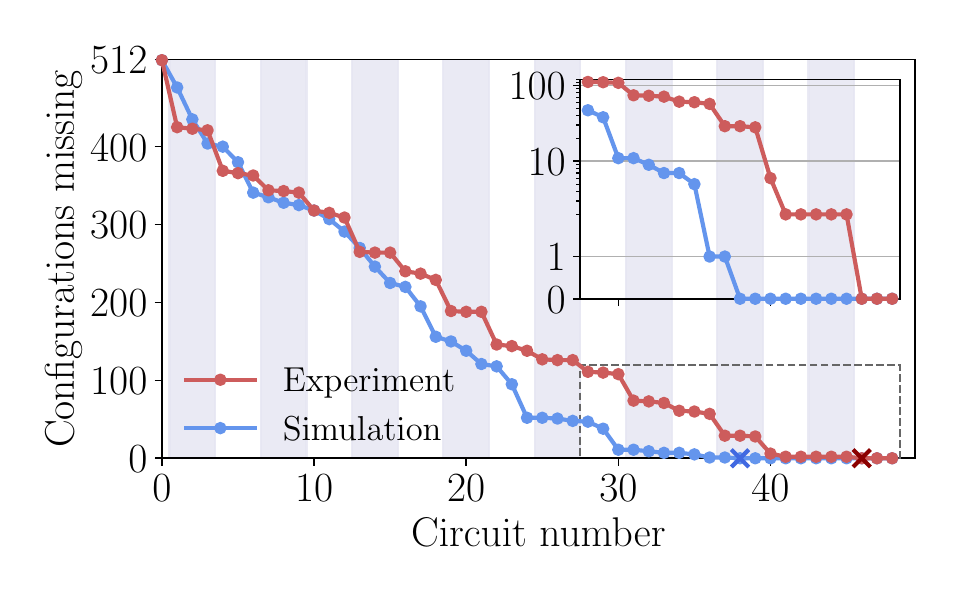}
    \caption{Cumulative count of ``missing'' configurations as a function of circuit number $c$, where $c = 0$ corresponds to $k_{\mathrm{AQC}} = 0$, $k_Q=0$ and $c = 3k_{\mathrm{AQC}} + k_Q - 2$ for $c>0$. Red and blue data show experiment and theoretical expectation based on tensor network simulations, respectively (see Sec.~\ref{app:tn_sims} for details). Vertical bands visually group circuits of constant $k_{\mathrm{AQC}}$, while the inset shows a zoomed-in view of the dotted region. Crosses indicate the point of SKQD convergence, corresponding to observation of all 512 configurations. For both experiment and simulation, we non-uniformly distribute a total of 97 million shots across the various circuits; details are discussed in Sec.~\ref{app:experimental_details}.}
    \label{fig:results_configurations}
\end{figure}

The results of the experiments described in \Cref{sec:quantum_expt,sec:classical_attacks} are shown in \Cref{fig:results_good_app,fig:results_bad_app,fig:results_configurations}.
Specifically, the panels in \Cref{fig:results_good} show the off-the-shelf SCI methods that we applied, compared to the results of SKQD.
All of these SCI methods have hyperparameters, as discussed in \Cref{sec:classical_approaches,sec:classical_attacks}, and in \Cref{fig:results_good_app} we only show the best performing hyperparameter settings: complete results from all parameter settings are given in \Cref{app:classical_simulation_details}.
As the plots show, none of the SCI methods are able to find the exact ground state energy.
SKQD successfully finds it with a Krylov dimension of 17, i.e., the initial configuration plus samples from 1 through 16 timesteps (the quantum data is the same in all panels).

CIPSI and HCI were not able to reach diagonalization dimensions comparable to those required in SKQD, since by design they cannot reach arbitrary dimensions.
Instead, they terminate when no new configurations are added to the pool from one iteration to the next, since once this occurs, no new configurations will be found in any further iterations.
This maximum achievable subspace dimension is partially controlled by the threshold hyperparameter, but decreasing the threshold arbitrarily is not guaranteed to increase the subspace size arbitrarily, since it is possible to reach a pool of configurations such that any further configurations have cost function zero.
As detailed in~\Cref{app:classical_simulation_details}, not even a reduction of the threshold by fourteen orders of magnitude was enough to allow the methods to reach diagonalization dimensions comparable to those in the SKQD experiments. Based on this observation, we conclude that both CIPSI and HCI fail to find the ground state of the system, even for arbitrarily small selection thresholds.
Setting the threshold to zero will of course permit them to explore the entire Hilbert space, but this reduces to (non-selected) configuration interaction, which has exponential cost.
In the cases of ASCI and TrimCI, we have direct control over the diagonalization dimension.
Therefore, we simply swept the dimension until it reached the dimension required by SKQD.
The result is that SKQD obtains an advantage over off-the-shelf SCI strictly in accuracy. 

In addition to the off-the-shelf SCI methods, we tested two iterative solvers that we designed specifically to target Hamiltonians constructed as in \Cref{sec:construction}.
These methods, truncated Arnoldi's method and the diagonal ranking heuristic, were presented in \Cref{sec:classical_approaches}, and both were able to find the exact ground state energy with lower diagonalization dimension than SKQD, as shown in \Cref{fig:results_bad_app}.
Panel (b) illustrates that DMRG was also able to solve this problem.
It is also worth mentioning that diagonalizing in the basis sampled from the AQC circuits alone is able to find the exact ground state energy with a Krylov dimension of 20; however, this is still higher than the dimension required using our hybrid AQC-Trotter evolution circuits.

\section{Circuit transpilation}
\label{app:transpilation}

\subsection{Transpilation with Rustiq}
\label{app:rustiq_transpilation}

\paragraph{Pauli network synthesis} In this work, we used an extension of the Hamiltonian simulation synthesis routine implemented in rustiq \cite{martiel2025rustiq} and described in \cite{debrugiere2024faster}. This synthesis routine greedily synthesizes a circuit that implements a single Trotter step by iteratively growing a Clifford circuit such that all terms of the target Hamiltonian are reduced to a single qubit Pauli somewhere in the circuit. Such a circuit is called an (unordered) Pauli network for some Hamiltonian $H$. Once a Pauli network is generated, it remains to inject single qubit rotations to implement the correct operator.

Consider for instance the following 2-qubit Hamiltonian:

\[H = \theta_1 XX + \theta_2 ZZ + \theta_3 XZ + \theta_4 ZX \]

The following Clifford circuit is a Pauli network for $H$:

\[
\Qcircuit @C=1em @R=.7em {
   & \ctrl{1} & \qw & \ctrl{1} & \qw  \\
   & \gate{X} & \gate{S} & \gate{X} & \qw
}
\]

Indeed, one can implement a Trotter step for $H$ using the following circuit:
\[
\Qcircuit @C=1em @R=.7em {
    &\ctrl{1} & \gate{R_X(\theta_1)} & \qw & \ctrl{1} & \qw& \gate{R_Y(\theta_3)}  &\qw \\
  & \gate{X} & \gate{R_Z(\theta_2)} & \gate{S} & \gate{X} & \qw& \gate{R_Y(\theta_4)} &\qw
}
\]

\paragraph{Synthesis approach} Rustiq's iteration step is to greedily add to the current Pauli network the 2-qubit Clifford gate that makes the most progress toward synthesizing all the terms of $H$. This progress is quantified via a cost metric of the form:
\begin{align}\label{eq:rustiq_cost_function}
    \operatorname{cost}(T_1, ..., T_m) = \sum_{1\leq i \leq m} 2^{-\operatorname{cnot\_cost}(T_i)} 
\end{align} 

In Eq. \ref{eq:rustiq_cost_function}, each $T_i$ is a term of $H$ conjugated by the current Clifford frame. The function $\operatorname{cnot\_cost}$ returns the number of CNOT gates required to \emph{fold} a given Pauli operator onto a single qubit. This cost metric uses a Steiner tree approximation algorithm and takes into account the hardware connectivity constraints (see \cite{Martiel_2022} for a similar cost evaluation).

The algorithm evaluates each possible coupling available in the hardware graph and stores the best patterns consisting of two single qubit Clifford gates followed by a CNOT gate. It then uses a maximal weight matching algorithm to find the best entangling depth-1 circuit and appends the corresponding gate to the Pauli network. If a Pauli operator $T_i$ becomes trivial due to this new layer of gates, it is removed from the pool of operators.

Overall, the algorithm takes as input an Hamiltonian $H = \sum \alpha_i T_i$ and produces a parametrized circuit ${U(t) = C\prod_i e^{i\alpha_it T_i}}$ where $C$ is the Clifford circuit implemented by the bare Pauli network.

\paragraph{Routine used in this work}

For the Hamiltonian experimentally studied in this work, we carried out the above algorithm for $10^6$ different randomizations of the Hamiltonian term ordering. This was implemented in a high throughput calculation, running 1000 batches of 1000 randomizations each as separate batch jobs in a compute cluster. The very large number of randomizations is necessary given the tight tails of the distribution of resulting circuit depths (and gate counts) that the algorithm produces (see Fig.~\ref{fig:circuit_depth-count}). Each individual batch job ran on a single core and lasted for approximately $6.3$ minutes. We employed an optimization that detects early when a randomization will not converge to a circuit within 1000 steps (our self-imposed limit); this only seems to occur in about 17.6\% of the cases, but left unchecked those cases would result in about 10x worse runtime. We then selected the transpilation solution that minimized the circuit depth of a single Trotter step over the various batches (each batch run having produced its own minimum). This approach is embarrassingly parallel apart from the final reduction step. The total classical runtime of this optimization was an aggregate number of 105.5 core-hours on Intel Sapphire Rapids (2.1GHz) hardware.

\begin{figure}
    \centering
    \includegraphics[width=1\linewidth]{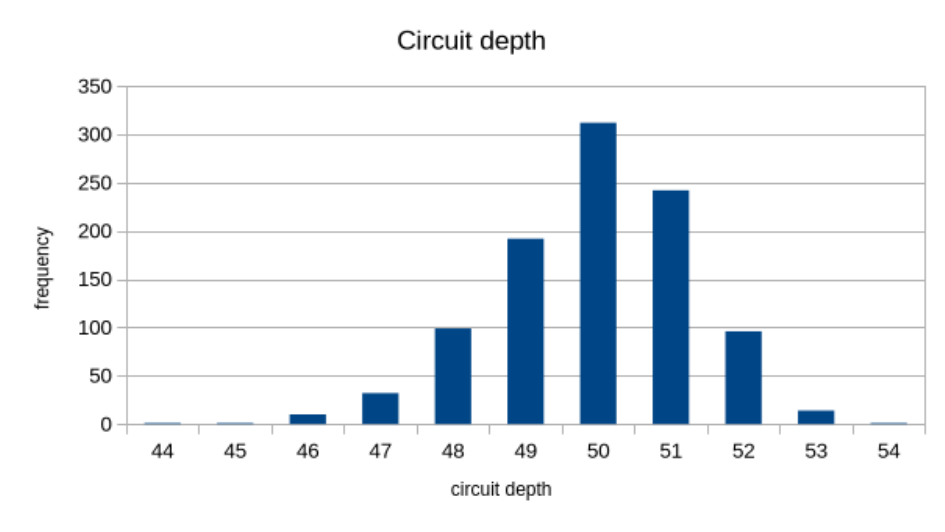}
    \includegraphics[width=1\linewidth]{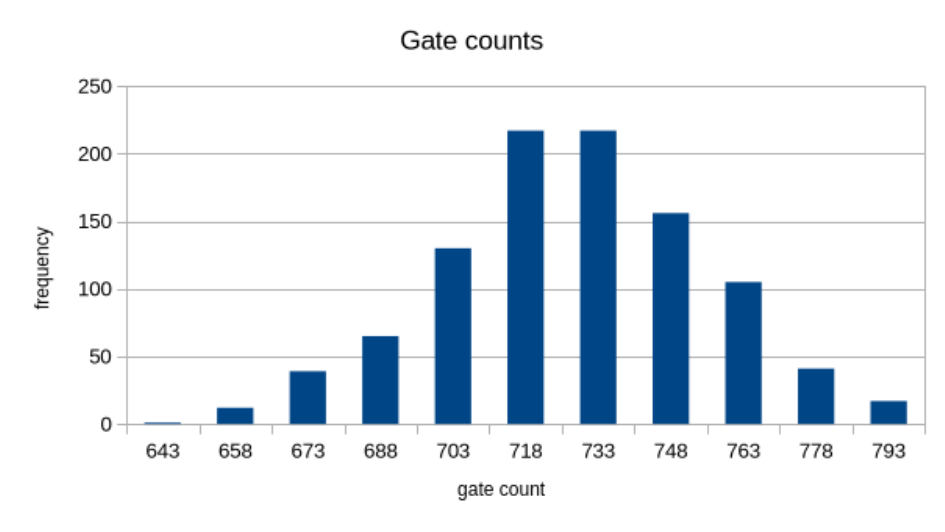}
    \caption{The distribution of circuit depths (top) and gate counts (bottom) achieved during transpilation showing the tight tails to the left.}
    \label{fig:circuit_depth-count}
\end{figure}

\subsection{Approximate quantum compiling}
\label{app:aqc_tensor}

\subsubsection{Background}
Approximate quantum compiling (AQC) describes the task of replacing a target quantum computation by a parameterised circuit of lower resource cost, such that the output is approximately preserved. In general this can be posed either as full-unitary compilation, where a parameterised unitary is trained to approximate a target unitary on all inputs, or as fixed-input state compilation, where only the action on a specified input state is matched. In this work we consider the state compilation setting.

AQC-Tensor~\cite{robertson2025approximate} enables state compilation at scales beyond statevector simulation by representing the target state as a tensor network and evaluating overlaps by tensor-network contraction. Let $\lvert \psi_{\mathrm{t}} \rangle$ denote a target state and let $V(\boldsymbol{\theta})$ denote a parameterised ansatz circuit acting on a fixed reference input $\lvert \psi_{0} \rangle$ (in our case $\lvert 0\rangle^{\otimes n}$). We optimise the global state fidelity
\begin{equation}
F(\boldsymbol{\theta}) = \left|\left\langle \psi_{\mathrm{t}} \right| V(\boldsymbol{\theta}) \left| \psi_{0} \right\rangle\right|^{2},
\qquad
C(\boldsymbol{\theta}) = 1 - F(\boldsymbol{\theta}),
\label{eq:aqc_global_cost}
\end{equation}
and return parameters $\boldsymbol{\theta}^{\star}$ that minimise $C(\boldsymbol{\theta})$. In AQC-Tensor, the overlap in Eq.~\eqref{eq:aqc_global_cost} is evaluated by contracting matrix product state (MPS) representations of $\lvert \psi_{\mathrm{t}} \rangle$ and the ansatz output state $V(\boldsymbol{\theta})\lvert \psi_{0} \rangle$. We use the Qiskit AQC-Tensor addon~\cite{qiskit-addon-aqc-tensor}, which provides the tensor-network evaluation and optimization loop, together with a Quimb ~\cite{gray2018quimb} backend that supports JAX-based automatic differentiation for efficient gradient evaluation.

\subsubsection{Routine used in this work}
In SKQD, the Krylov basis states are $\{\lvert \psi_k\rangle\}_{k=0}^{d-1}$ with $\lvert \psi_k\rangle = \exp(-ik\Delta t H)\lvert \psi_0\rangle$ (Section~\ref{sec:skqd_intro}). We apply AQC-Tensor independently for each Krylov dimension $k \in \{1,\ldots,20\}$, where the compilation target is the state $\lvert \psi_{\mathrm{t}} \rangle \equiv \lvert \psi_k\rangle$. The output of the compilation is a shallow circuit $V_k(\boldsymbol{\theta}_k^\star)$ such that $V_k(\boldsymbol{\theta}_k^\star)\lvert 0\rangle^{\otimes n}$ approximates $\lvert \psi_k\rangle$ in the sense of maximising the global fidelity in Eq.~\eqref{eq:aqc_global_cost}. These compiled circuits are then used for sampling to form the set $B_{d,M}$ defined in Eq.~(2) of Section~\ref{sec:skqd_intro}.

For a fixed $k$, we first obtain an MPS representation of the target state $\lvert \psi_k\rangle$ by simulating the corresponding circuit using the Qiskit Aer MPS simulator~\cite{qiskit2024}. To estimate the accuracy of the MPS, we simulate the circuit with bond dimension increasing in powers of 2 and study the convergence of the overlap with respect to the $\chi=256$ MPS. Given the approximate nature of AQC, some level of truncation error in the compilation target can still lead to useful circuits. In this way, for $k=1$ and $k=2$ we find that |$\langle\psi_k^{(\chi=128)}|\psi_k^{(\chi=256)}\rangle|^2 > 0.99$ and so consider $\chi=128$ to be sufficient. For $k>2$ this is not satisfied, so we use $|\psi_k^{(\chi=256)}\rangle$ as the target. We leave a more rigorous analysis of the accuracy of $|\psi_k^{(\chi=256)}\rangle$ when extrapolating to the limit $\chi\rightarrow\infty$ for future work.

The ansatz we optimize is constructed from two-qubit $\mathrm{SU}(4)$ blocks with 9 real parameters per block~\cite{ibm_aqctensor_docs}. For time-evolution circuits, a known good layout is to replace each interacting term in a Trotter step circuit by a parameterised block~\cite{robertson2025approximate}. However, for the Hamiltonians considered in this work, one layer of such an ansatz would have a CZ depth greater than 100. To reduce this depth, we instead construct an ansatz where one layer consists of a parameterised block on each \emph{unique} interaction edge in the Hamiltonian connectivity graph, which has CZ depth of 15 instead.

Finally, we initialize and optimize the ansatz parameters. Since the performance of AQC optimizations depends strongly on the quality of the initial parameter vector, we warm-start each optimization using a tensor-network compression heuristic. For each target MPS $\lvert \psi_k\rangle$, we compress the MPS to bond dimension $\chi=1$, yielding a product-state approximation $\lvert \psi_k^{(\chi=1)}\rangle$. We then initialize $\boldsymbol{\theta}_k$ such that the ansatz circuit prepares this compressed state at iteration zero, i.e.,
\begin{equation}
V_k(\boldsymbol{\theta}_k^{(0)})\lvert 0\rangle^{\otimes n} = \lvert \psi_k^{(\chi=1)}\rangle,
\end{equation}
and subsequently optimize $\boldsymbol{\theta}_k$ to minimize the global infidelity $C(\boldsymbol{\theta}_k)$ in Eq.~\eqref{eq:aqc_global_cost}. This initialization strategy has been shown to perform significantly better than random initialization \cite{jaderberg2025variational}.

\subsubsection{Compilation results}

Using the routine defined above, we run compiling for each $k$ using a two-layer ansatz. This circuit has 27 CZ depth and contains a total of 1070 trainable parameters. Specifically we use the \texttt{QuimbSimulator} ~\cite{gray2018quimb} backend of AQC-Tensor, which enables automatic gradient calculation via the JAX auto-differentiation engine~\cite{jax2018github}, with a max bond dimension of 256. We use the ADAM optimizer~\cite{kingma2014adam} with learning rate $\alpha=10^{-3}$, $\beta_1=0.9$, $\beta_2=0.999$ and bias $\epsilon=10^{-8}$. Optimization is run on a cluster node with 4 CPU cores and 128gb of RAM for 68 hours, during which between 27000 and 1800 iterations are completed for $k=1$ and $k=20$ respectively.

Table~\ref{tab:aqc_fidelity} shows the final fidelity of the optimized circuits. Here the fidelity decays with increasing $k$, from 0.97 at $k=1$ to 0.08 at $k=20$. This aligns with the expectation that a fixed-depth circuit cannot approximate deeper circuits and longer Hamiltonian evolution to constant accuracy. Nevertheless, as shown in the main text, we find that the non-zero overlap achieved even at large $k$ allows sampling of configurations that are significant to finding the true ground state. Most importantly, the optimized AQC circuits have between a factor of 3 to 30 reduced CZ depth for Krylov dimension $k=1$ to $k=20$ respectively.

\begin{table}[h!]
\centering
\begin{tabular}{cccccccc}
\hline
\textbf{k} & \textbf{$F$} & \textbf{k} & \textbf{$F$} & \textbf{k} & \textbf{$F$} & \textbf{k} & \textbf{$F$} \\
\hline
1  & 0.9738 & 6  & 0.3636 & 11 & 0.2316 & 16 & 0.1609 \\
2  & 0.9063 & 7  & 0.3430 & 12 & 0.1574 & 17 & 0.1353 \\
3  & 0.7519 & 8  & 0.2787 & 13 & 0.1556 & 18 & 0.1042 \\
4  & 0.5321 & 9  & 0.2850 & 14 & 0.1660 & 19 & 0.1019 \\
5  & 0.4491 & 10 & 0.2239 & 15 & 0.1815 & 20 & 0.0808 \\
\hline
\end{tabular}
\caption{AQC fidelity as defined in Eq. (\ref{eq:aqc_global_cost}) after optimization for the time evolution circuits generating each Krylov dimension $k$.}
\label{tab:aqc_fidelity}
\end{table}

\section{Quantum experimental details}
\label{app:experimental_details}

All the reported experiments were on IBM Quantum's \textit{ibm\_boston}, which is a Heron r2 processor with 156 fixed-frequency transmon qubits with tunable couplers where the qubits are laid out on a heavy-hex lattice layout.  49 out of the 156 qubits, corresponding to a 3 by 2 heavy-hex grid were chosen for these experiments. \Cref{fig:backend_info} shows the chosen qubit layout along side the backend specifications from 2025-12-26, which was latest of the data collection days. Each Krylov dimension was sampled with at least $10^6$ shots, with Krylov dimensions near $k=15$ with $k_Q > 0$ assigned a larger number of shots. As explained above, the shot allocation corresponds to an iterative strategy where we heuristically attempted was to minimize energy while also minimizing the total quantum runtime. The experiments were run on different batches on the dates 2025-12-15, 2025-12-18, 2025-12-20, 2025-12-26, 2026-01-28. The table \Cref{tab:backend_details} list the average $T_1$ and $T_2$ times, as well as single-qubit, two-qubit and readout durations and errors. On 2025-12-26, the average values for readout-error was $6.9\times10^{-3}$, single-qubit error was $1.8\times10^{-4}$ and two-qubit gate-error was $1.43\times10^{-3}$, with the latter two characterized by randomized benchmarking. The average relaxation and dephasing times were $290 \ \mu \text{s}$ and $360 \ \mu \text{s}$.  The average values of these errors and relaxation times remained steady over the different dates. A cumulative distribution for the two-qubit gate infidelities acquired using layer fidelity randomized benchmarking collected right before the experiments were initialized for all four dates is shown in \Cref{fig:layerfidelity}. The circuit's two-qubit depth and count, shown in \Cref{tab:2qdepth}, was determined primarily by $k_Q$, i.e., the number of Trotter steps taken after the AQC-compiled circuit was implemented. Odd and even Trotter steps utilized 1119 and 1244 two-qubit gates respectively and all AQC circuits were optimized to have 324 two-qubit operations. 

\begin{figure}[ht]
    \centering
    \includegraphics[width=1\linewidth]{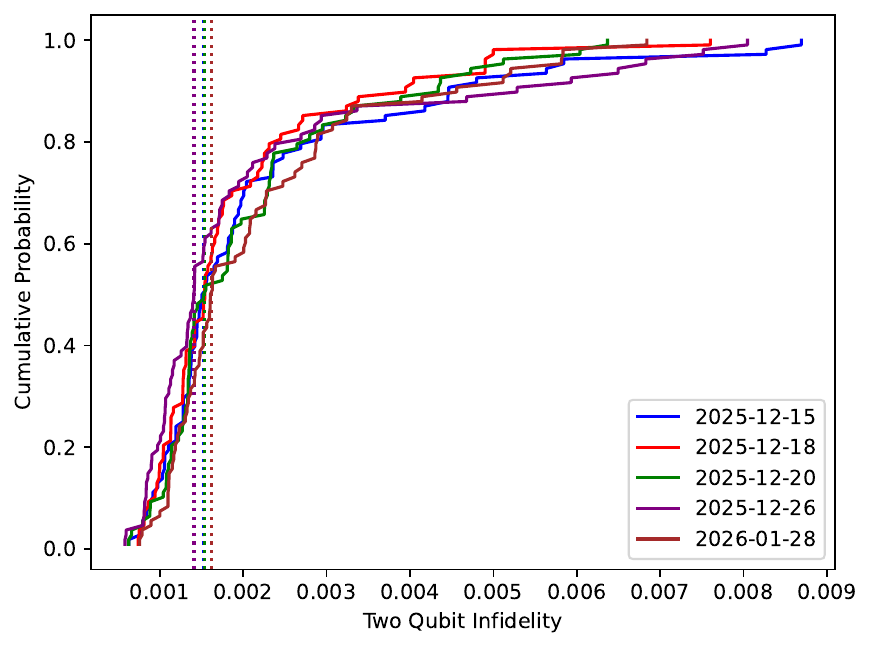}
    \caption{Cumulative distribution of two-qubit infidelity for the 49 chosen qubits on IBM Boston. These are computed before each experiment session using layer fidelity RB.}
    \label{fig:layerfidelity}
\end{figure}

\begin{figure*}[ht]
    \centering
    \includegraphics[width=1\linewidth]{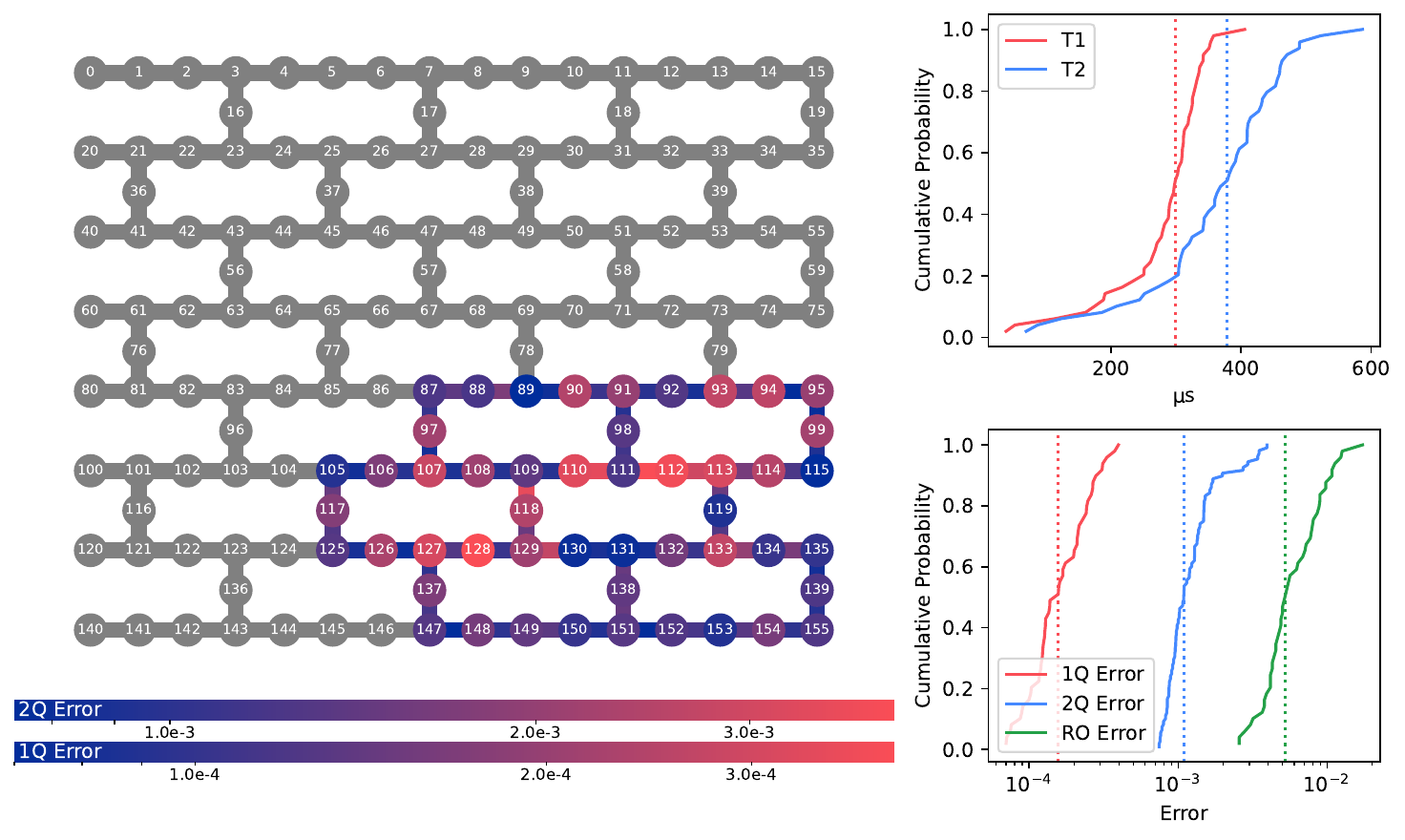}
    \caption{Device connectivity graph for IBM Boston on 2025-12-18 highlighting the 49 chosen qubits in the $3 \times 2$ heavy-hex layout. The $T_1$, $T_2$ times as well as the single-qubit, two-qubit and readout errors for the chosen qubits are shown on the right.}
    \label{fig:backend_info}
\end{figure*}

\begin{figure}[ht]
    \centering
    \includegraphics[width=1\linewidth]{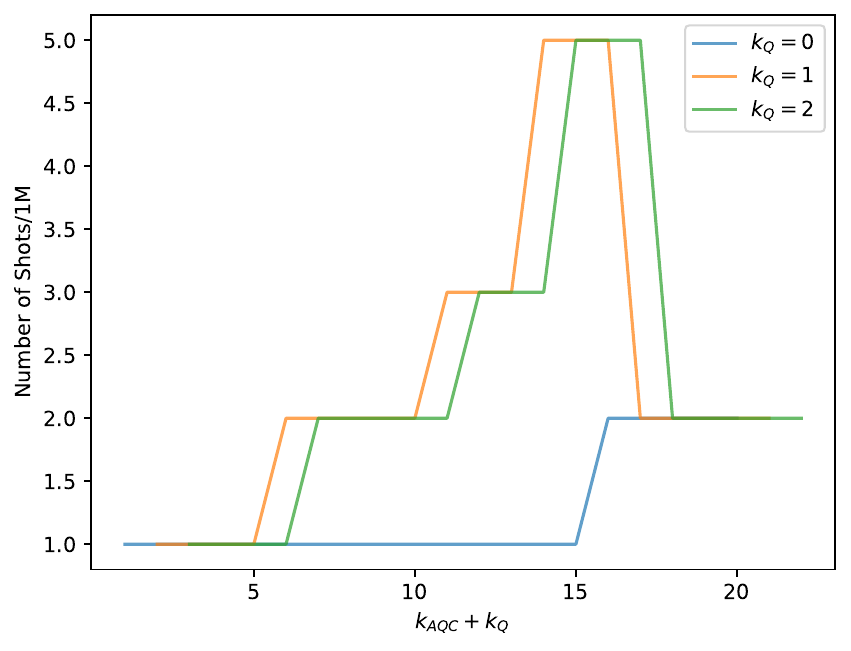}
    \caption{The number of shots per Krylov dimension was chosen adaptively while monitoring the energy, and more shots were assigned to experiments that were able to improve upon the previous energy estimates. The plot shows the final distribution of the shot budget with different curves representing different number of post-AQC Trotter steps.}
    \label{fig:shot_allocation}
\end{figure}

\begin{table*}[]
    \centering
    \begin{tabular}{|c|c|c|c|c|c|c|c|c|}
     \hline
         Date & $T_1$ & $T_2$  & 1Q Duration & 2Q Duration & RO Duration & 1Q Error  & 2Q Error  & RO Error \\
          \hline
2025-12-15 & $2.8  \times  10^{2} \mu$s & $3.5  \times  10^{2} \mu$s & 32 ns & 69 ns & $2.2  \times  10^{3}$ ns & $1.6  \times  10^{-4}$  & $1.5  \times  10^{-3}$  & $6.7  \times  10^{-3}$  \\
2025-12-18 & $2.8  \times  10^{2} \mu$s & $3.6  \times  10^{2} \mu$s & 32 ns & 69 ns & $2.2  \times  10^{3}$ ns & $1.8  \times  10^{-4}$  & $1.3  \times  10^{-3}$  & $6.4  \times  10^{-3}$  \\
2025-12-20 & $2.7  \times  10^{2} \mu$s & $3.5  \times  10^{2} \mu$s & 32 ns & 69 ns & $2.2  \times  10^{3}$ ns & $1.8  \times  10^{-4}$  & $1.3  \times  10^{-3}$  & $5.3  \times  10^{-3}$  \\
 2025-12-26 & $2.9  \times  10^{2}$ $\mu$s & $3.6  \times  10^{2}$ $\mu$s & 32 ns & 69 ns & $2.2  \times  10^{3}$ ns  & $1.8  \times  10^{-4}$  & $1.4  \times  10^{-3}$  & $6.9  \times  10^{-3}$ \\ 
 2026-01-29 & $2.8  \times  10^{2}$  $\mu$s & $3.5  \times  10^{2}$ $\mu$s & 32 ns & 69 ns & $2.2  \times  10^{3}$ ns & $1.6  \times  10^{-4}$  & $1.7  \times  10^{-3}$  & $7.2  \times  10^{-3}$  \\
  \hline
    \end{tabular}
    \caption{Device specifications, in particular the $T_1$ and $T_2$ times as well as the gate and readout errors and durations, for the chosen qubits and gates of IBM Boston for each of the days the experiment was performed are listed.}
    \label{tab:backend_details}
\end{table*}

\begin{table}[]
    \centering
    \begin{tabular}{|c|c|c|c|}
     \hline
& $k_Q = 0$ & $k_Q = 1$ & $k_Q = 2$ \\
\hline
2Q Gates & 324 & 1443 & 1568\\
2Q Depth & 27 & 111 &  113 \\
\hline
    \end{tabular}
    \caption{Number of two-qubit gates and their corresponding two-qubit gate depth for various values of $k_Q$.}
    \label{tab:2qdepth}
\end{table}

\section{Projection and diagonalization}
\label{app:diagonalization}

This section explains how we diagonalize the Hamiltonian $H$ projected onto a subspace spanned by a given set of computational basis states $B$. This task is accomplished in two steps:
\begin{itemize}
    \item Compute an explicit sparse matrix representation (such as compressed sparse row format) of the projected Hamiltonian $H_B$.
    \item Pass the sparse matrix to the Lanczos~\cite{lanczos1950iteration} algorithm to find its eigenvector with lowest eigenvalue.
\end{itemize}
In this work, $H$ is represented as a generic linear combination of Pauli operators, $H = \sum_{k = 1}^L \alpha_k T_k$. The rest of this section explains how to compute the explicit representation of $H_B$.

Computing an explicit matrix representation of $H_B$ requires picking an ordering $\{x_1, \ldots, x_{\abs{B}}\}$ for the elements of $B$ and creating a data structure that can determine whether a given bitstring is contained in $B$, and if so, return its ordering index, or \textit{address}. One possibility is to order the bitstrings by their integer representation and store them in a sorted list. Querying an address then involves searching the list, which can be done in time $O(\log\abs{B}) = O(n)$ using binary search. Another possibility is to store the bitstrings in a hash table, with the ordering chosen arbitrarily. In this case, querying an address takes time $O(1)$ amortized. While the hash table is theoretically superior, the overhead of maintaining a hash table may make the list better for small system sizes.

Once the data structure for storing $B$ has been chosen, the next step is to compute and store the nonzero matrix elements $\bra{x_i} H \ket{x_j}$ for $x_i, x_j \in B$.
A naive way would be to loop through every pair $(x_i, x_j)$ and compute $\bra{x_i} H \ket{x_j}$. Computing $\bra{x_i} H \ket{x_j}$ takes time $O(nL)$, where $n$ is the number of qubits, because there are $L$ Pauli terms in the Hamiltonian, and computing $\bra{x_i} T_k \ket{x_j}$ for a Pauli $T_k$ can be done in time $O(n)$. Since there are $O(\abs{B}^2)$ pairs of elements, the naive way has total cost $O(nL\abs{B}^2)$.

A faster way to compute the matrix elements takes advantage of the fact that applying a Pauli operator to a computational basis state yields another computational basis state, possibly with a complex phase. We first loop over the Pauli terms of the Hamiltonian, and for each Pauli $T_k$, loop over the elements $x_i$ of $B$ and compute $T_k\ket{x_i} = c\ket{y_i}$. Then, we can query the address of $y_i$ and update the appropriate corresponding matrix element. Computing $T_k\ket{x_i}$ and querying $y_i$ can be done in time $O(n)$, so the total cost of this method is $O(nL\abs{B})$.

\section{Details of classical numerical experiments}
\label{app:classical_simulation_details}

\subsection{Truncated Arnoldi's method}
\begin{figure*}
    \centering
    \includegraphics[width=\linewidth]{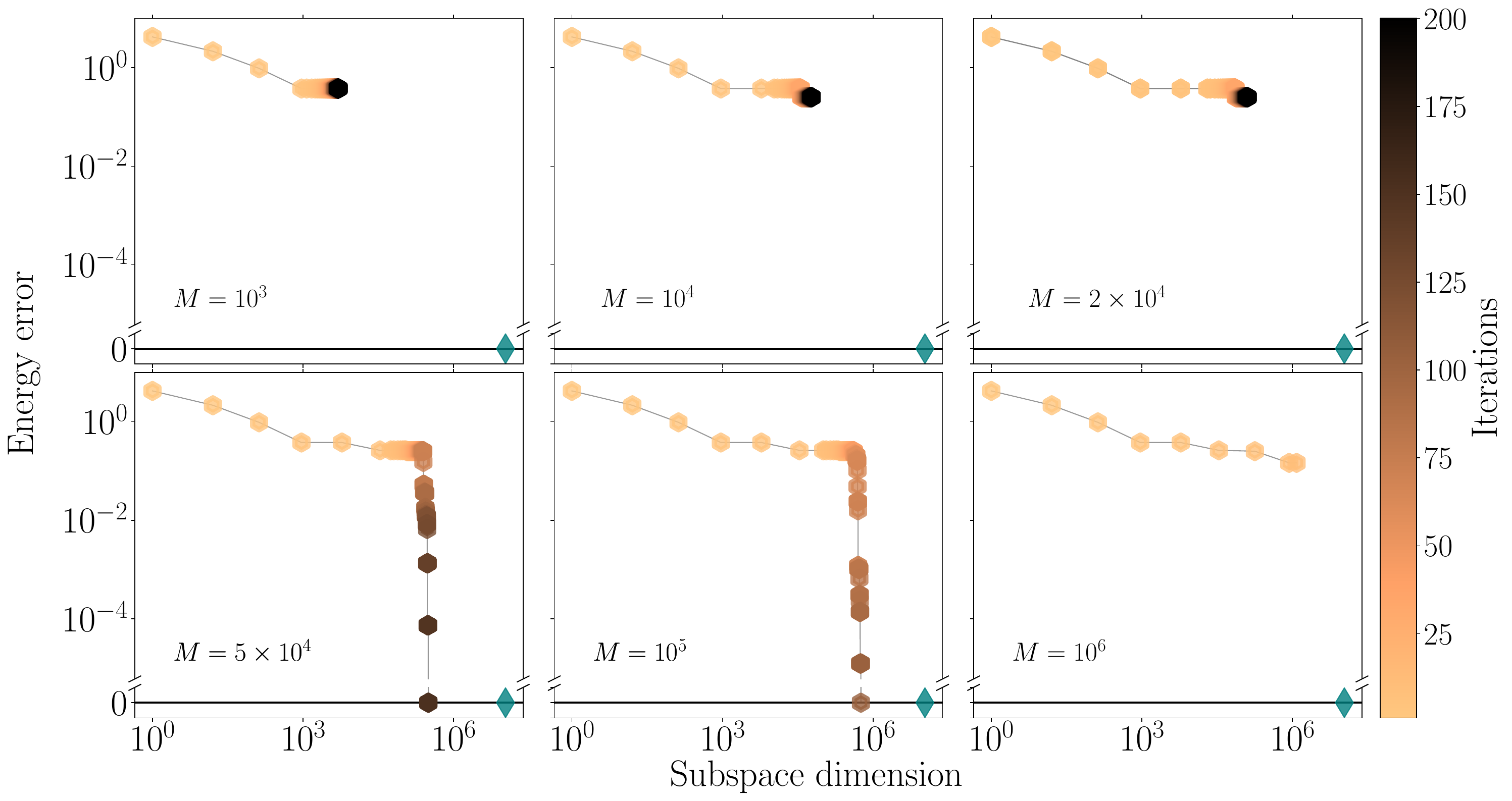}
    \caption{Results from all runs of truncated Arnoldi's method applied to the Hamiltonian described in \Cref{sec:setup}. Each panel corresponds to a different cutoff $M$ on the number of new configurations added in each iteration. In all cases, we capped the number of iterations at $200$, and the $M=10^6$ run was terminated early once it surpassed the subspace dimension of the $M=10^5$ run, which had already found the ground state.}
    \label{fig:trunc_arn_all_runs}
\end{figure*}

For truncated Arnoldi's method, the only hyperparameters are the cutoff $M$ on number of new configurations per iteration, and the number $T$ of iterations. We capped $T$ at $200$ in all of our runs, since this appeared to be past the point where the algorithm was still progressing. We swept $M$ over $10^2, 10^3, 10^4, 10^5, 10^6$. After noting that the method succeeded in finding the exact ground state at $M=10^5$, we attempted to refine the required subspace dimension by testing $M=2\times10^4,5\times10^4$. The former failed to find the exact ground state. The latter succeeded after $148$ iterations and in diagonalization dimension $313303$, yielding the results shown in \Cref{ssec:results}. 

The results from all runs of truncated Arnoldi's method are shown in \Cref{fig:trunc_arn_all_runs}.
As noted in the caption, all of the runs were capped at $200$ iterations, with the exception of the new-configuration cutoff $M=10^6$ run, which was terminated once it reach higher subspace dimension than the $M=10^5$ run, since the latter had already found the ground state.
The $M=2\times10^4$ and $M=5\times10^4$ runs were added after the initial sweep over powers of $10$ in order to further refine the subspace dimension required to find the ground state.
The gap between the final subspace dimensions of $M=2\times10^4$ (the largest cutoff that did not find the ground state) and $M=5\times10^4$ (the smallest cutoff that did find the ground state) was $127935$ versus $313303$, so $127935$ lower bounds the improvement that could be obtained by further refining the cutoff between the above values.
We did not put further resources towards this since the results shown are already sufficient to illustrate the success of the method.
However, if in future work it is necessary to compare resource requirements of truncated Arnoldi's method and SKQD, the required cutoff should be refined as much as possible.

\subsection{Diagonal ranking}\label{sec:diag ranking num}
For these calculations we set the maximum size of the reservoir set to $R = 10^7$ configurations, and perform a sweep in the maximum size of the working set of configurations $D$. A maximum number of iterations $T$ is considered. A node with an Intel Xeon Platinum 8260 CPU with 96 cores (2.40GHz) and 3 TiB of memory is used for these runs. As described in~\Cref{alg:diag ranking}, the method does not rely on performing diagonalizations to carry out the iterations. However, to visualize the progress of the algorithm we perform one diagonalization per iteration.

\Cref{fig:DIAG} shows the progression of the diagonalization energy as a function of the subspace for the different values of $D$ considered here, ranging between $D = 10^3$ to $D = 2\cdot 10^5$. $15$ values are considered in this study. As the size of the working set of configurations is allowed to grow, the energy error decreases monotonically. The smallest value of $D$ for which the method finds the exact ground state is $D = 8\cdot 10^4$, resulting on a diagonalization of the same size. The curve of optimal performance with $D =8\cdot 10^4 $ is the one shown in the main text in~\Cref{fig:results_bad_app}.

\begin{figure}
    \centering
    \includegraphics[width=\linewidth]{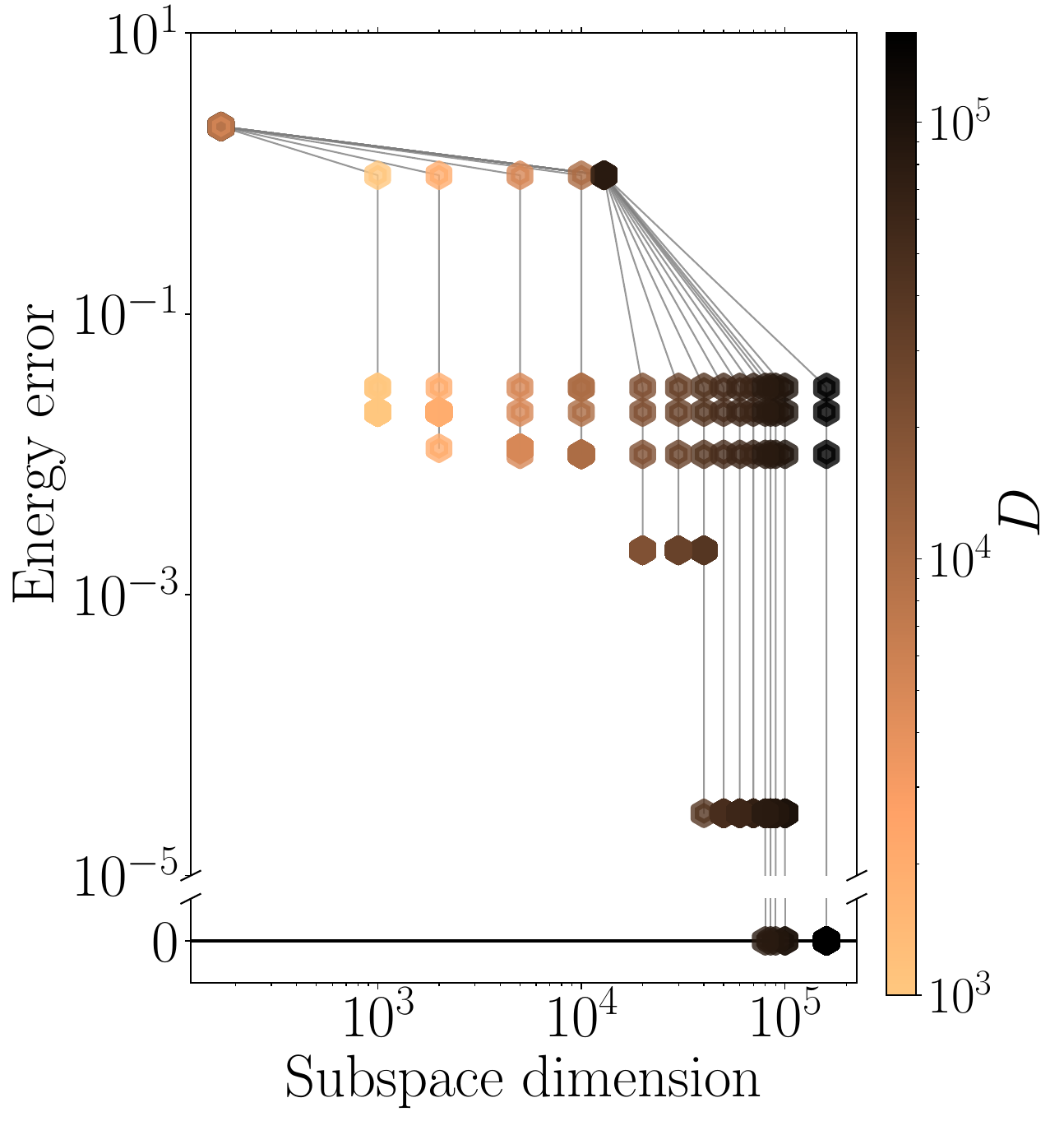}
    \caption{Energy error as a function of the diagonalization subspace dimension obtained by the diagonal ranking method. Each curve shows a different value of the maximum size of the working set of configurations $D$. A diagonalization was performed at each iteration only for illustration purposes, as discussed in the text.}
    \label{fig:DIAG}
\end{figure}

\subsection{CIPSI}\label{sec:CIPSI num}
As described above, the accuracy of CIPSI is controlled by the selection threshold $\varepsilon$. Since the optimal value of $\varepsilon$ is not known \textit{a-priori}, we perform a logarithmic sweep between $\varepsilon = 10^{-4}$ and $\varepsilon = 10^{-17}$. 106 values of $\varepsilon$ have been considered in the analysis. A computer with an Intel Xeon Platinum 8260 CPU with 96 cores (2.40GHz) and 3 TiB of memory is used for these runs. We choose the maximum number of iterations to be $T = 30$. However, we observe that after at most 20 iterations, CIPSI does not find new configurations after their selection, thus terminating the calculation.

Panel (a) in~\Cref{fig:CIPSI} shows the progression of the ground state energy error as a function of the diagonalization subspace dimension for all values of $\varepsilon$. We observe that most values of $\varepsilon$ yield an energy error no-lower than $10^{-2}$.  The early termination of the calculation, as mentioned above, limits the maximum subspace dimension that can be reached, even at extremely small values of the threshold. Panel (b) in~\Cref{fig:CIPSI} shows the progression of the subspace dimension as a function of the calculation iteration for all values of $\varepsilon$. As expected, the diagonalization subspace dimension grows as the calculation progresses, until termination due to the lack of new accepted configurations. The maximum number of iterations increases as $\varepsilon$ is decreased for the larger value of the threshold, until a maximum of approximately $20$ iterations. As the value of $\varepsilon$ is further decreased, the maximum number of iterations before termination decreases again. Panel (c) in~\Cref{fig:CIPSI} shows the energy error as a function of $\varepsilon$. As $\varepsilon$ is decreased, so does the energy error, following a series of jumps and wide plateaus. The plateau-dominated energy error landscape makes it challenging to find automate the search over the $\varepsilon$ parameter, limiting the options to costly sweeps over  $\varepsilon$. The inset hows the saturation of the maximum subspace dimension reached as a function as a function of the selection threshold. 

The saturation of the energy estimate and the maximum diagonalization subspace dimension over $\sim 10$ orders of magnitude makes it likely that further reducing the value of $\varepsilon$ will not allow for the exact estimation of the ground state of the system.

\begin{figure*}
    \centering
    \includegraphics[width=\linewidth]{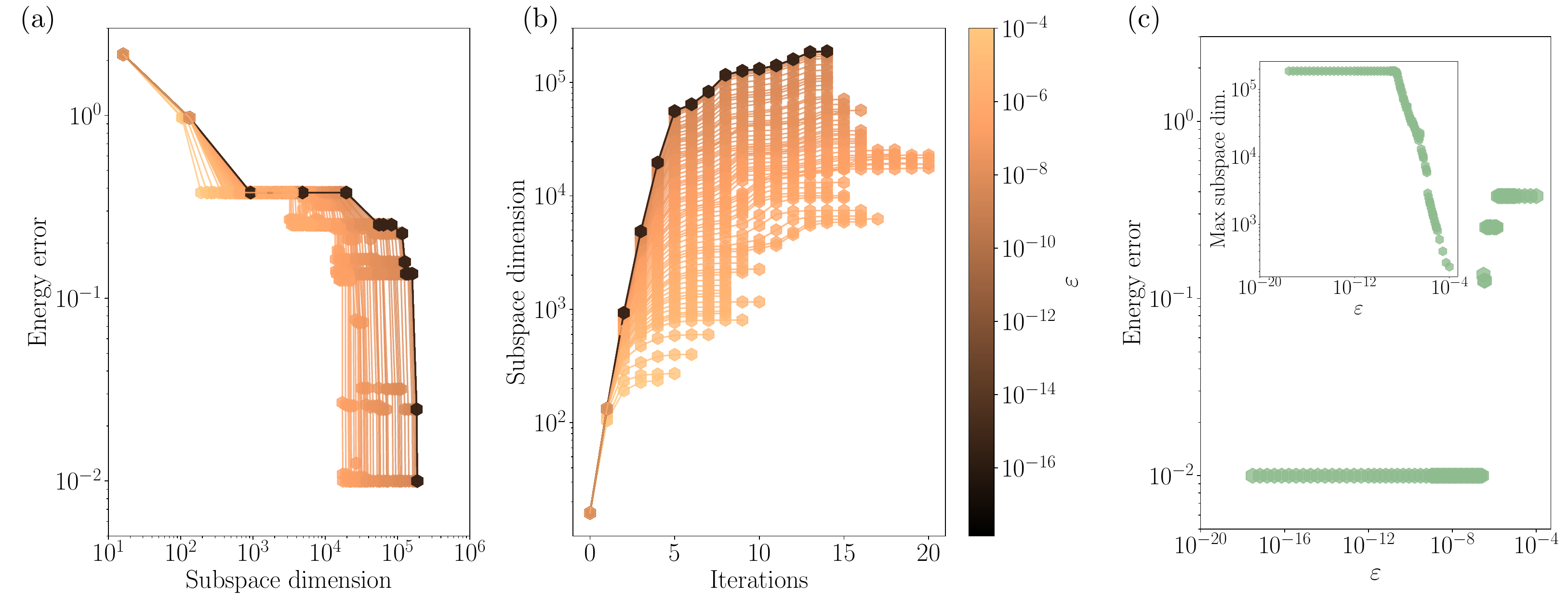}
    \caption{Performance of CIPSI in the proposed Hamiltonian for a sweep of 106 values of $\varepsilon$ between $\varepsilon = 10^{-4}$ and $\varepsilon = 10^{-17}$. \textbf{(a)} Energy error as a function of the subspace dimension. Each curve shows the progression of a CIPSI calculation whose value of $\varepsilon$ is indicated by the colorbar. \textbf{(b)} Evolution of the diagonalization subspace dimension as a function of the calculation iteration. Each curve shows the progression of a CIPSI calculation whose value of $\varepsilon$ is indicated by the colorbar. \textbf{(c)} Lowest energy obtained as a function of the selection threshold. The inset shows the maximum diagonalization subspace dimension as a function of the selection threshold.}
    \label{fig:CIPSI}
\end{figure*}

\subsection{HCI}\label{sec:HCI num}
Similarly to CIPSI, the accuracy of HCI is controlled by the selection threshold $\varepsilon$, and its optimal value is not known \textit{a-priori}. Consequently, we perform a logarithmic sweep between $\varepsilon = 10^{-5}$ and $\varepsilon = 10^{-20}$. 50 values of $\varepsilon$ have been considered in the analysis. A computer with an Intel Xeon Platinum 8260 CPU with 96 cores (2.40GHz) and 3 TiB of memory is used for these runs. We choose the maximum number of iterations to be $T = 30$. However, we observe that after at most 20 iterations, HCI does not find new configurations after the selection step, yielding the termination the calculation.

Panel (a) in~\Cref{fig:HCI} shows the progression of the ground state energy error as a function of the diagonalization subspace dimension for all values of $\varepsilon$. We observe that most values of $\varepsilon$ yield an energy error no-lower than $10^{-2}$.  The early termination of the calculation, as mentioned above, limits the maximum subspace dimension that can be reached, even at extremely small values of the threshold. Panel (b) in~\Cref{fig:HCI} shows the progression of the subspace dimension as a function of the calculation iteration for all values of $\varepsilon$. As expected, the diagonalization subspace dimension grows as the calculation progresses, until termination due to the lack of new accepted configurations. The maximum number of iterations increases as $\varepsilon$ is decreased for the larger value of the threshold, until a maximum of approximately $20$ iterations. As the value of $\varepsilon$ is further decreased, the maximum number of iterations before termination decreases again. Panel (c) in~\Cref{fig:HCI} shows the energy error as a function of $\varepsilon$. As $\varepsilon$ is decreased, so does the energy error, following a series of jumps and wide plateaus. The plateau-dominated energy error landscape makes it challenging to find automate the search over the $\varepsilon$ parameter, limiting the options to costly sweeps over  $\varepsilon$. The inset hows the saturation of the maximum subspace dimension reached as a function as a function of the selection threshold. 

The saturation of the energy estimate and the maximum diagonalization subspace dimension over $\sim 10$ orders of magnitude makes it likely that further reducing the value of $\varepsilon$ will not allow for the exact estimation of the ground state of the system. The qualitative and quantitative behavior of CIPSI an HCI are similar in this Hamiltonian.

\begin{figure*}
    \centering
    \includegraphics[width=\linewidth]{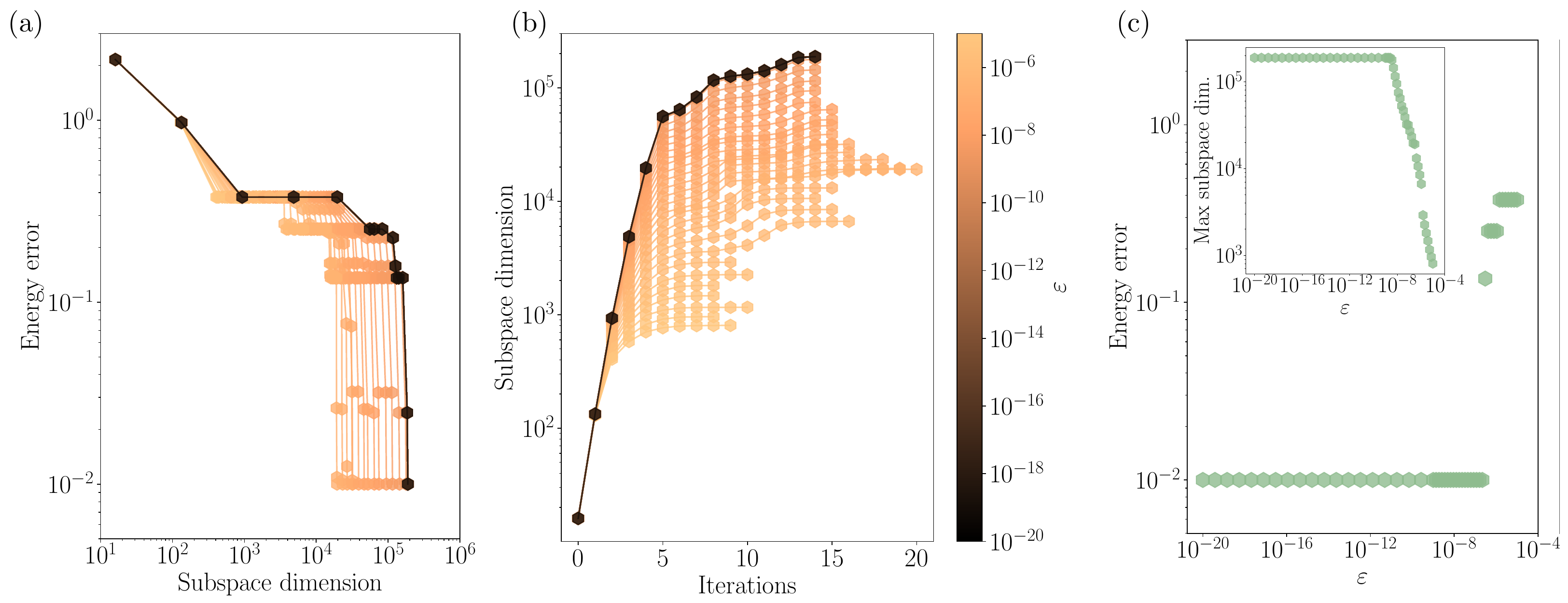}
    \caption{Performance of HCI in the proposed Hamiltonian for a sweep of 106 values of $\varepsilon$ between $\varepsilon = 10^{-4}$ and $\varepsilon = 10^{-17}$. \textbf{(a)} Energy error as a function of the subspace dimension. Each curve shows the progression of a HCI calculation whose value of $\varepsilon$ is indicated by the colorbar. \textbf{(b)} Evolution of the diagonalization subspace dimension as a function of the calculation iteration. Each curve shows the progression of a HCI calculation whose value of $\varepsilon$ is indicated by the colorbar. \textbf{(c)} Lowest energy obtained as a function of the selection threshold. The inset shows the maximum diagonalization subspace dimension as a function of the selection threshold.}
    \label{fig:HCI}
\end{figure*}

\subsection{ASCI}\label{sec:ASCI num}
As described above, the behavior of ASCI is controlled by two hyperparameters. The size of the core set $C$ and the diagonalization subspace dimension $D$. We explore different combinations of $C$ and $D$ values. In particular, we consider $C/D = 3/4, 1/2, 1/4, 1/8$. For each value of the core-to-diagonalization dimension values we consider 12 values of $D = 1,024$, $D = 2,048$, $D = 4,095$, $D = 8,192$, $D  =16,398$, $D = 32,768$, $D = 65,536$, $D  = 131,072$, $D = 262,144$, $D = 524,288$, $D = 1,048,576$, $D = 10,000,000$, for a total of 48 ASCI runs. A computer with an Intel Xeon Platinum 8260 CPU with 96 cores (2.40GHz) and 3 TiB of memory is used for these runs.

Panel (a) in~\Cref{fig:ASCI} shows the ground state energy error obtained from ASCI as a function of the number of iterations performed. The lowest energy error obtained is of $3\cdot 10^{-2}$. Panel (a) in~\Cref{fig:ASCI} also shows the energy error as a function of the diagonalization subspace dimension. For the smallest values of the $C/D$ ratio, the application of the Hamiltonian on the core configurations does not generate enough new configurations for the user define value of $D$ to match with the diagonalization subspace dimension. Panel (b) shows the energy error as a function of $D$, where different panels indicate different values of the $C\D$ ratio. As expected, the energy error decreases as the diagonalization subspace dimension is increased. However, we observe non-monotonic behavior of the energy error with $D$.

\begin{figure*}
    \centering
    \includegraphics[width=\linewidth]{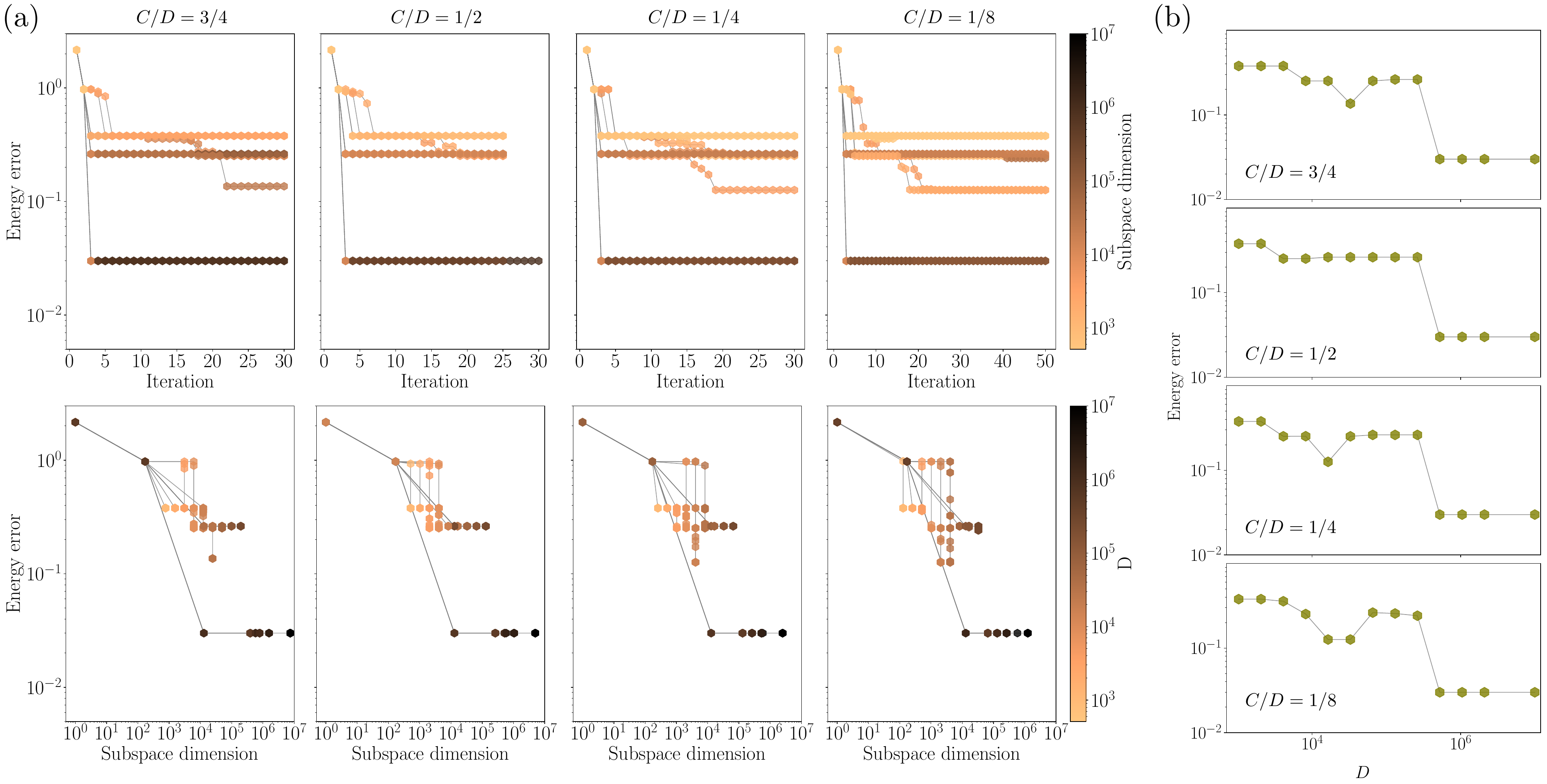}
    \caption{\textbf{Top row (a):} Energy error as a function of the ASCI iteration. Different panels correspond to different values of the core to diagonalization subspace sizes. Different curves correspond to different choices of the diagonalization subspace dimension $D$, as indicated in the colorbar. \textbf{Bottom row (a):} Energy error as a function as a function of the diagonalization subspace dimension. Different curves correspond to different values of $D$ as indicated by colorbar. Different panels correspond to different values of the core to diagonalization subspace sizes. \textbf{(b):} Energy error as a function of $D$ for different values of $C/D$ as shown in each panel.}
    \label{fig:ASCI}
\end{figure*}

\subsection{TrimCI} \label{sec:TrimCI num}
For the TrimCI calculations in this study, we take a value of $F$ (see \Cref{sec:classical_approaches} for its definition) of ${F = 100}$, as suggested by the authors of Ref.~\cite{zhang2025TrimCI_SCI}. We also fix the ratio between the size of the core configurations $C$ and the diagonalization subspace size $D$ to $C/D = 1/10$. The number of random subsets $N_s$ for the diagonalizations involved in the selection of new configurations is set to $N_S = 10^3$ for most runs, with the exception of two runs in which $N_S = 10^4$ was used for benchmarking purposes. We observe that, in practice and for the values of $N_S$ in this study, as $N_S$ increases, so does the runtime of the evaluation function which, in our implementation of the algorithms, takes an overwhelming majority of the runtime. A computer with an Intel Xeon Platinum 8260 CPU with 96 cores (2.40GHz) and 3 TiB of memory is used for these runs.

As shown in~\Cref{fig:TrimCI}, the energy error remains above $10^{-2}$ for the smaller diagonalization subspace sizes. For the larger diagonalization subspace sizes, the energy error is significantly reduced, reaching lower energy errors compared to CIPSI, HCI, and ASCI for comparable subspace dimensions. However, we notice am uncontrolled non-monotonic behavior of the energy error as the algorithm progresses. 

In what follows, we provide a plausible explanation to the increased accuracy of TrimCI as compared to CIPSI, HCI, and ASCI. The selection function in TrimCI involves forming non-overlapping random subsets of the core set of configurations for diagonalization. Since the formation of these random subsets does not contain contain constraints on the existence of substantial numbers on non-zero matrix elements between the configurations in the subsets, it is likely that the configurations in each subset cannot confabulate to created superpositions to lower the energy in the corresponding subspace, yielding a likely effective ranking based mostly on the energy of each configuration, similar to the diagonal ranking heuristic that has been shown to be successful in this problem.

Due to the close practical connection with the diagonal ranking heuristic, which is known to find the ground state in this Hamiltonian, we do not rule-out the possibility that a given combination of the four hyperparameters that control TrimCI may result in a calculation capable of finding the exact ground state.

\begin{figure*}
    \centering
    \includegraphics[width=\linewidth]{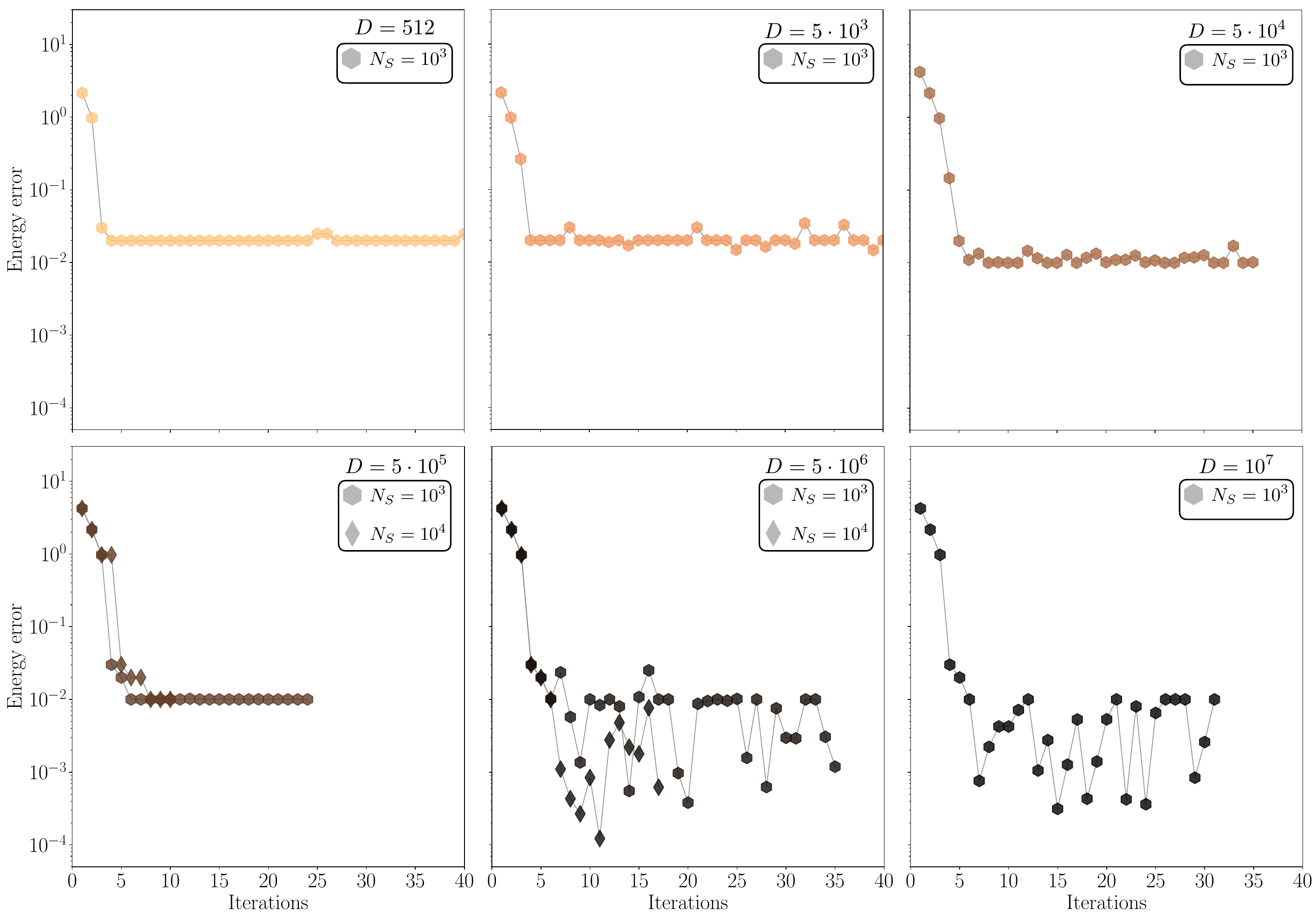}
    \caption{Energy error as a function of the TrimCI iterations for different choices of the diagonalization subspace size $D$, as indicated in each panel. The cases for which $D = 5 \cdot 10^5$ and $D = 5 \cdot 10^6$, contain two curves corresponding to $N_S = 10^3$ and $10^4$ as labeled by different markers.}
    \label{fig:TrimCI}
\end{figure*}

\section{Truncated Power Method}
\label{app:tpower}

\subsection{Definition and proof of convergence}
\label{sapp:tpower_convergence}

Suppose $A$ is a sparse  hermitian positive definite matrix of size $N\times N$.
Let $\chi_A$ be the sparsity of $A$, that is, the maximum number of nonzeros per column.
In the special case when $A$ is proportional to a spin or fermionic Hamiltonian with few-body interactions, the sparsity $\chi_A$
is polynomial in $\log{N}$ (the number of qubits).
Let $\lambda_1\ge \lambda_2 \ge \ldots \ge \lambda_N> 0$ be the eigenvalues of $A$.
We assume that the largest eigenvalue $\lambda_1$ is non-degenerate, that is,
$\lambda_1>\lambda_2$.
Define
the relative spectral gap
\[
\gamma = \frac{\lambda_1-\lambda_2}{\lambda_1} \in (0,1).
\]
Let $\psi\in \CC^N$ be the  principal eigenvector of $A$ such that 
$A|\psi\ra=\lambda_1|\psi\ra$ and $\|\psi\|=1$.
Below  we describe a classical algorithm proposed by Yuan and Zhang~\cite{yuan2013truncated} that takes as input a sparse reference state 
$\psi_{in}$
which has a non-negligible overlap with $\psi$ and
approximates the largest eigenvalue $\lambda_1$.
We make the following assumptions.
\begin{itemize}
\item
The principal eigenvector $\psi$ is sparse. Let $\chi$ be the number of nonzero entries in $\psi$.
\item
We are given a sparse reference vector $\psi_{in}\in \CC^N$  with $\|\psi_{in}\|=1$ and 
$\delta=|\la \psi_{in}|\psi\ra|^2>0$.  Let $\chi_{in}$ be the number of nonzeros in $\psi_{in}$.
\item The matrix $A$ is specified by an oracle that takes as input a column index $x$ and returns the list 
of all nonzero entries of $A$ in the column $x$. The reference vector $\psi_{in}$ is specified by a list of entries $(x,\la x|\psi_{in}\ra)$ with $\la x|\psi_{in}\ra\ne 0$.
\end{itemize}

Given a vector $\phi \in \CC^N$ and an index $j\in \{1,2,\ldots,N\}$, let $x^j(\phi) \in \{1,2,\ldots,N\}$ be the index $x$ (basis vector)
with the $j$-th  largest 
weight $|\la x|\phi\ra|$. If there are  ties, break them arbitrarily.
Thus for any vector $\phi$ the list $x^1(\phi),x^2(\phi),\ldots,x^N(\phi)$ is a permutation of  $\{1,2,\ldots,N\}$ 
such that $|\la x^j(\phi)|\phi\ra|$ is non-increasing with $j$.
For example, suppose $N=3$ and $|\phi\ra =0.5|1\ra - 0.9|2\ra + 0.1|3\ra$. Then $x^1(\phi)=2$, $x^2(\phi)=1$, $x^3(\phi)=3$.
Suppose our goal is to approximate $\lambda_1$ within an additive error $\epsilon$.
Define parameters
\begin{equation}
\label{eq:kstar}
    k_\star  =
     \frac{\chi}{\gamma^2} 
    \max{\left\{
    \frac{64}{\delta},
    \;
    \frac{9\lambda_1}{\epsilon}
    \right\}}
\end{equation}
and
\[
L_\star = \frac1{\gamma}\log{\left( \frac{k_\star  \gamma^2}{\delta \chi} \right)}
\]
\begin{algorithm}[H]
\caption{Truncated Power Method~\cite{yuan2013truncated}} 
\label{alg:tpower}
\begin{algorithmic}[1]  
\State \textbf{Input:} Sparse matrix $A$. Error tolerance $\epsilon$.  Sparse reference vector $\psi_{in}$ with  $|\la \psi|\psi_{in}\ra|^2\ge \delta$. 
\State \textbf{Output:} Real number $\mu$ such that $\lambda_1-\epsilon\le \mu\le \lambda_1$. 
\State $k \gets$ any integer larger than  $k_\star$ \Comment{Sparsity cutoff}
\State $L \gets$ any integer larger than  $L_\star$ \Comment{Number of  iterations}
\State Initialize $|\phi_0\ra \gets |\psi_{in}\ra$
\For{$t = 1$ to  $L$}
\State $|\theta_t\ra \gets A|\phi_{t-1}\ra$ \Comment{Evolve using the power method}
\State $F_t\gets \{x^1(\theta_t), \ldots,x^k(\theta_t)\}$ \Comment{Find $k$ most heavy indices}
\State $|\omega_t\ra \gets \sum_{x\in F_t} \la x|\theta_t\ra |x\ra$  \Comment{Truncate}
\State $|\phi_t\ra \gets |\omega_t\ra/\| \omega_t\|$. \Comment{Normalize}
\EndFor
\State \Return $\mu=\la \phi_L|A|\phi_L\ra$
\end{algorithmic}
\end{algorithm}

\begin{theorem}
\label{thm:tpower}
Algorithm~\ref{alg:tpower} outputs a real number $\mu$ satisfying 
$\lambda_1-\epsilon\le \mu\le\lambda_1$.
The algorithm makes at most $(L-1)k+\chi_{in}$ queries to the oracle specifying $A$
and performs roughly
$Lk\chi_A \log{(k\chi_A)} + \chi_{in}\chi_A \log{(\chi_{in} \chi_A)}$
arithmetic operations with complex numbers.
\end{theorem}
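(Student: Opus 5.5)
The plan is to track the angle between the iterate $\phi_t$ and the principal eigenvector $\psi$ through each iteration. Each iteration has two steps. The power step $\theta_t = A\phi_{t-1}$ contracts the component orthogonal to $\psi$ relative to the $\psi$ component. The truncation step $\omega_t$ can only spoil the angle by an amount controlled by $\sqrt{\chi/k}$, because $\psi$ has only $\chi$ nonzero entries.

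The bound $\mu\le\lambda_1$ is immediate from the variational principle, since $\phi_L$ is a unit vector. For the lower bound, write $\phi_L = \alpha\psi + \beta\psi^\perp$ with $|\alpha|^2+|\beta|^2=1$. Using positivity of $A$,
\[
\mu \ge \lambda_1|\alpha|^2 = \lambda_1 - \lambda_1 \sin^2\angle(\phi_L,\psi).
\]
So it suffices to show $\sin^2\angle(\phi_L,\psi)\le \epsilon/\lambda_1$.

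\textbf{Power step.} Decompose $\phi_{t-1}=a\psi+b\,\psi^\perp$. Then $\theta_t=\lambda_1 a\psi + A b\psi^\perp$ with $\|Ab\psi^\perp\|\le\lambda_2|b|$. Hence
\[
\tan\angle(\theta_t,\psi)\le(1-\gamma)\tan\angle(\phi_{t-1},\psi).
\]

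\textbf{Truncation step.} This is the key lemma, in the spirit of Yuan--Zhang. Let $u$ be any vector, $F$ the set of its $k$ heaviest indices, and $S=\mathrm{supp}(\psi)$ with $|S|=\chi\le k$. The entries of $u$ on $S\setminus F$ are each no larger than the average weight on $F\setminus S$, so the mass removed from $S$ is at most $\sqrt{\chi/k}$ times the mass of $u$ off $S$. Consequently $|\la\psi|\omega\ra|$ differs from $|\la\psi|u\ra|$ by at most $\sqrt{\chi/k}\,\|u_{\perp}\|$, and the orthogonal part can only shrink. This yields a bound of the form
\[
\tan\angle(\omega_t,\psi)\le \frac{\tan\angle(\theta_t,\psi)}{1-\sqrt{\chi/k}\,\tan\angle(\theta_t,\psi)}
\]
or a similar additive bound in $\sin$.

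\textbf{Main obstacle.} Combining the two steps into a recursion is delicate. Initially $\tan^2\angle(\phi_0,\psi)\le(1-\delta)/\delta$ is large. The truncation loss must therefore be dominated by the $(1-\gamma)$ contraction even in this regime, and this is exactly what the condition $k\ge 64\chi/(\gamma^2\delta)$ is designed to guarantee. I would first prove by induction that the overlap never drops below $\delta$. With that in hand, I would show that $\tan\angle$ decreases geometrically at rate $(1-\gamma/2)$ until it reaches the fixed point of the recursion $s\mapsto(1-\gamma)s+O(\sqrt{\chi/k})$, which is $O(\sqrt{\chi/k}/\gamma)$.

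\textbf{Parameter choice.} Requiring $\sin^2$ at the fixed point, roughly $9\chi/(\gamma^2k)$, to be at most $\epsilon/\lambda_1$ gives the second term in $k_\star$. The number of iterations needed to decay from $\sim\delta^{-1/2}$ to $\sqrt{\chi/k}/\gamma$ is about $\gamma^{-1}\log(k\gamma^2/(\delta\chi))=L_\star$. I expect the hardest part to be keeping the constants consistent in the early large-angle regime. If the plain $\tan$ recursion is too lossy there, I would instead work with the ratio $\|\text{off-}S\text{ mass}\|/|\la\psi|\cdot\ra|$.

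\textbf{Cost.} The first iteration queries the oracle on the $\chi_{in}$ columns in $\mathrm{supp}(\psi_{in})$. Each later iteration queries at most $k$ columns, the support of $\phi_{t-1}$, for a total of $(L-1)k+\chi_{in}$ queries. Forming $\theta_t$ produces at most $k\chi_A$ (respectively $\chi_{in}\chi_A$) entries. Merging duplicates and selecting the top $k$ entries by sorting costs $O(k\chi_A\log(k\chi_A))$ per iteration. Summing over the $L$ iterations gives the stated operation count.
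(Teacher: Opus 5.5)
Your architecture is the paper's: the variational bound $\lambda_1-\mu\le\lambda_1\sin^2\angle(\phi_L,\psi)\le\lambda_1T(\phi_L)^2$, and the power-step contraction of the tangent by $1-\gamma$. It also shares the truncation overlap loss $|\langle\psi|\omega\rangle|\ge a-\rho b$, with $a=|\langle\psi|\theta\rangle|$, $b=\sqrt{1-a^2}$, $\rho=\sqrt{\chi/k}$; your ``dropped support entries are no heavier than kept off-support entries'' argument is exactly the paper's lemma. The recursion contracting at rate $1-\gamma/2$ to a floor $\le 2\rho/\gamma$ (so $T(\phi_L)\le3\rho/\gamma$) and the query/sorting count are the same too.

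The step that fails is turning the overlap loss into a tangent bound. The claim that ``the orthogonal part can only shrink'' is false. Truncation can delete on-support coordinates of the $\psi$-component, and that deleted piece then appears as mass orthogonal to $\psi$. Take $\chi=k=2$ (so $\rho=1$), standard basis vectors $e_1,e_2,e_3$, $\psi=\sqrt{1-\eta^2}\,e_1+\eta\,e_2$, and $\theta\propto\psi+Te_3$ with $0<\eta<T\le 1/2$. The two heaviest entries are on $e_1,e_3$, so $e_2$ is dropped, and a direct computation gives
\[
T(\omega)^2=\frac{\eta^2}{1-\eta^2}+\frac{T^2}{(1-\eta^2)^2}\ \ge\ \eta^2+T^2 .
\]
For $T=0.1$, $\eta=0.099$ this gives $T(\omega)\approx0.14$, whereas your displayed bound promises $T/(1-\rho T)\approx0.11$. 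The displayed inequality is also inconsistent with the rest of your plan. If it held there would be no additive term and no $O(\rho/\gamma)$ floor, yet your recursion $s\mapsto(1-\gamma)s+O(\sqrt{\chi/k})$, its fixed point, $L_\star$, and the $9\lambda_1\chi/(\gamma^2\epsilon)$ term of $k_\star$ all presuppose one.

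The paper avoids bounding the orthogonal component separately. Since $\omega$ comes from the unit vector $\theta$ by zeroing coordinates, $\|\omega\|\le1$. Hence the normalized $\widehat\omega$ satisfies $|\langle\psi|\widehat\omega\rangle|\ge|\langle\psi|\omega\rangle|\ge a-\rho b$. The tangent of a unit vector is decreasing in its overlap, so
\[
T(\omega)\le\frac{\sqrt{1-(a-\rho b)^2}}{a-\rho b}\le\frac{T(\theta)+\rho}{1-\rho T(\theta)}.
\]
Composed with the power step, this is exactly the map $f(x)=((1-\gamma)x+\rho)/(1-\rho(1-\gamma)x)$ your later analysis needs. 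The induction then closes with the invariant $T(\phi_t)\le\xi_t\le\gamma/(8\rho)$, which keeps $\rho T(\theta_t)\le\gamma/8<1$ so the lemma applies at every step.

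Alternatively, you can keep your numerator/denominator route if you bound the orthogonal part correctly. The deleted on-support piece (the projection of $\theta$ onto $S\setminus F$) has norm at most $\rho b$, so the orthogonal component of $\omega$ has norm at most $(1+\rho)b$. This gives $T(\omega)\le(1+\rho)T(\theta)/(1-\rho T(\theta))$. That bound has no additive floor: under $k\ge64\chi/(\gamma^2\delta)$ the per-step factor is at most $1-\gamma/2$, so $T(\phi_t)\to0$. You could then drop the $\lambda_1/\epsilon$ term of $k_\star$ in exchange for $O(\gamma^{-1}\log(\lambda_1/(\epsilon\delta)))$ iterations. That is a different accounting from the fixed-point argument you sketch.
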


\newpage
\begin{proof}
Our proof closely follows~\cite{yuan2013truncated}. 
Given a  nonzero  vector $v\in\CC^N$,
let $T(v)$ be the tangent of the angle between $v$ and $\psi$ defined as
\[
T(v)= \frac{\|(I-|\psi\ra\la\psi|)|v\ra\|}{|\la\psi|v\ra|}.
\]
Clearly, $T(v)$ is scale-invariant, that is, $T(c v)=T(v)$ for any non-zero scalar $c$.
If $v$ is a unit vector then $T(v)=\sqrt{1-|\la\psi|v\ra|^2}/|\la\psi|v\ra|$.

Let $\phi_L$ be the final vector reached by Algorithm~\ref{alg:tpower}. Choose the phase of $\phi_L$ such that $\la \psi|\phi_L\ra$ is real non-negative.
Let $c=\la \psi|\phi_L\ra$ and $s=\sqrt{1-c^2}$. Then $T(\phi_L)=s/c$ and 
$|\phi_L\ra = c|\psi\ra + s|\psi^\perp\ra$ for some unit vector $\psi^\perp\in \CC^N$ orthogonal to $\psi$.
Since $\lambda_1$ is the largest eigenvalue of $A$ and $\phi_L$ is a unit vector, one must have
$\mu=\la \phi_L|A|\phi_L\ra\le \lambda_1$. From $\la \psi^\perp|A|\psi^\perp\ra\ge 0$ one gets
\begin{align*}
\lambda_1-\mu & = \lambda_1 - c^2 \lambda_1 - s^2 \la \psi^\perp|A|\psi^\perp\ra \le \lambda_1 - c^2 \lambda_1 \\
& = \lambda_1 s^2 \le \lambda_1 T(\phi_L)^2.
\end{align*}
Let
\[
\rho=\sqrt{\frac{\chi}{k}}.
\]
We will show that the sequence of vectors $\{\phi_t\}_{t\ge 0}$ generated by Algorithm~\ref{alg:tpower} obeys
$\la \psi|\phi_t\ra\ne 0$ and 
\be
\label{claim1}
T(\phi_t)\le \xi_t,
\ee
where the sequence $\{\xi_t\}_{t\ge 0}$ is defined recursively as
\[
\xi_0 = T(\psi_{in}), \qquad \xi_t = f(\xi_{t-1}), \qquad f(x)= \frac{(1-\gamma)x+\rho}{1-\rho(1-\gamma)x}.
\]
We will show that for any $k\ge 64\chi\gamma^{-2}\delta^{-1}$
 the sequence $\xi_t$ converges exponentially fast to the limiting point
$\xi_\star = \lim_{t\to \infty}\xi_t$ such that 
\be
\label{claim2}
|\xi_t - \xi_\star|\le e^{-t \gamma/2} \delta^{-1/2} 
\ee
for all $t\ge 0$ and the limiting point obeys
\be
\label{claim3}
\xi_\star \le  \frac{2\rho}{\gamma}.
\ee
Our choice of the number of iterations $L$  and Eq.~(\ref{claim2}) give
$|\xi_L-\xi_\star|\le \rho/\gamma$. From Eqs.~(\ref{claim1},\ref{claim3}) one gets
\[
T(\phi_L) \le \xi_L \le \xi_\star + |\xi_L-\xi_\star| \le  \frac{3\rho}{\gamma} = 
\frac{3 \chi^{1/2}}{\gamma k^{1/2}}.
\]
Plugging  this into the above upper bound on $\lambda_1-\mu$  gives
$\lambda_1-\mu\le \epsilon$ provided that $k\ge 9\lambda_1 \chi \gamma^{-2} \epsilon^{-1}$.
This proves the first part of Theorem~\ref{thm:tpower}.

{\em Comment:} in practice one could choose the sparsity cutoff $k$ and the number of iterations $L$
by  numerically computing the sequence $\xi_t$
and finding the smallest $L$ such that $\lambda_1' \xi_L^2 \le\epsilon$
for a given $k$, where $\lambda_1'\ge \lambda_1$ is an efficiently computable
upper bound on $\lambda_1$.
 The same argument as above then gives $\lambda_1-\mu\le \epsilon$.

Let us estimate the runtime of Algorithm~\ref{alg:tpower}. We shall say that a vector $v\in \CC^N$ has sparsity $m$ if $v$ has at most $m$ nonzeros.
Each vector $v$ that appears in Algorithm~\ref{alg:tpower} is stored
as a list of pairs $(x,\la x|v\ra)$ with $\la x|v\ra\ne 0$. If $v$ is $m$-sparse then one can compute $A|v\ra$ by making  $m$ calls to the oracle specifying $A$
and performing roughly $m\chi_A$ arithmetic operations. By construction, $\phi_t$ has sparsity
$\chi_{in}$  if $t=0$ and sparsity $k$ if $t\ge 1$. Vectors $\theta_t$ have sparsity $\chi_{in} \chi_A$ if $t=0$ and $k \chi_A$ if $t\ge 1$.
Vectors $\omega_t$ have sparsity $k$.  Step~8 can be implemented by sorting non-zero coordinates
of $\theta_t$ according to their weight $|\la x|\theta_t\ra|$. Sorting requires $O(m\log{m})$ operations, where $m$ is the sparsity of $\theta_t$.
This implies the claimed query complexity and the number of operations.

It remains to prove that the sequence of vectors $\phi_t$ generated by
Algorithm~\ref{alg:tpower} obeys  Eqs.~(\ref{claim1},\ref{claim2},\ref{claim3}).
We divide the proof into several lemmas.
First, let us  show that a single iteration of the standard (untruncated)  power method 
shrinks the tangent of the angle between $\psi$ and the evolved vector at least by the factor of $1-\gamma$.
\begin{lemma}
\label{lemma:1}
Consider any nonzero vector $\phi\in \CC^N$ such that $\la \psi|\phi\ra\ne 0$.
Then 
\[
T(A\phi) \le (1-\gamma) T(\phi).
\]
\end{lemma}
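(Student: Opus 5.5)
Decompose $\phi$ along the principal eigenvector and its orthogonal complement, then compare the two components after one application of $A$. Write $|\phi\ra = \alpha|\psi\ra + |\phi^\perp\ra$ with $\alpha=\la\psi|\phi\ra\neq 0$ and $\la\psi|\phi^\perp\ra=0$. Then $T(\phi)=\|\phi^\perp\|/|\alpha|$.

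\textbf{Step 1 (parallel component).} Since $A$ is Hermitian and $\psi$ is an eigenvector, the orthogonal complement of $\psi$ is invariant under $A$. Hence
\[
A|\phi\ra=\alpha\lambda_1|\psi\ra + A|\phi^\perp\ra,
\]
with $A|\phi^\perp\ra$ still orthogonal to $\psi$. In particular $\la\psi|A\phi\ra=\alpha\lambda_1\neq 0$, because $\lambda_1>0$ by positive definiteness. So $T(A\phi)$ is well defined and equals $\|A\phi^\perp\|/(|\alpha|\lambda_1)$.

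\textbf{Step 2 (perpendicular component).} Restricted to the complement of $\psi$, $A$ has eigenvalues $\lambda_2,\ldots,\lambda_N$, all lying in $(0,\lambda_2]$. Expanding $\phi^\perp$ in an orthonormal eigenbasis of this restriction gives $\|A\phi^\perp\|\le \lambda_2\|\phi^\perp\|$. Positivity of $A$ is what makes the operator norm on the complement equal to $\lambda_2$ rather than $\max(\lambda_2,|\lambda_N|)$.

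\textbf{Step 3 (combine).} Putting the two steps together,
\[
T(A\phi)\le \frac{\lambda_2}{\lambda_1}\,\frac{\|\phi^\perp\|}{|\alpha|}=(1-\gamma)\,T(\phi),
\]
using $\lambda_2/\lambda_1=1-\gamma$ by the definition of $\gamma$.

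There is no real obstacle here. The only points needing care are that $\la\psi|A\phi\ra\neq0$, so that $T(A\phi)$ is defined, and that positive definiteness bounds the perpendicular norm by $\lambda_2$.
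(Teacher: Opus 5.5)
Your proof is correct and follows essentially the same argument as the paper: both decompose $\phi$ into its component along $\psi$ and an orthogonal remainder, and both use the bound $\|A\psi^\perp\|\le\lambda_2\|\psi^\perp\|$ on the complement. The paper normalizes $\phi$, lower-bounds the overlap $|\langle\psi|A\phi\rangle|/\|A\phi\|$, and converts that back into a tangent; you read $T(A\phi)$ off directly as a ratio of perpendicular to parallel norms, which is slightly more direct and states explicitly that positive definiteness is what justifies the $\lambda_2$ bound.
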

\begin{proof}
Since $T(\phi)$ is scale-invariant, we can assume wlog
 that $\phi$ is a unit vector. Choose the phase of $\phi$ such that $\la \psi|\phi\ra$ is real
 and $\la \psi|\phi\ra>0$. 
Let $c = \la \psi|\phi\ra \in (0,1]$ and $s=\sqrt{1-c^2}$. 
Then 
\[
T(\phi)=\frac{s}{c}.
\]
Write $\phi = c \psi  +s \psi^\perp$
for some vector $\psi^\perp\in \CC^N$ such that $\|\psi^\perp\|=1$ and $\la \psi|\psi^\perp\ra=0$. Then
\[
\| A\phi\|^2 = c^2 \lambda_1^2 + s^2 \| A \psi^\perp\|^2 \le c^2 \lambda_1^2 + s^2 \lambda_2^2.
\]
Let $\theta=A\phi/\|A\phi\|$. Then
\begin{align}
\label{a_lower}
|\la \psi|\theta\ra| & = \frac{|\la \psi|A|\phi\ra|}{\|A \phi\|} \ge \frac{c\lambda_1}{\sqrt{c^2 \lambda_1^2 + s^2 \lambda_2^2}} \nonumber \\
& =c \cdot \frac1{\sqrt{c^2 + s^2 (1-\gamma)^2}}.
\end{align}
Here we used the identity $\lambda_2/\lambda_1=1-\gamma$.
Let $a=|\la \psi|\theta\ra|$ and $b=\sqrt{1-a^2}$ so that 
\be
\label{T(Aphi)}
T(A\phi)=\frac{b}{a}.
\ee
From Eq.~(\ref{a_lower}) one gets
\[
a^2 \ge \frac{c^2}{c^2 + s^2 (1-\gamma)^2} 
\]
and
\[
b^2=1-a^2 \le \frac{(1-\gamma)^2 s^2 }{c^2 + (1-\gamma)^2 s^2}.
\]
It follows that $b^2/a^2 \le (1-\gamma)^2 s^2/c^2$. Substituting this into Eq.~(\ref{T(Aphi)}) proves the lemma.
\end{proof}

The next lemma shows that the truncation  performed at Steps~8,9 of Algorithm~\ref{alg:tpower} 
cannot reduce the overlap between $\psi$ and the evolved vector too much.
\begin{lemma}
\label{lemma:2}
Consider any vector $\theta \in \CC^N$ with $\|\theta\|=1$. 
The following is true for any integer
$k\ge \chi$.
Let $F=\{x^1(\theta),\ldots,x^k(\theta)\}$ be the set of $k$ most heavy indices of $\theta$ and 
\[
|\omega\ra = \sum_{x\in F} \la x|\theta\ra |x\ra.
\]
Then
\[
|\la \psi|\omega\ra| \ge |\la \psi|\theta\ra| - \rho \sqrt{1-|\la \psi|\theta\ra|^2}.
\]
Recall that $\rho = \sqrt{\chi/k}$.
\end{lemma}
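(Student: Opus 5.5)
The plan is a direct triangle-inequality argument. The correction term $\la\psi|\theta\ra-\la\psi|\omega\ra$ only involves entries of $\theta$ that lie on the support of $\psi$ but were discarded by the truncation. Those entries can be dominated, using the ``heaviest $k$'' selection rule, by entries of $\theta$ lying off the support of $\psi$, which are controlled by $\sqrt{1-|\la\psi|\theta\ra|^2}$.

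Let $S=\mathrm{supp}(\psi)$, so $|S|=\chi$, and let $P_X$ denote the coordinate projector onto an index set $X$. Since $\psi$ vanishes off $S$ and $\omega=P_F\theta$, we have
\[
\la\psi|\theta\ra-\la\psi|\omega\ra=\la\psi|P_{S\setminus F}|\theta\ra .
\]
Cauchy--Schwarz together with $\|\psi\|=1$ then gives
\[
|\la\psi|\theta\ra-\la\psi|\omega\ra|\le\|P_{S\setminus F}\theta\| .
\]
By the reverse triangle inequality it therefore suffices to show
\[
\|P_{S\setminus F}\theta\|\le\rho\sqrt{1-|\la\psi|\theta\ra|^2}.
\]

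The main step is a counting comparison between $S\setminus F$ and $F\setminus S$. Put $m=|S\setminus F|$. Then $|F\cap S|=\chi-m$, so
\[
|F\setminus S|=k-\chi+m\ge m ,
\]
using $k\ge\chi$. Every entry of $\theta$ indexed by $S\setminus F$ is, by the definition of $F$ as the $k$ heaviest indices, no larger in modulus than every entry indexed by $F\setminus S$. Averaging squared moduli then gives
\[
\|P_{S\setminus F}\theta\|^2\le m\cdot\min_{x\in F\setminus S}|\la x|\theta\ra|^2\le\frac{m}{k-\chi+m}\,\|P_{F\setminus S}\theta\|^2 .
\]
The function $m\mapsto m/(k-\chi+m)$ is nondecreasing, and $m\le\chi$, so the prefactor is at most $\chi/k=\rho^2$. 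The case $m=0$ is trivial. I expect this counting/averaging step, with the care needed about ties and the case $k=\chi$, to be the only real obstacle.

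Finally, $F\setminus S$ lies in the complement of $S$, hence
\[
\|P_{F\setminus S}\theta\|\le\|(I-P_S)\theta\| .
\]
Because $|\psi\ra\la\psi|\le P_S$ (as $\psi$ is supported on $S$), we also have
\[
\|(I-P_S)\theta\|\le\|(I-|\psi\ra\la\psi|)\theta\|=\sqrt{1-|\la\psi|\theta\ra|^2},
\]
using $\|\theta\|=1$. Chaining these bounds yields $\|P_{S\setminus F}\theta\|\le\rho\sqrt{1-|\la\psi|\theta\ra|^2}$. Combined with the triangle inequality from the first paragraph, this gives
\[
|\la\psi|\omega\ra|\ge|\la\psi|\theta\ra|-\rho\sqrt{1-|\la\psi|\theta\ra|^2},
\]
as claimed.
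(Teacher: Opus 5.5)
Your proposal is correct and follows essentially the same route as the paper: the triangle inequality reduces the claim to bounding $\|P_{S\setminus F}\theta\|$, which you handle by comparing $S\setminus F$ with $F\setminus S$ through the heaviest-$k$ threshold, using the monotonicity of $m/(k-\chi+m)$ and the bound $\|(I-P_S)\theta\|\le\sqrt{1-|\la\psi|\theta\ra|^2}$. The differences are cosmetic: the paper uses $\tau=\min_{x\in F}|\la x|\theta\ra|$ and the decomposition $\theta=c\psi+s\psi^\perp$, whereas you take the minimum over $F\setminus S$, average over that set, and invoke the operator inequality $|\psi\ra\la\psi|\le P_S$.
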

\begin{proof}
Choose the phase of $\theta$ such that $\la \psi|\theta\ra$ is real and non-negative.
Let $c = \la \psi|\theta\ra \in [0,1]$ and $s=\sqrt{1-c^2}$. 
Write $|\theta\ra = c|\psi\ra + s|\psi^\perp\ra$ for some vector
$\psi^\perp\in \CC^N$ such that $\|\psi^\perp\|=1$ and $\la \psi|\psi^\perp\ra=0$.
Let $M=\mathrm{supp}(\psi)\subseteq \{1,\ldots,N\}$ be the support of $\psi$.
By assumption, $|M|=\chi$. 

Given a subset of indices $K\subseteq \{1,2,\ldots,N\}$, define a diagonal projector
\[
P_K = \sum_{x\in K} |x\ra\la x|.
\]
Then $|\omega\ra = P_F|\theta\ra$ and $|\psi\ra = P_M|\psi\ra$.
Therefore
\[
\la \psi|\omega\ra = \la \psi|P_F|\theta\ra = \la \psi|\theta\ra - \la \psi|(I-P_F)|\theta\ra.
\]
The triangle inequality gives
\[
|\la \psi|\omega\ra| \ge |\la \psi|\theta\ra| - |\la \psi|(I-P_F)|\theta\ra|.
\]
Moreover, since $|\psi\ra$ is supported on $M$, one has
\begin{align*}
|\la \psi|(I-P_F)|\theta\ra|
& = |\la \psi|P_M(I-P_F)|\theta\ra|\\ 
& \le \|P_M(I-P_F)|\theta\ra\|
= \|P_{M\setminus F}|\theta\ra\|.
\end{align*}
Thus
\be
\label{eq:tri}
|\la \psi|\omega\ra| \ge |\la \psi|\theta\ra| - \|P_{M\setminus F}|\theta\ra\|.
\ee
It remains to bound $\|P_{M\setminus F}|\theta\ra\|$ in terms of $s=\sqrt{1-|\la \psi|\theta\ra|^2}$.
Let 
\[
m= |M\setminus F|.
\]
If $m=0$ then $\|P_{M\setminus F}|\theta\ra\|=0$ and the lemma follows  from Eq.~(\ref{eq:tri}).
From now on assume $m\ge 1$.
Let
\[
\tau= \min_{x\in F} |\la x|\theta\ra|.
\]
Since $F$ contains $k$ most heavy indices of $\theta$, one must have 
$|\la x|\theta\ra|\le \tau$ for all $x\notin F$.
It follows that
\be\label{eq:missed}
\|P_{M\setminus F}|\theta\ra\|^2 =\sum_{x\in M\setminus F} |\la x|\theta\ra|^2 \le \tau^2 |M\setminus F| \le  m \tau^2.
\ee
On the other hand, 
\begin{align*}
|F\setminus M| & = |F|-|F\cap M| = |F| -|M|+|M\setminus F| \\
& = k - \chi + m.
\end{align*}
For each  $x\in F \setminus M$ one has $|\la x|\theta\ra|\ge \tau$. Thus
\be
\label{eq:outside}
\|(I-P_M)|\theta\ra\|^2 \ge \sum_{x\in F\setminus M} |\la x|\theta\ra|^2 \ge (k-\chi+m)\tau^2.
\ee
Recall that $|\theta\ra = c|\psi\ra + s|\psi^\perp\ra$.
Since $\psi$ is supported on $M$, one has 
$(I-P_M)|\theta\ra = s (I-P_M)|\psi^\perp\ra$.
Thus
\[
\|(I-P_M)|\theta\ra\|^2 \le s^2.
\]
Combining with Eq.~(\ref{eq:outside}) gives
\[
\tau^2 \le \frac{s^2}{k-\chi+m}.
\]
Substituting into Eq.~(\ref{eq:missed}) gives
\[
\|P_{M\setminus F}|\theta\ra\|^2 \le \frac{m}{k-\chi+m}s^2.
\]
Recall that $m\ge 1$. Furthermore, $m=|M\setminus F|\le |M|=\chi$.
Consider a function $f(m)=m/(k-\chi+m)$ with $m\in [1,\chi]$.
The assumption $k\ge \chi$ implies that $f(m)$ is non-decreasing for $m\in [1,\chi]$,
that is,
$f(m)\le f(\chi)=\chi/k$. This gives
\[
\|P_{M\setminus F}|\theta\ra\|^2 \le\frac{\chi}{k} s^2.
\]
Therefore
\[
\|P_{M\setminus F}|\theta\ra\| \le \sqrt{\frac{\chi}{k}}\,s
=\rho \sqrt{1-|\la \psi|\theta\ra|^2}.
\]
Plugging this into Eq.~(\ref{eq:tri}) proves the lemma.
\end{proof}

The next lemma converts the lower bound on the overlap with $\psi$
from Lemma~\ref{lemma:2} to an upper bound on the tangent of the angle with $\psi$.
\begin{lemma}
\label{lemma:2tangent}
Let $\theta,\omega\in \CC^N$ be the vectors
defined in Lemma~\ref{lemma:2}.
If $\rho T(\theta)<1$ then $\la\psi|\omega\ra\neq 0$ and
\[
T(\omega)\ \le\ \frac{T(\theta)+\rho}{1-\rho\,T(\theta)}.
\]
\end{lemma}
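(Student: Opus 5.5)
We will argue entirely in terms of the unit vector $\theta$ of Lemma~\ref{lemma:2}, using that $T$ is scale-invariant. As in the proof of Lemma~\ref{lemma:2}, choose the phase so that $c=\la\psi|\theta\ra\ge0$ and set $s=\sqrt{1-c^2}$. The case $c=0$ must be handled first: then $T(\theta)=\infty$ and the hypothesis $\rho T(\theta)<1$ fails, so we may assume $c>0$ and $T(\theta)=s/c$.

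Lemma~\ref{lemma:2} gives $|\la\psi|\omega\ra|\ge c-\rho s$. Dividing by $c$, this lower bound equals $c(1-\rho T(\theta))>0$ by hypothesis, which gives $\la\psi|\omega\ra\ne0$.

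For the tangent we need an upper bound on the orthogonal part $\|(I-|\psi\ra\la\psi|)\omega\|$. We will use
\[
\|\omega\|^2=|\la\psi|\omega\ra|^2+\|(I-|\psi\ra\la\psi|)\omega\|^2 .
\]
A cleaner route is to decompose $\omega=P_F\theta=c\,P_F\psi+s\,P_F\psi^\perp$. Then the component of $\omega$ orthogonal to $\psi$ equals $c(I-|\psi\ra\la\psi|)P_F\psi+s(I-|\psi\ra\la\psi|)P_F\psi^\perp$. The second term has norm at most $s$. For the first, note $(I-|\psi\ra\la\psi|)P_F\psi=-(I-|\psi\ra\la\psi|)(I-P_F)\psi$, so its norm is at most $\|P_{M\setminus F}\psi\|$. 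This quantity is awkward to control directly.

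Instead, rewrite $\omega=\theta-(I-P_F)\theta$. Then
\[
(I-|\psi\ra\la\psi|)\omega=s\psi^\perp-(I-|\psi\ra\la\psi|)(I-P_F)\theta ,
\]
whose norm is at most $s+\|(I-P_F)\theta\|$. Bounding $\|(I-P_F)\theta\|$ is more delicate, since it includes mass outside $M$ and not only $\|P_{M\setminus F}\theta\|$.

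The key observation is that $(I-P_F)\theta$ and $P_F\theta$ have disjoint supports. Hence
\[
\|(I-|\psi\ra\la\psi|)\omega\|^2=\|\omega\|^2-|\la\psi|\omega\ra|^2\le 1-\|(I-P_F)\theta\|^2-|\la\psi|\omega\ra|^2 .
\]
Let $a=|\la\psi|\omega\ra|$. We have $a\ge c-\rho s$, and also $a\le c+\|P_{M\setminus F}\theta\|$; the latter combined with $\|(I-P_F)\theta\|\ge\|P_{M\setminus F}\theta\|$ handles the other direction. It therefore suffices to prove $1-a^2\le(s+\rho c)^2$ whenever $a\ge c-\rho s$, with the truncation loss accounted for. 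Since the numerator is at most $\sqrt{1-a^2}$, a sufficient target is
\[
\frac{\sqrt{1-a^2}}{a}\le\frac{s+\rho c}{c-\rho s}.
\]
Dividing numerator and denominator by $c$ turns the right-hand side into exactly $(T(\theta)+\rho)/(1-\rho T(\theta))$. The function $a\mapsto\sqrt{1-a^2}/a$ is decreasing, so this holds when $a\ge(c-\rho s)/\sqrt{(c-\rho s)^2+(s+\rho c)^2}$. The denominator here equals $\sqrt{1+\rho^2}\ge1$, so the condition follows from $a\ge c-\rho s$, which is exactly Lemma~\ref{lemma:2}.

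There is one gap in this chain: it bounds the orthogonal part by $\sqrt{1-a^2}$, which uses $\|\omega\|\le1$ (true, since $\omega$ is a truncation of the unit vector $\theta$). That gives $T(\omega)\le\sqrt{\|\omega\|^2-a^2}/a\le\sqrt{1-a^2}/a$, so the gap closes. The main subtlety to check is this monotonicity and norm argument. It is the key step, and it avoids any direct control of the off-support mass.
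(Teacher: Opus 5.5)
Your final chain is correct and is essentially the paper's proof. It uses Lemma~\ref{lemma:2} for $|\la\psi|\omega\ra|\ge c-\rho s>0$, then $\|\omega\|\le 1$ to get $T(\omega)\le\sqrt{1-a^2}/a$, then the monotone decrease of $x\mapsto\sqrt{1-x^2}/x$, and closes with the identity $(c-\rho s)^2+(s+\rho c)^2=1+\rho^2$; the paper's inequality $1+T^2-(1-\rho T)^2\le(T+\rho)^2$ is the same fact, $1\le 1+\rho^2$, written in tangent variables. The exploratory detours through $P_F\psi$, $(I-P_F)\theta$, and the upper bound $a\le c+\|P_{M\setminus F}\theta\|$ play no role and should be deleted from a written proof.
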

\begin{proof}
By assumption, $\theta$ is a unit vector.
The assumption $\rho T(\theta)<1$ implies $T(\theta)<\infty$, that is, $\la \psi|\theta\ra\ne 0$.
Let $a=|\la\psi|\theta\ra|\in(0,1]$ and $b=\sqrt{1-a^2}$. Then $T(\theta)=b/a$.
Let 
\[
\widehat \omega =\frac{\omega}{\|\omega\|}
\]
be the normalized version of $\omega$.
Since $\omega$ is obtained from a unit vector $\theta$ by zeroing out some coordinates,
one has $\|\omega\|\le 1$. As a consequence, normalization of $\omega$
can only increase its overlap with $\psi$. 
Thus
\be\label{eq:overlapphi}
|\la\psi|\widehat \omega\ra| \ge\ |\la\psi|\omega\ra|
\ \ge\ a-\rho b.
\ee
Here the second inequality uses Lemma~\ref{lemma:2}.
By assumption, $\rho T(\theta)=\rho b/a<1$.
Thus $a-\rho b=a(1-\rho b/a)>0$ and  the right-hand side of Eq.~(\ref{eq:overlapphi})
 is positive. In particular
$\la\psi|\widehat \omega\ra\neq 0$ which implies $\la \psi|\omega\ra\ne 0$.
Using Eq.~(\ref{eq:overlapphi}) 
one gets
\be
\label{Tomega_upper1}
T(\omega)=T(\widehat \omega)=\frac{\sqrt{1-|\la\psi|\widehat \omega\ra|^2}}{|\la\psi|\widehat \omega\ra|}
\le \frac{\sqrt{1-(a-\rho b)^2}}{a-\rho b}.
\ee
We have
\[
a-\rho b = a(1-\rho T(\theta))
\]
and
\begin{align*}
1-(a-\rho b)^2
& =1-a^2(1-\rho T(\theta))^2\\
& 
=\frac{1+T(\theta)^2-(1-\rho T(\theta))^2}{1+T(\theta)^2}.
\end{align*}
Here the second equality uses the identity  $a^2=1/(1+T(\theta)^2)$.
Plugging this into Eq.~(\ref{Tomega_upper1}) gives
\be
\label{Tomega_upper2}
T(\omega)\le
\frac{\sqrt{\,1+T(\theta)^2-(1-\rho T(\theta))^2\,}}{1-\rho T(\theta)}.
\ee
Finally, note that
\[
1+T^2-(1-\rho T)^2 = T^2 + 2\rho T - \rho^2T^2 \le (T+\rho)^2
\]
where $T\equiv T(\theta)$.
Plugging this into Eq.~(\ref{Tomega_upper2}) gives
\[
T(\omega)\le \frac{T(\theta)+\rho}{1-\rho T(\theta)},
\]
as claimed.
\end{proof}

\begin{lemma}
\label{lemma:3}
Suppose $k\ge k_\star$.
The sequence of vectors $\phi_t$ generated by
Algorithm~\ref{alg:tpower} obeys  Eqs.~(\ref{claim1},\ref{claim2},\ref{claim3}).
\end{lemma}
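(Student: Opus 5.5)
The plan is to prove the three claims in the order (\ref{claim1}), then the dynamics of $f$ giving (\ref{claim3}) and (\ref{claim2}). In fact we need a fixed-point analysis of $f$ first, because (\ref{claim1}) relies on the invariant $\rho\,\xi_t(1-\gamma)<1$, which comes from it.

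\emph{Fixed points of $f$.} On the domain $x<1/(\rho(1-\gamma))$, $f(x)=((1-\gamma)x+\rho)/(1-\rho(1-\gamma)x)$ is increasing and convex. Its fixed points solve
\[
\rho(1-\gamma)x^2-\gamma x+\rho=0,
\]
with roots $x_\pm=\bigl(\gamma\pm\sqrt{\gamma^2-4\rho^2(1-\gamma)}\bigr)/(2\rho(1-\gamma))$. Since $k\ge k_\star\ge 64\chi/(\gamma^2\delta)$, we have $\rho\le\gamma\sqrt{\delta}/8\le\gamma/8$, so the discriminant is at least $\gamma^2/2$ and both roots are real. Rationalizing gives
\[
x_-=\frac{2\rho}{\gamma+\sqrt{\cdot}}\le\frac{2\rho}{\gamma},
\]
while $x_+\ge\gamma/(2\rho(1-\gamma))\ge 4/\sqrt{\delta}$. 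The initial value satisfies $\xi_0=T(\psi_{in})=\sqrt{1-\delta}/\sqrt{\delta}<1/\sqrt{\delta}<x_+$. Because $f$ is increasing with $f(x)<x$ on $(x_-,x_+)$ and $f(x)>x$ below $x_-$, the sequence $\xi_t$ stays in $[0,x_+)$, is monotone, and converges to $\xi_\star=x_-$. This proves (\ref{claim3}).

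\emph{Rate (\ref{claim2}).} For the rate, I would bound $f'$ on the interval between $\xi_0$ and $x_-$. We have
\[
f'(x)=\frac{(1-\gamma)(1-\rho^2)}{(1-\rho(1-\gamma)x)^2}.
\]
For $x\le 1/\sqrt{\delta}$ we get $\rho x\le\gamma/8$, so $f'(x)\le(1-\gamma)/(1-\gamma/8)^2\le 1-\gamma/2$ for $\gamma\in(0,1)$; this elementary inequality needs checking but has room. The mean value theorem then gives $|\xi_t-x_-|\le(1-\gamma/2)^t|\xi_0-x_-|\le e^{-t\gamma/2}\delta^{-1/2}$, using $|\xi_0-x_-|\le\max(\xi_0,x_-)\le\delta^{-1/2}$.

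\emph{Induction for (\ref{claim1}).} The base case is $T(\phi_0)=\xi_0$ with $\la\psi|\psi_{in}\ra\ne0$. For the step, assume $T(\phi_{t-1})\le\xi_{t-1}$. Lemma~\ref{lemma:1} gives $T(\theta_t)\le(1-\gamma)\xi_{t-1}$; this uses positivity of $A$ so that $\la\psi|A\phi\ra\ne0$. Since $\xi_{t-1}\le1/\sqrt{\delta}$, we have $\rho T(\theta_t)\le\gamma/8<1$. Also $k\ge\chi$ holds by the choice of $k_\star$. Lemma~\ref{lemma:2tangent}, applied to the normalized $\theta_t$ (using scale invariance of $T$), then gives $\la\psi|\omega_t\ra\ne0$ and
\[
T(\phi_t)=T(\omega_t)\le\frac{T(\theta_t)+\rho}{1-\rho T(\theta_t)}.
\]
The right-hand side is increasing in $T(\theta_t)$, so it is at most $f(\xi_{t-1})=\xi_t$.

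The main obstacle is purely bookkeeping in the rate step. We need to make sure all iterates stay where $f'\le1-\gamma/2$, i.e.\ that $\xi_t\le\max(\xi_0,x_-)\le\delta^{-1/2}$ by monotonicity. We also need to verify the scalar inequality $(1-\gamma)\le(1-\gamma/2)(1-\gamma/8)^2$. If that inequality turns out to be too tight, I would fall back on using the contraction factor $(1-\gamma)/(1-\gamma/8)^2\le e^{-\gamma/2}$ directly.
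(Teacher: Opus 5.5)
Your proposal is correct and follows essentially the paper's route. It uses the smaller fixed point $\xi_\star\le 2\rho/\gamma$, a bound $f'\le 1-\gamma/2$ on an interval containing $\xi_0$ and $\xi_\star$ (your $[0,\delta^{-1/2}]$ sits inside the paper's $[0,\gamma/(8\rho)]$), and the induction through Lemmas~\ref{lemma:1} and~\ref{lemma:2tangent} using $\rho\,T(\theta_t)\le\gamma/8$; your two-root sign analysis of $f(x)-x$ plays the role of the paper's monotonicity-plus-contraction induction.

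One slip: $f'(x)=(1-\gamma)(1+\rho^2)/(1-\rho(1-\gamma)x)^2$, not $(1-\rho^2)$, so you cannot simply drop that factor. It is harmless, because $\rho^2\le\gamma^2/64$ still gives $(1-\gamma)(1+\rho^2)\le(1-\gamma/2)(1-\gamma/8)^2$ with slack at least $\gamma/4$, so the rate bound (\ref{claim2}) goes through unchanged.
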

\begin{proof}
Recall that we consider a sequence $\{\xi_t\}_{t\ge 0}$ defined as
\[
\xi_0 = T(\psi_{in}), \qquad \xi_t = f(\xi_{t-1}), \qquad f(x)=\frac{(1-\gamma)x +\rho}{1-\rho(1-\gamma) x}
\]
for $t\ge 1$. Let us first establish properties of this sequence. 
\begin{prop}
\label{prop:Lip}
Suppose $\rho^2 \le \gamma/4$.
Consider an interval
\[
I= [0,\eta], \qquad \eta=\frac{\gamma}{8\rho}.
\]
For any  $x,y \in I$ with $x\le y$ one has
\be
\label{shrinking_factor}
0\le f(y)-f(x) \le  (1-\gamma/2) (y-x).
\ee
\end{prop}
\begin{proof}
A simple algebra gives
\[
f'(x)=\frac{(1-\gamma)(1+\rho^2)}{(1-\rho(1-\gamma)x)^2}.
\]
For all $x\in [0,\eta]$ one has $\rho(1-\gamma)x\le \rho x \le \gamma/8$.
Since $\rho^2\le \gamma/4$, one gets
\[
0\le f'(x)\le  \frac{(1-\gamma)(1+\gamma/4)}{(1-\gamma/8)^2}\le 1-\gamma/2.
\]
Now Eq.~(\ref{shrinking_factor}) follows from the mean value theorem.
\end{proof}
Note that $k\ge k_\star$ implies $k\ge 64 \chi \gamma^{-2} \delta^{-1}$, which 
is equivalent to 
\be
\label{rho_condition_2}
\rho^2 \le \frac{\gamma^2 \delta}{64}.
\ee
Since $\gamma \in (0,1)$, from
Eq.~(\ref{rho_condition_2})  one gets $\rho^2\le \gamma/4$, as required for Proposition~\ref{prop:Lip},
and  $k\ge \rho^{-2} \chi \ge \chi$, as required for Lemma~\ref{lemma:2}.
By definition, 
\be
\label{xi_0_vs_delta}
\xi_0 = \frac{\sqrt{1-|\la \psi|\psi_{in}\ra|^2}}{|\la \psi|\psi_{in}\ra|} = \frac{\sqrt{1-\delta}}{\sqrt{\delta}} \le \frac1{\sqrt{\delta}}.
\ee
Thus  $\xi_0\in [0,\eta]$ whenever  $\delta^{-1/2}\le \gamma/(8\rho)$, which is equivalent to Eq.~(\ref{rho_condition_2}).

Let $\xi_\star=\lim_{t\to \infty} \xi_t$ be the limiting point.
It must obey  $f(\xi_\star)=\xi_\star$.  Solving the equation gives
\[
\xi_\star=\frac{\gamma-\sqrt{\gamma^2-4(1-\gamma)\rho^2}}{2\rho(1-\gamma)}.
\]
Note that $\gamma^2-4(1-\gamma)\rho^2\ge 0$ due to Eq.~(\ref{rho_condition_2}).
We have
\begin{align}
\label{xi-fixed}
\xi_\star & =\frac{\gamma-\gamma \sqrt{1-4(1-\gamma)\rho^2\gamma^{-2}}}{2\rho(1-\gamma)} \nonumber \\
& 
\le \frac{\gamma - \gamma (1-4(1-\gamma)\rho^2\gamma^{-2})}{2\rho(1-\gamma)}=\frac{2\rho}{\gamma}
\end{align}
proving Eq.~(\ref{claim3}).
Note that $\xi_\star \in [0,\eta]$ provided that $2\rho/\gamma \le \gamma/(8\rho)$, that is, 
$\rho^2 \le \gamma^2/16$, which follows from Eq.~(\ref{rho_condition_2}).
The above shows that $\xi_0,\xi_\star \in [0,\eta]$.
\begin{prop}
\label{prop:monotone}
For all $t\ge 0$ one has $\xi_t \in [0,\eta]$ and 
\be
\label{dist_recursion}
|\xi_{t}-\xi_\star| \le (1-\gamma/2)^t |\xi_0 - \xi_\star|.
\ee
\end{prop}
\begin{proof}
Consider the case $\xi_\star \le \xi_0$.
Suppose $\xi_\star\le \xi_{t-1} \le \eta$ for some $t\ge 1$ (initially $t=1$).
Applying Proposition~\ref{prop:Lip} with $x=\xi_\star$ and $y=\xi_{t-1}$ one gets
\[
0\le f(\xi_{t-1}) - f(\xi_\star) \le (1-\gamma/2)(\xi_{t-1}-\xi_\star)
\]
Since $f(\xi_{t-1})=\xi_t$ and $f(\xi_\star)=\xi_\star$, one gets
\[
0\le \xi_t - \xi_\star \le  (1-\gamma/2)(\xi_{t-1}-\xi_\star).
\]
This implies $\xi_t \in [\xi_\star,\xi_{t-1}]$.
Proceeding inductively shows that   $\xi_t \in [0,\eta]$
and
\[
0\le \xi_t - \xi_\star \le (1-\gamma/2)^t (\xi_0-\xi_\star)
\]
for all $t\ge 0$. 
This proves  the claim 
 in the case $\xi_\star\le \xi_0$.
The  case $\xi_\star \ge \xi_0$ is completely analogous.
\end{proof}
From Eqs.~(\ref{rho_condition_2},\ref{xi-fixed}) one gets $\xi_\star \le \delta^{1/2}/4 \le 1/4$. 
From  Eqs.~(\ref{xi_0_vs_delta}) one gets $\xi_0\le \delta^{-1/2}$. Thus
$|\xi_0-\xi_\star|\le \max{(\xi_0,\xi_\star)}\le \delta^{-1/2}$. Substituting this into Eq.~(\ref{dist_recursion}) gives
\[
|\xi_{t}-\xi_\star|  \le (1-\gamma/2)^t \delta^{-1/2} \le e^{-t \gamma/2} \delta^{-1/2}.
\]
This proves Eq.~(\ref{claim2}).

Finally, let us prove Eq.~(\ref{claim1}), that is, $T(\phi_t)\le \xi_t$ for all $t\ge 0$.
Let $\theta_t$ and $\omega_t$ be the vectors defined  in Algorithm~\ref{alg:tpower}.
By definition, $|\theta_t\ra=A|\phi_{t-1}\ra$.
We shall use induction in $t$ to prove that 
\be
\label{induction_1}
\rho T(\phi_t)<1
\ee
and
\be
\label{induction_2}
T(\phi_t) \le \xi_t.
\ee
The base of induction is $t=0$. 
By definition,  $T(\phi_0)=\xi_0$.
The inclusion $\xi_0\in [0,\eta]$ gives
$\rho T(\phi_0) \le \rho \eta =\gamma/8<1$ since $\gamma \in (0,1)$.
This proves Eq.~(\ref{induction_1},\ref{induction_2}) at $t=0$.

Consider the induction step. Suppose  $t\ge 1$ and we have already proved that 
\be
\label{induction_1'}
\rho T(\phi_{t-1})<1
\ee
and
\be
\label{induction_2'}
T(\phi_{t-1}) \le \xi_{t-1}.
\ee
By definition,  $|\theta_{t}\ra = A|\phi_{t-1}\ra$.
From Eq.~(\ref{induction_1'}) one gets $\la \psi|\phi_{t-1}\ra\ne 0$ 
(since otherwise the tangent $T(\phi_{t-1})$ becomes infinite).
Thus we can apply Lemma~\ref{lemma:1} with 
$\phi=\phi_{t-1}$ obtaining 
\be
T(\theta_t) = T(A\phi_{t-1}) \le (1-\gamma)T(\phi_{t-1}).
\ee
Let us apply Lemma~\ref{lemma:2tangent} with the vectors
$\theta=\theta_t/\|\theta_t\|$ and 
$\omega=\omega_t/\|\theta_t\|$. The condition $\rho T(\theta)<1$
of the lemma is satisfied. Indeed, from  Eq.~(\ref{induction_2'}) 
and the inclusion $\xi_{t-1}\in [0,\eta]$
one gets
\begin{align*}
\rho T(\theta)& =\rho T(\theta_t) \le \rho (1-\gamma) T(\phi_{t-1})\\
& 
\le
 \rho \xi_{t-1}
\le
 \rho \eta = \frac{\gamma}8 <1.
\end{align*}
Thus applying Lemma~\ref{lemma:2tangent} is justified and the lemma gives
\begin{align*}
T(\phi_t)& =T(\omega_t) \le \frac{T(\theta_t) + \rho}{1-\rho T(\theta_t)} \\
&  \le \frac{(1-\gamma) T(\phi_{t-1}) + \rho}{1-\rho (1-\gamma)T(\phi_{t-1})}
=f(T(\phi_{t-1})).
\end{align*}
Here we used scale-invariance of the tangent function.
Recall that the function $f(x)$ is non-decreasing for $x\in [0,\eta]$, see Proposition~\ref{prop:Lip}.
From Eq.~(\ref{induction_2'}) and the inclusion $\xi_{t-1}\in [0,\eta]$ one gets
\[
T(\phi_t) \le f(T(\phi_{t-1}))\le f(\xi_{t-1}) = \xi_t
\]
proving Eq.~(\ref{induction_2}). Finally, $\xi_t \in [0,\eta]$
gives $\rho T(\phi_t) \le \rho \xi_t \le \rho \eta =\gamma/8<1$ proving Eq.~(\ref{induction_1}).
This completes the induction step and the proof of Eq.~(\ref{claim1}).

\end{proof}

\end{proof}

\subsection{Comparison to truncated Arnoldi's method}
\label{sapp:comparison_tarn}

We compare the truncated power method (TPM) to truncated Arnoldi's method (TAM; see \Cref{ssapp:non_diagonalization_based}) for the Hamiltonian defined in \Cref{sec:setup}.
We test two versions of the TPM:
\begin{enumerate}
    \item the original expectation value method as in \Cref{alg:tpower}, to which the proof of convergence in the previous subsection applies.
    \item the modified, diagonalization-based variant described in \Cref{sec:sis_intro}: instead of taking our energy estimate to be the expectation value of the Hamiltonian as in Step 12 of \Cref{alg:tpower}, we project and diagonalize on the support of the final vector $|\phi_L\rangle$. As noted above, this cannot yield a higher energy error than the expectation value method.
\end{enumerate}
We remark that at each iteration, 2. yields either equal of lower energies than 1.

\begin{figure}[t]
    \centering
    \includegraphics[width=\linewidth]{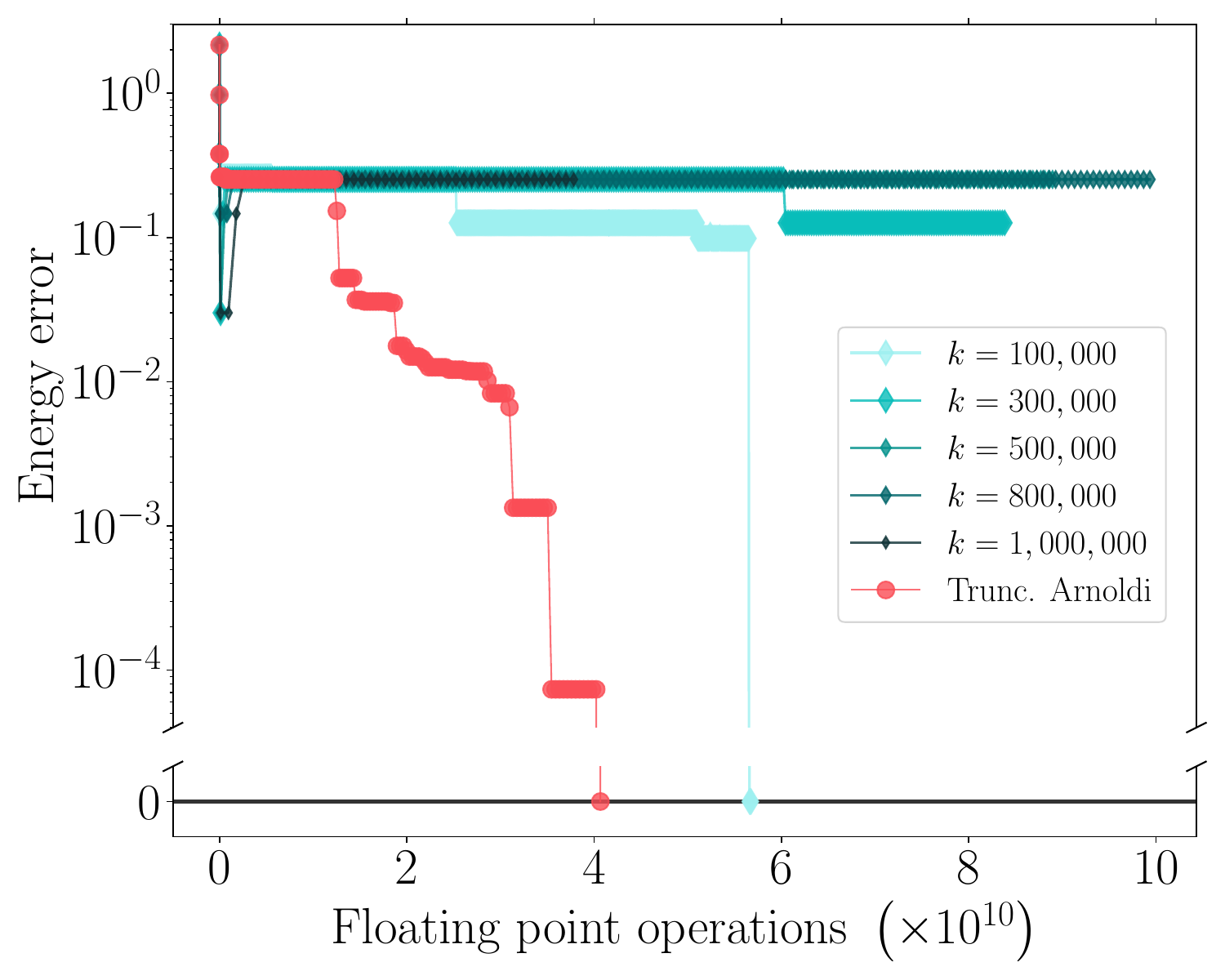}
    \caption{Comparison between truncated Arnoldi's method (TAM) with $M = 5\times10^4$ (see~\cref{fig:trunc_arn_all_runs}) and the diagonalization-based variant of truncated power method (TPM), in terms of energy error as a function of floating point operations. Several values of the cutoff $k$ for TPM are used, while only the most performance run for TAM is shown. For TPM, we show the energy error resulting in the projection and diagonalization of the Hamiltonian to the support obtained at each iteration. These results illustrate that, even though TPM possesses a provable convergence guarantee for sufficiently high $k$, for this problem instance TAM is more performant in practice.}
    \label{fig:tpm_tam_comparison}
\end{figure}

To compare TPM to TAM, we employed the second, diagonalization based variant of TPM, since it yields higher performance at more comparable cost to TAM.
Even so, a direct comparison based on subspace dimension makes less sense in this case than it does for comparing the other sparse iterative solvers, since the diagonalization dimension is always fixed to $k$ for TPM, and does not grow over the course of the iteration. Additionally, TAM requires an additional orthogonalization step not present in TPM. For these reasons, we instead compare the energy errors as function of the number of floating point operations.

For this Hamiltonian, $\chi=512$, $\gamma\approx2.9\times10^{-5}$, and $\delta\approx3.4\times10^{-11}$, yielding a value of the theory cutoff $k_\star$ as given by \eqref{eq:kstar} that is at least $\sim4.6\times10^{23}$. Storing this vector with single-precision float components in memory would require $1.6 \times 10^9 \; \mathrm{PB}$, orders of magnitude beyond the reach of current supercomputers. Instead of relying on the theoretical estimate, we test a range of smaller values of $k$, for which the performance of TPM will be heuristic. The results are shown in \Cref{fig:tpm_tam_comparison}.

As reflected in \Cref{fig:tpm_tam_comparison}, both TAM and TPM solve this guided sparse ground state problem exactly. We ran TPM for several choices of the cutoff $k$, for as many as 700+ iterations. For all choices of $k$, we observe that the cost to solution is greater for TPM as compared to TAM, as measured by the number of floating point operations. 

For completeness, we also tested the original expectation value version of TPM in~\cref{fig:tpm_vs_iter}. We observe the monotonic decrease of the energy error when the energies are estimated directly from the power method vector. The estimation obtained after diagonalizing in the subspace spanned by the support of the power-method vector yields a lower estimate than the direct expectation value, at the expense of not monotonically decreasing with progressing iterations.

\begin{figure}[t]
    \centering
    \includegraphics[width=\linewidth]{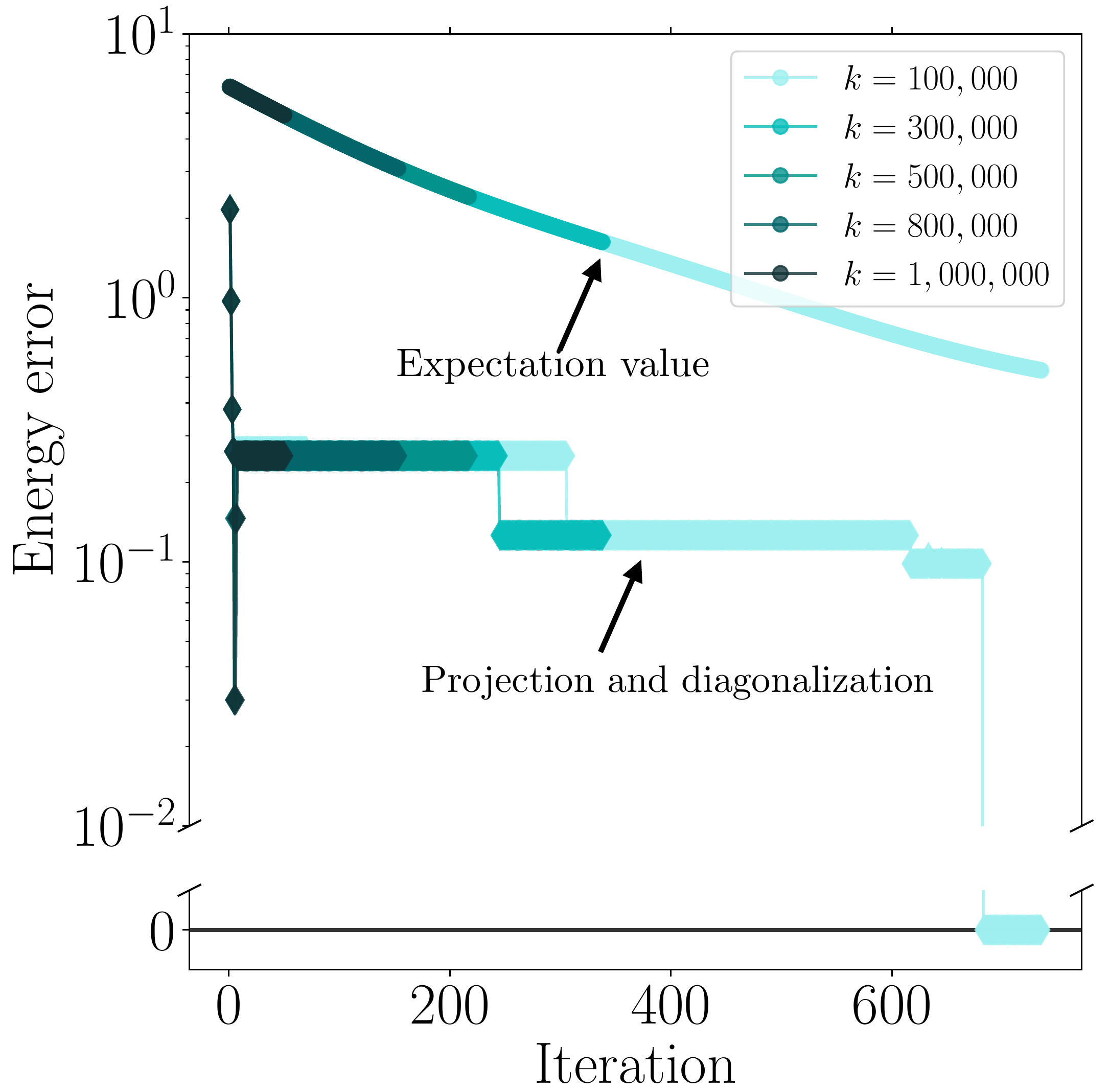}
    \caption{Comparison of the energy error at each iteration obtained from the truncated power method. The energy error obtained from the direct expectation value is compared to the energy obtained by diagonalizing in the subspace reached at each iteration of TPM. Different truncation values $k$ are considered.}
    \label{fig:tpm_vs_iter}
\end{figure}

\section{Tensor network simulation of SKQD}
\label{app:tn_sims}

Throughout this work, tensor network simulations are primarily used as a tool to guide and benchmark quantum experiments. At the same time, they also serve as a classical adversary: for a general quantum advantage, classical simulation of the SKQD subroutine must become prohibitively expensive, a threshold that is not reached for the Hamiltonian studied experimentally in this work. We detail our use of tensor networks in both of these roles below.

\paragraph{As a guiding and diagnostic tool.}
Once a method of approximating time-evolution is determined, the main experimental parameters of SKQD are the Krylov dimension and the timestep $\Delta t$.
The former can be updated adaptively if necessary.
In principle, the same is true of the latter, but it is preferable to start out with a good guess.
The proof of convergence of SKQD provides a theoretical value of $\Delta t = \pi/\|H\|$ that should permit convergence, but in practice, more rapid convergence has been obtained with larger values in the range of $15\times$ to $30\times$ the theoretical value~\cite{yu2025quantum}.
In addition, for high-dimensional problems, the spectral norm $\|H\|$ is classically challenging to calculate, so we instead upper bound it by the $1$-norm of the Pauli coefficients in $H$.

Based on~\cite{yu2025quantum} and earlier test Hamiltonians in the family presented in \Cref{sec:construction}, we selected $\Delta t = 25\pi/\|H\|$.
We then used simulations with belief propagation to validate this choice and ensure that we were not setting the experiment up for failure. In particular, we carry out low bond-dimension ($\chi=16$) simulations of the Rustiq-compiled, pre-AQC quantum circuits and leverage knowledge of the support configurations to query their amplitudes in each Krylov basis state. Following this, we estimate the number of expected support configurations given a particular sampling rate, choosing the optimal value of $\Delta t$ based on qualitative trends -- see Fig.~\ref{fig:tn_dt_sweep}. This value is then used for the circuits input to the approximate quantum compilation routine discussed in Sec.~\ref{app:aqc_tensor}. We also use these simulations to provide a sense for the number of samples needed per basis state for different target energies and Krylov dimensions in the absence of noise (see Fig.~\ref{fig:tn_sampling_rate}).

Although this procedure leverages prior knowledge of the support configurations, it is motivated by efficiency rather than necessity: absent this knowledge, we could alternatively sample from the circuits, diagonalize the projected Hamiltonian, and use the ground state energy as a guide to choose $\Delta t$ to similar effect. Furthermore, this strategy does not require the full SKQD loop to be classically simulable: one could imagine Hamiltonians where basis states $|\psi_k\rangle$ are simulable for only $k\leq k_{\textrm{classical}}$, and $\Delta t$ is chosen based on early trends.

Separate from their role in the experimental design, we also use tensor network simulations to validate experimental performance by comparing the number of support configurations found in experiment to theoretical expectation based on simulations of the same (post-AQC) circuits, as in Figs.~\ref{fig:results_configurations} and \ref{fig:tn_kQ_j}. To produce these plots, we simulated the full suite of quantum circuits  (i.e, $1\leq k_{\textrm{AQC}} \leq 20$ and $k_Q=0,1$ and $2$). We found that a bond dimension of $\chi=64$ was sufficient to achieve estimated fidelities above 0.99. As with the simulations of the pre-AQC circuits, we estimate the expected number of configurations from the probability amplitude of all support configurations for each basis state.

\begin{figure}
    \centering
    \includegraphics[width=1\linewidth]{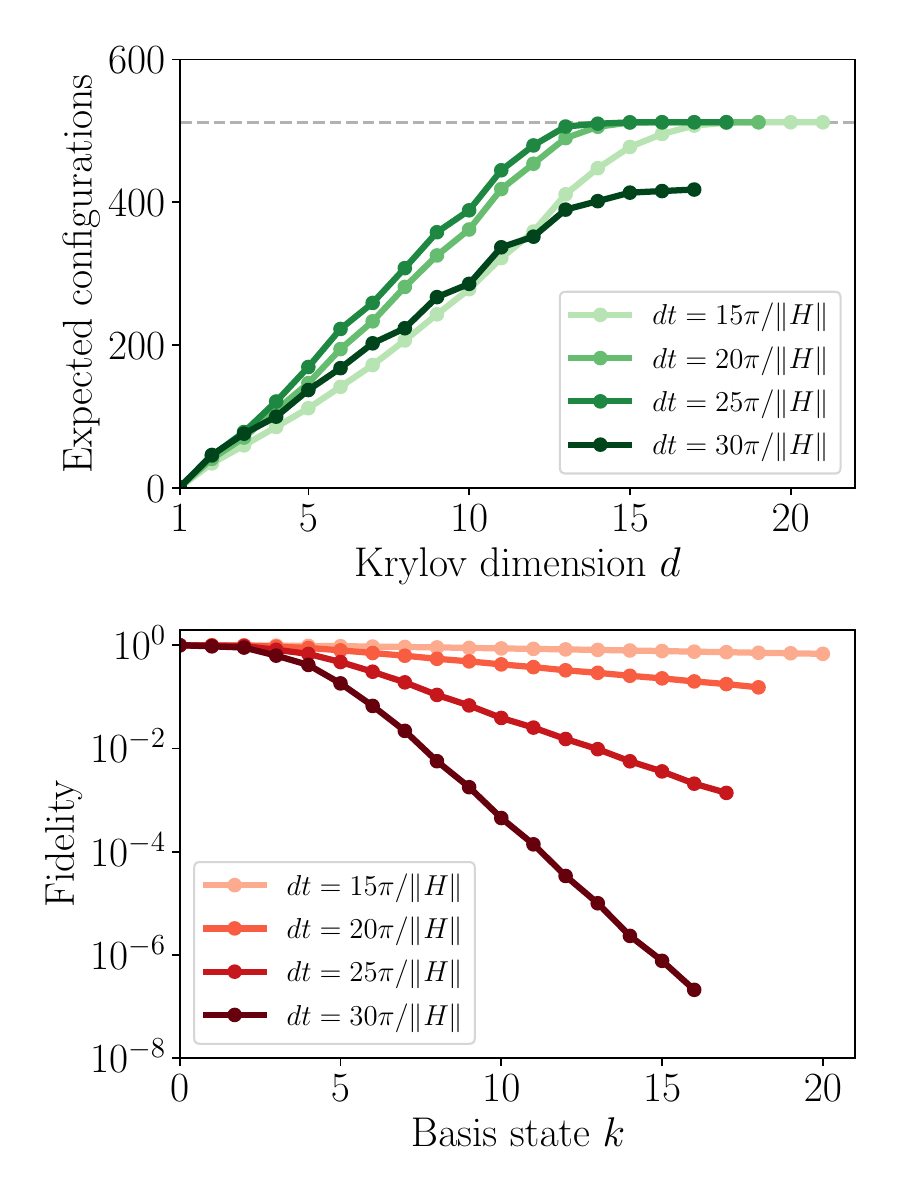}
    \caption{Using tensor network simulations to select the hyperparemeter $dt$. {\bf (Top)} The expected number of support configurations found as a function of Krylov dimension. We use low-fidelity tensor network simulations ($\chi=16$) to quickly estimate the rate of convergence for different choices of $dt$. In particular, we carry out simulations of pre-AQC time evolution circuits and estimate the cumulative expected good configuration count from the probability amplitude of each good configuration for each Krylov basis state $k$. {\bf (Bottom)} The estimated simulated state fidelity for the  basis state $k$ as a function of $dt$.}
    \label{fig:tn_dt_sweep}
\end{figure}

\begin{figure}
    \centering
    \includegraphics[width=1\linewidth]{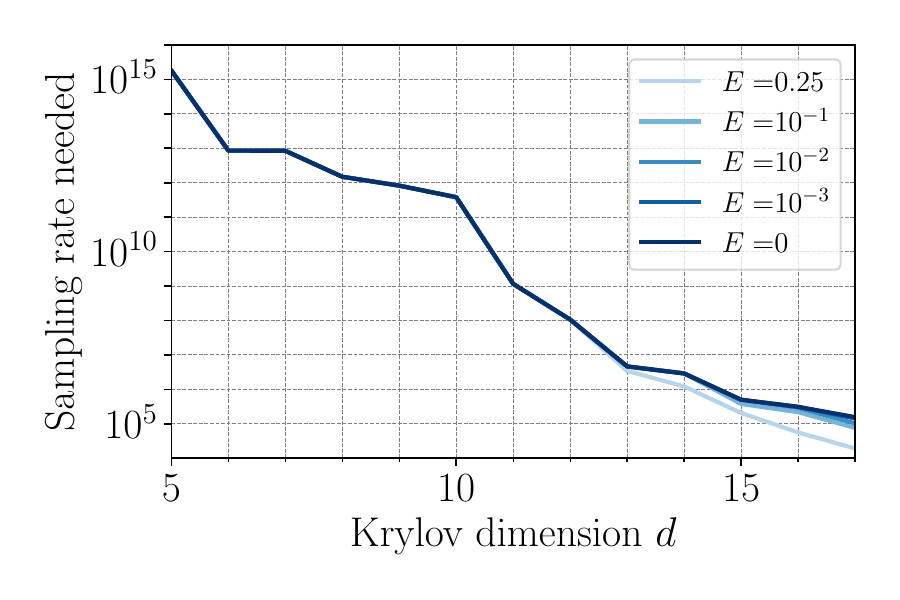}
    \caption{Estimated samples needed per basis state to achieve different target energies for $dt=25\pi/\|H\|$. Using tensor network simulations of the SKQD time-evolution circuits, we can estimate the number of samples needed as a function of Krylov dimension $d$. As these estimates are computed from low fidelity simulations (see Fig.~\ref{fig:tn_sampling_rate}, lower panel), this data provides a rough guide for experiment design rather than an exact expectation. Despite this, we find that increasing the bond dimension does not significantly change these estimates.}
    \label{fig:tn_sampling_rate}
\end{figure}

\begin{figure*}
    \centering
    \includegraphics[width=1\linewidth]{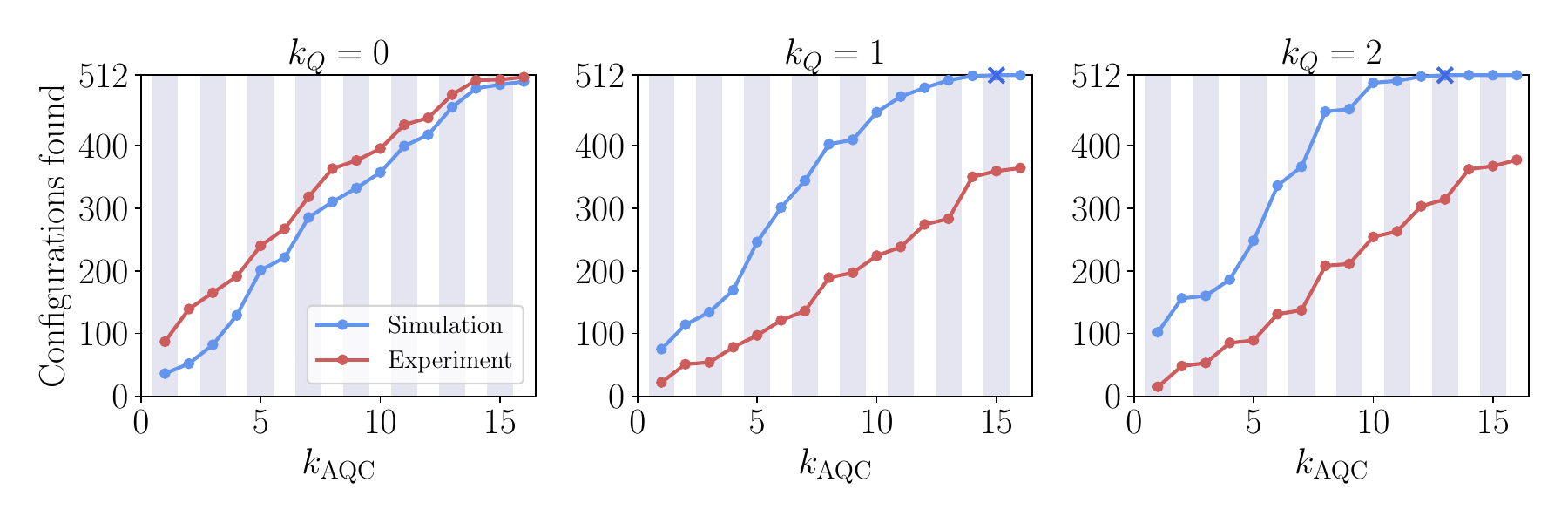}
\caption{Configurations found in experiment vs. expectation from tensor network simulations, broken down by $k_Q$. These results are in complement to Fig.~\ref{fig:results_configurations}, which presents the same data set cumulatively across all 49 time‑evolution circuits. Blue crosses indicate expected convergence. In total, 97 million shots were collected across the various circuits and were non‑uniformly distributed according to Fig.~\ref{fig:shot_allocation}. We fail to find all support configurations when sampling only the AQC circuits $k_{\textrm{AQC}}\leq 16$ without additional Trotter steps (i.e., $k_Q=0$), consistent with expectations from simulation. In contrast, simulations predict that the subsets containing both $k_Q=1$ and $k_Q=2$ are sufficient to observe all support configurations in the absence of noise. However, these latter circuits are comparatively deeper (see Fig.~\ref{tab:2qdepth}), and in practice we find that combining all 49 circuits provides the most effective strategy for convergence.}
    \label{fig:tn_kQ_j}
\end{figure*}

\paragraph{As a classical adversary.} 
As shown in Fig.~\ref{fig:tn_dt_sweep}, we find that tensor network simulation of the pre‑AQC circuits yields Krylov basis states of sufficient quality that all support configurations are discoverable with modest sampling overhead, even when the global fidelity is low. We estimate that roughly $10^{5}$ samples from each of the first $20$ basis states would suffice to recover all support configurations (see Fig.~\ref{fig:tn_sampling_rate}), and observe sampling times on the order of one second per sample on a single CPU, consistent with Ref.~\cite{rudolphSimulatingSamplingQuantum2025}. We expect that this cost could be reduced through further optimization, parallelization, and the use of GPUs. As an alternative strategy, one can leverage direct access to the intermediate AQC target states generated during the compilation workflow and sample from them using MPS methods. Drawing $10^{5}$ samples from $|\psi_k^{(\chi=256)}\rangle$ for each $k \leq 20$ reliably yields all support configurations.

\end{document}